\newif\ifabstract
\newif\iffull
\newcommand{\myparskip}{3pt}
\par\vspace{4mm}}
\newcommand{\qed}{\hfill\vbox{\hrule height.2pt\hbox{\vrule width.2pt height5pt \kern5pt
\vrule width.2pt} \hrule height.2pt}}
\newcommand{\connect}{\leadsto}
\newcommand{\sconnect}{\overset{\mbox{\tiny{1:1}}}{\leadsto}}
\newcommand{\paths}[4]{\ensuremath{#1: #2 \leadsto_{#4} #3}}
\newcommand{\flow}[4]{\ensuremath{#1: #2 \leadsto_{#4} #3}}
\newcommand{\tH}{\tilde{H}}
\newcommand{\optLP}{\mathsf{OPT}_{\mathsf{LP}}}
\newcommand{\alphawl}{\ensuremath{\alpha_{\mbox{\tiny{\sc WL}}}}}
\newcommand{\alphaWL}{\alphawl}
\newcommand{\algsc}{\ensuremath{{\mathcal{A}}_{\mbox{\textup{\scriptsize{ARV}}}}}\xspace}
\newcommand{\algKRV}{\ensuremath{{\mathcal{A}}_{\mbox{\textup{\scriptsize{KRV}}}}}\xspace}
\newcommand{\alphasc}{\ensuremath{\alpha_{\mbox{\tiny{\sc ARV}}}}}
\newcommand{\gkrv}{\ensuremath{\gamma_{\mbox{\tiny{\sc KRV}}}}}
\newcommand{\event}{{\cal{E}}}
\newcommand{\NP}{\mbox{\sf NP}\xspace}
\newcommand{\opt}{\mathsf{OPT}}
\newcommand{\te}{\tilde e}
\newcommand{\set}[1]{\left\{ #1 \right\}}
\newcommand{\sse}{\subseteq}
\newcommand{\tset}{{\mathcal T}}
\newcommand{\eset}{{\mathcal E}}
\newcommand{\pset}{{\mathcal{P}}}
\newcommand{\qset}{{\mathcal{Q}}}
\newcommand{\aset}{{\mathcal{A}}}
\newcommand{\cset}{{\mathcal{C}}}
\newcommand{\fset}{{\mathcal{F}}}
\newcommand{\mset}{{\mathcal M}}
\newcommand{\wset}{{\mathcal{W}}}
\newcommand{\gset}{{\mathcal{G}}}
\newcommand{\tV}{\tilde{V}}
\newcommand{\nots}{\overline S}
\newcommand{\be}{\begin{enumerate}}
\newcommand{\ee}{\end{enumerate}}
\newcommand{\bd}{\begin{description}}
\newcommand{\ed}{\end{description}}
\newcommand{\bi}{\begin{itemize}}
\newcommand{\ei}{\end{itemize}}
\newtheorem{theorem}{Theorem}
\newtheorem{observation}{Observation}
\newtheorem{corollary}{Corollary}
\newtheorem{claim}{Claim}
\newtheorem{definition}{Definition}
\newenvironment{proof}{\par \smallskip{\bf Proof:}}{\hfill\stopproof}
\def\stopproof{\square}
\def\square{\vbox{\hrule height.2pt\hbox{\vrule width.2pt height5pt \kern5pt
\vrule width.2pt} \hrule height.2pt}}
\renewcommand{\phi}{\varphi}
\newcommand{\eps}{\epsilon}
\newcommand{\half}{\ensuremath{\frac{1}{2}}}
\newcommand{\poly}{\operatorname{poly}}
\newcommand{\reals}{{\mathbb R}}
\newcommand{\prob}[2][]{\text{\bf Pr}_{#1}\left [#2\right]}
\newcommand{\EDP}{\mbox{\sf EDP}\xspace}
\newcommand{\EDPwC}{\mbox{\sf EDPwC}\xspace}
\newcommand{\dmax}{d_{\mbox{\textup{\footnotesize{max}}}}}
\newcommand{\out}{\operatorname{out}}
\begin{document}

\title{Routing in Undirected Graphs with Constant Congestion}
\author{Julia Chuzhoy\thanks{Toyota Technological Institute, Chicago, IL
60637. Email: {\tt cjulia@ttic.edu}. Supported in part by NSF CAREER grant CCF-0844872 and Sloan Research Fellowship.}}

\begin{titlepage}
\maketitle

\thispagestyle{empty}
\begin{abstract}
Given an undirected graph $G=(V,E)$, a collection $(s_1,t_1),\ldots,(s_k,t_k)$ of $k$ source-sink pairs, and an integer $c$, the goal in the Edge Disjoint Paths with Congestion problem is to connect  maximum possible number of the source-sink pairs by paths, so that the maximum load on any edge (called edge congestion) does not exceed $c$.

We show an efficient randomized algorithm to route $\Omega(\opt/\poly\log k)$ source-sink pairs with congestion at most 14, where $\opt$ is the maximum number of pairs that can be simultaneously routed on edge-disjoint paths.
The best previous algorithm that routed $\Omega(\opt/\poly\log n)$ pairs  required congestion $\poly(\log \log n)$, and for the setting where the maximum allowed congestion is bounded by a constant $c$, the best previous algorithms could only guarantee the routing of $\opt/n^{O(1/c)}$ pairs.
\end{abstract}
\end{titlepage}


\section{Introduction}
We study network routing problems in undirected graphs. In such problems, we are given an undirected $n$-vertex graph $G=(V,E)$, and a collection $\mset=\set{(s_1,t_1),(s_2,t_2),\ldots,(s_k,t_k)}$ of $k$ \emph{source-sink pairs}, that we also refer to as \emph{demand pairs}. In order to route a pair $(s_i,t_i)$, we need to select a path connecting $s_i$ to $t_i$ in graph $G$. Given a routing of any subset of the demand pairs, its \emph{congestion} is the maximum load on any edge, that is, the maximum number of paths containing the same edge. In general, we would like to route as many demand pairs as possible, while minimizing the edge congestion. These two conflicting objectives naturally give rise to a number of basic optimization problems. 

One of the central routing problems is Edge Disjoint Paths (\EDP), where the goal is to route the maximum number of demand pairs on edge-disjoint paths (that is, with congestion $1$). Robertson and Seymour~\cite{RobertsonS} have shown an efficient algorithm to solve this problem, when the number $k$ of the demand pairs is bounded by a constant. However, for general values of $k$, it is NP-hard to even decide whether all pairs can be simultaneously routed on edge-disjoint paths~\cite{Karp}. The best currently known approximation algorithm for the problem, due to Chekuri, Khanna and Shepherd~\cite{EDP-alg}, achieves an $O(\sqrt{n})$-approximation factor, while the best current hardness of approximation is $\Omega(\log^{1/2-\epsilon}n)$ for any constant $\eps$, unless \NP has randomized quasi-polynomial time algorithms~\cite{AZ-undir-EDP,ACGKTZ}. We note that the standard multicommodity flow LP relaxation for \EDP, which is commonly used in approximation algorithms for network routing problems, has an integrality gap of $\Omega(\sqrt n)$~\cite{EDP-alg}. Interestingly, Rao and Zhou~\cite{RaoZhou} have shown a factor $\poly\log n$-approximation for \EDP on graphs where the value of the global minimum cut is $\Omega(\log^5n)$, by rounding the same LP relaxation.

On the other extreme is the Congestion Minimization problem, where we are required to route all source-sink pairs, while minimizing the edge congestion. The classical randomized rounding technique of Raghavan and Thompson~\cite{RaghavanT} gives the best currently known approximation algorithm for this problem, whose approximation factor is $O(\log n/\log\log n)$. On the negative side, Andrews and Zhang~\cite{AZ-undir-cong} show that the problem is hard to approximate to within a factor of $\Omega\left(\frac{\log\log n}{\log\log \log n}\right )$ unless \NP has randomized quasi-polynomial time algorithms.

A problem that lies between these two extremes, and is a natural framework for studying the tradeoff between the number of pairs routed and the edge congestion is the Edge Disjoint Paths with Congestion problem (\EDPwC). We say that an algorithm $\aset$ achieves a factor $\alpha$-approximation with congestion $c$ for \EDPwC, iff it routes at least $\opt/\alpha$ of the source-sink pairs, and the congestion of this routing is bounded by $c$, where $\opt$ is the maximum number of demand pairs that can be simultaneously routed on edge-disjoint paths. In particular, a very interesting question is whether, by slightly relaxing the conditions of the \EDP problem, namely allowing a small edge congestion, we can significantly increase the number of pairs routed. 

When the  congestion $c$ is allowed to be as high as $\Omega(\log n/\log\log n)$, the randomized rounding algorithm of Raghavan and Thompson~\cite{RaghavanT} gives a constant factor approximation for \EDPwC. For smaller values of $c$, until recently, only $O(n^{1/c})$-approximation algorithms have been known~\cite{AzarR, BavejaS, KolliopoulosS}. In a recent breakthrough, Andrews~\cite{Andrews} has shown a randomized algorithm to route $\Omega\left (\frac{\opt}{\log^{61}n}\right )$ pairs with congestion $O((\log\log n)^6)$. In another recent result, Kawarabayashi and Kobayashi~\cite{KawarabayashiK} have shown an algorithm that routes $\Omega\left (\frac{\opt}{n^{3/7}}\right )$ pairs with congestion $2$, thus improving the best previously known $O(\sqrt n)$-approximation for $c=2$. 

In this paper we show an efficient randomized algorithm, that routes $\Omega\left(\frac{\opt}{\log^{23.5}k\log\log k}\right )$ demand pairs with congestion at most $14$. We note that on the negative side, Andrews et al.~\cite{ACGKTZ} have shown that for any constant $\eps$, for any $1\leq c\leq O\left(\frac{\log \log n}{\log\log\log n}\right )$, there is no $O\left ((\log n)^{\frac{1-\eps}{c+1}}\right)$-approximation algorithm for \EDPwC with congestion $c$, unless \NP has randomized quasi-polynomial time algorithms. Therefore, the best approximation factor one may hope to achieve for \EDPwC in the setting where the maximum allowed congestion is bounded by a constant is polylogarithmic.

\paragraph{Other related results}
\EDP and its variants have been studied extensively, and better approximation algorithms are known for several special cases. Some examples include planar graphs~\cite{Frank-planar, KT-planar,Kleinberg-planar,CKS, CKS-planar,Kplanar}, trees~\cite{trees2,trees1}, and expander graphs~\cite{LR,BFU,BFSU,KleinbergR,Frieze}. 

We note that routing problems are somewhat better understood in directed graphs. The EDP problem has $\tilde{O}\left(\min\set{n^{2/3},\sqrt m}\right )$-approximation algorithms in directed graphs, where $m$ is the number of graph edges~\cite{dir-EDP1,dir-EDP2,dir-EDP3}, and it is hard to approximate to within a factor of $\Omega\left (m^{1/2-\epsilon}\right)$ for any constant $\eps$~\cite{dir-EDP-hardness}. The randomized rounding technique of Raghavan and Thompson~\cite{RaghavanT} gives an $O(\log n/\log\log n)$-approximation for directed Congestion Minimization, and the problem is hard to approximate to within a factor of $\Omega(\log n/\log\log n)$~\cite{AZ-dir-cong,CGKT}, unless \NP has randomized quasi-polynomial time algorithms. As for \EDPwC,  the randomized rounding technique gives an $O(cn^{1/c})$-approximation~\cite{KolliopoulosS,Srinivasan} for any congestion bound $c$, and for any $1\leq c\leq O\left (\frac{\log n}{\log\log n}\right )$, there is no $n^{\Omega(1/c)}$-approximation algorithm for the problem unless \NP has randomized quasi-polynomial time algorithms~\cite{CGKT}.

\paragraph{Our results and techniques}
Our main result is summarized in the following theorem.

\begin{theorem}\label{thm: main}
There is a randomized polynomial-time algorithm, that, given a graph $G$ and a set $\mset=\set{(s_1,t_1),\ldots,(s_k,t_k)}$ of $k$ demand pairs, w.h.p. finds a collection $\pset$ of paths, connecting $\Omega\left (\frac{\opt}{\log^{23.5}k\log\log k}\right )$ of the demand pairs with congestion at most $14$, where $\opt$ is the maximum number of the demand pairs that can be simultaneously routed on edge-disjoint paths in $G$.
\end{theorem}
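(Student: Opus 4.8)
The plan follows the by-now-standard two-phase template for \EDPwC — first reduce to a well-linked instance, then construct and route through a ``crossbar'' — but executed so that the final congestion is a fixed constant rather than a growing function of $n$.

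\textbf{Phase 1: well-linked decomposition.} First I would apply the standard well-linked decomposition of Chekuri, Khanna and Shepherd, which repeatedly removes balanced sparse cuts using the $\alphasc = O(\sqrt{\log n})$ approximation for sparsest cut. This partitions $G$ into subgraphs $G_1,\dots,G_r$ with pairwise edge-disjoint edge sets, each carrying a subset $\mset_i$ of the demand pairs, so that in each $G_i$ the terminals participating in $\mset_i$ form an $\alphawl$-well-linked set and $\sum_i|\mset_i| = \Omega(\opt/\poly\log k)$. Since the $G_i$ have disjoint edges, routings of congestion $14$ obtained separately in the $G_i$ combine to a routing of congestion $14$ in $G$; hence it suffices to route $\Omega(|\mset_i|/\poly\log k)$ pairs with congestion $14$ in a single well-linked instance, which I rename $(G,\mset)$, with, say, $2N$ well-linked terminals.

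\textbf{Phase 2: crossbar construction and routing.} The heart of the argument is to build inside the well-linked instance a \emph{crossbar}: disjoint vertex sets $A,B$ of size $N'=\Omega(N/\poly\log k)$, together with an embedding into $G$, of constant congestion, of (i) a system of vertex-disjoint paths routing $\Omega(N')$ of the demand pairs onto distinct vertices of $A\cup B$ (one endpoint to $A$, the other to $B$), and (ii) a structure through which \emph{every} partial matching between $A$ and $B$ can be routed with constant congestion. To construct (ii) I would: (a) use well-linkedness to route the terminals, with congestion $2$, onto a ``grouped'' set $\Gamma$ that is still well-linked and whose size is within a $\poly\log$ factor of $N$; (b) iteratively maintain a collection of well-linked \emph{clusters} and merge them, where a merge embeds the matchings produced by the Khandekar--Rao--Vazirani cut-matching game played on the \emph{cluster-contracted} graph — crucially, the internal well-linkedness of a cluster routes these matchings essentially for free, and the construction is arranged so that only $O(1)$ merges touch any given edge; (c) whenever the cut player of the game exhibits a balanced sparse cut (certified via $\alphasc$) rather than a matching, recurse on the two sides, discarding a small fraction of terminals, with a potential/progress argument bounding the recursion. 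Given the crossbar, sample a partial matching between $A$ and $B$ that, concatenated with the path system of (i), connects $\Omega(N'/\poly\log k)$ demand pairs; route it using (ii); the concatenation of these $O(1)$-congestion path systems yields total congestion at most $14$.

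\textbf{Accounting and the main obstacle.} Multiplying the $\poly\log k$ losses — the $O(\sqrt{\log k})$ factors from uses of $\alphasc$, the $O(\log^2 k)$-type losses from well-linked decomposition and from the $O(\log^2 k)$-round cut-matching game, the grouping losses, the per-level losses summed over the recursion, and a residual $\log\log k$ factor — yields $\Omega(\opt/(\log^{23.5}k\log\log k))$. The real difficulty, and the reason this beats the earlier $\poly(\log\log n)$-congestion results, is the congestion control in step (b): the cut-matching game intrinsically wants one unit of congestion per round, i.e. $\Theta(\log^2 k)$ in total, so the whole point is to reorganize the construction so that each of those $\Theta(\log^2 k)$ matchings lives inside a well-linked cluster used by only $O(1)$ other matchings, capping the per-edge load by a constant; making this rigorous — and ensuring in particular that the recursion on balanced sparse cuts blows up neither the congestion nor the fraction of surviving terminals — is the technical crux.
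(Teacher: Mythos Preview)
Your two-phase outline and your diagnosis of the difficulty are both correct: Phase~1 is exactly the Chekuri--Khanna--Shepherd reduction the paper uses (in fact the pieces are vertex-disjoint induced subgraphs, not just edge-disjoint), and you correctly pinpoint that the whole game is to prevent the $\gamma=\Theta(\log^2 k)$ rounds of the cut-matching game from contributing $\Theta(\log^2 k)$ congestion. Your last paragraph even states the right target: arrange that ``each of those $\Theta(\log^2 k)$ matchings lives inside a well-linked cluster used by only $O(1)$ other matchings.''

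The gap is that your step~(b) does not actually supply a mechanism for this. ``Maintain well-linked clusters and merge them, playing KRV on the cluster-contracted graph'' does not by itself bound the congestion: if you play all $\gamma$ rounds on the same contracted graph, the inter-cluster edges still carry $\gamma$ matchings, and the internal well-linkedness of the clusters only helps with the portions of paths \emph{inside} clusters. Your step~(c) is also garbled --- in the cut-matching game the cut player always outputs a bipartition and the matching player always outputs a matching; there is no branch where ``the cut player exhibits a sparse cut rather than a matching.'' What the paper actually does is orthogonal to cluster-merging: it finds, up front, a family $\fset=\{S_1,\dots,S_\gamma\}$ of $\gamma$ pairwise \emph{disjoint} vertex subsets of $G$, each $S_j$ being $\alphawl$-well-linked for a designated set $\Gamma_j\subseteq\out(S_j)$ of $k_1$ edges that can in turn be linked to the terminals with bounded congestion. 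Round $j$ of KRV is then routed entirely inside $G[S_j]$, so the rounds are edge-disjoint by construction. The vertices of the expander are represented not by single terminals but by trees $T_1,\dots,T_{k'}$ in $G$, tree $T_i$ containing one special edge from each $\Gamma_j$; these trees are built with constant congestion using edge splitting-off (Mader/Frank/Jackson) and a degree-$3$ spanning tree on an auxiliary graph over $\{s_1,\dots,s_\gamma\}$. Vertex-disjoint routing in the resulting expander then pulls back to congestion $\le 14$ in $G$.

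The piece of technology you are missing entirely is how $\fset$ is found. The paper maintains a \emph{legal contracted graph} $G'$ (every super-node is a small, $\alphawl$-well-linked cluster) and, in each iteration, randomly partitions $V(G')\setminus\tset$ into $\gamma$ parts $X_1,\dots,X_\gamma$; a Chernoff/Hajnal--Szemer\'edi argument shows each $X_j$ has many internal edges relative to its boundary. From each $X_j$ one then tries to extract a ``good'' subset $S_j$ via an oracle-driven $(k,\alpha)$-well-linked decomposition; either this succeeds for all $j$, giving $\fset$, or some step fails and yields a strictly smaller legal contracted graph, so the process terminates. None of this is hinted at by ``merge clusters so that only $O(1)$ merges touch any given edge''; that phrase names the desired conclusion, not a method.
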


Our algorithm in fact routes $\Omega\left (\frac{\optLP}{\log^{23.5}k\log\log k}\right )$ demand pairs, where $\optLP$ is the value of the optimal solution to the standard multicommodity flow linear programming relaxation for the problem. Since the integrality gap of this LP relaxation is $\Omega(\sqrt{n})$ for EDP (that is, when no congestion is allowed), our result shows that the integrality gap improves from polynomial to polylogarithmic if we allow a congestion of $14$.  

We now give an overview of our techniques and compare them to previous work. One of the basic notions used throughout the algorithm is that of well-linkedness. Informally, given a graph $G=(V,E)$ with a set $\tset\sse V$ of vertices called terminals, we say that $G=(V,E)$ is well-linked for $\tset$, iff any set $D$ of demands on the terminals (where each terminal participates in at most one demand pair) can be routed with low congestion in $G$ (we make this definition precise in the following sections). For any subset $S\sse V$ of vertices, let $\out(S)$ be the subset of edges with exactly one endpoint in $S$. Given a subset $S\sse V$ of vertices of $G$, we say that $S$ is well-linked, iff it is well-linked for the set $\out(S)$ of edges (in order to obtain the standard definition of well-linkedness, subdivide each edge $e\in \out(S)$ with a terminal $t_e$, and consider the sub-graph of the resulting graph induced by $S\cup\tset$ with the set $\tset$ of terminals, where $\tset=\set{t_e\mid e\in \out(S)}$).
The starting point of our algorithm is the same as in the work of~\cite{CKS,RaoZhou,Andrews}. We start with the standard LP-relaxation for the \EDP problem on graph $G$, and we compute a partition of $G$ into disjoint induced sub-graphs $G_1,\ldots,G_r$. For each $1\leq i\leq r$, we compute a subset $\mset_i\sse\mset$ of demand pairs that are contained in $G_i$, such that the graph $G_i$ is well-linked for the corresponding set $\tset_i$ of terminals, containing all vertices that participate in the pairs in $\mset_i$, and moreover, $\sum_{i=1}^r|\mset_i|\geq \Omega\left (\frac{|\mset|}{\log^2k}\right )$. An algorithm for efficiently computing such a decomposition was shown by Chekuri, Khanna and Shepherd~\cite{CKS}. From now on, it is enough to find a good routing in each resulting sub-instance $G_i$ separately. To simplify notation, let $G$ denote any such sub-instance $G_i$, let $\mset$ denote the set $\mset_i$ of demand pairs, and let $\tset$ denote the corresponding set $\tset_i$ of terminals. Since graph $G$ is well-linked for $\tset$, it has good expansion properties with respect to $\tset$. However, graph $G$ may be far from being an expander, since it may contain many vertices besides the terminals. Intuitively, a natural approach is to embed an expander $X$, whose vertex set is $\tset$, into the graph $G$. Each edge $e=(t_i,t_j)$ of the expander is mapped to a path $P_e$ connecting $t_i$ to $t_j$ in $G$, and the congestion of the embedding is the maximum, over all edges $e'\in E(G)$, of the number of paths in $\set{P_e\mid e\in E(X)}$, containing $e'$. If we could find a low-congestion embedding of an expander $X$ into $G$, then we could use existing algorithms for routing on expanders to find a low-congestion routing of a polylogarithmic fraction of the demand pairs in $X$, which in turn would give us a low-congestion routing of the same demand pairs in $G$. This is the approach that has been used by Rao and Zhou~\cite{RaoZhou} and by Andrews~\cite{Andrews}. A very useful tool in embedding an expander into any well-linked graph is the cut-matching game of Khandekar, Rao and Vazirani~\cite{KRV}. In this game, we have two players: a cut player and a matching player. The cut player wants to construct an expander $X$, while the matching player tries to delay its construction. We start with $X$ containing only the set $V(X)$ of $2N$ vertices and no edges. In each iteration $i$, the cut player computes a partition $(A_i,B_i)$ of $V(X)$ with $|A_i|=|B_i|=N$, and the matching player computes a matching $M_i$ between $A_i$ and $B_i$. The edges of $M_i$ are then added to $X$. Khandekar, Rao and Vazirani~\cite{KRV} have shown that no matter what the matching player does, there is a strategy for the cut player (that we denote by $\algKRV$), such that after $O(\log^2N)$ iterations, $X$ becomes an expander. A natural strategy for constructing an expander $X$ and embedding it into graph $G$ using the cut-matching game, is the following. We will use the algorithm $\algKRV$ for the cut player, while the matching player will be simulated by finding appropriate flows in graph $G$. Specifically, we let $V(X)=\tset$ be the set of vertices of $X$. If $(A_i,B_i)$ is the bi-partition of $V(X)$ computed by the cut player, then we can try to send $|A_i|=|B_i|$ flow units from the terminals of $A_i$ to the terminals of $B_i$ in graph $G$, and use the resulting flow to define the matching $M_i$. This procedure can be used to both construct the expander $X$, and  embed it into the graph $G$. In fact, Khandekar, Rao and Vazirani use precisely this procedure in their algorithm for the sparsest cut problem.

One problem with this approach is that we need to compute $\Theta(\log^2k)$ different flows in graph $G$, and together they may cause a poly-logarithmic congestion. Moreover, the partitions that the cut player computes depend on the matchings returned by the matching player in previous iterations, so we cannot attempt to route all these flows simultaneously in graph $G$ with low congestion. Rao and Zhou~\cite{RaoZhou} have proposed the following approach to overcome this difficulty. Let $\gamma=\Theta(\log^2k)$ be the number iterations in the  algorithm of~\cite{KRV}. We can build $\gamma$ graphs $G_1,\ldots,G_{\gamma}$, where for each $1\leq i\leq \gamma$, $V(G_i)=V(G)$, and the sets $E(G_1),\ldots,E(G_{\gamma})$ of edges form a partition of the edges in $E(G)$. If we can construct the family $G_1,\ldots,G_{\gamma}$ of graphs in such a way that each graph $G_i$ is still well-linked for the terminals, then we can now construct the expander $X$ and embed it into $G$ by using the cut-matching game of \cite{KRV}, where for each $1\leq i\leq \gamma$, in order to construct the matching $M_i$, we find a flow connecting vertices of $A_i$ to vertices of $B_i$ in the graph $G_i$. Since the edges in each set $M_i$ are embedded into distinct graphs $G_i$, the congestion does not accumulate, and we obtain a good embedding of $X$ into $G$. In order to construct the graphs $G_i$, Rao and Zhou use a random procedure where each edge $e\in E$ is added to one of the graphs $G_i$ uniformly at random. However, in order to ensure that each resulting graph $G_i$ is w.h.p. well-linked 
for the terminals, the initial graph $G$ must have a large global minimum cut, that is, the value of the global minimum cut in $G$ must be at least poly-logarithmic. In order to overcome this difficulty, Andrews~\cite{Andrews} uses Raecke's tree decomposition technique~\cite{Raecke}.
Roughly speaking, he decomposes the graph $G$ into a collection $\cset$ of disjoint clusters, where each cluster $C\in \cset$ is well-linked. Moreover, if $H$ is the graph obtained from $G$ by contracting each cluster $C\in \cset$ into a single vertex, then $H$ is both well-linked for the terminals, and the value of the global minimum cut in $H$ is high enough, so we can use the algorithm of Rao and Zhou to complete the routing.

Our algorithm uses a slightly different notion of embedding one graph into another. Specifically, we still construct an expander $X$ on a subset $\tset'\sse \tset$ of terminals, but we embed it into the graph $G$ differently. Each vertex $t\in V(X)$ is represented by a connected component $C_t$ in graph $G$, that contains the terminal $t$. Each edge $e=(t,t')\in E(X)$ is represented by a path $P_e$ connecting some vertex $v\in C_t$ to some vertex $v'\in C_{t'}$ in graph $G$. Moreover, we ensure that each edge $e'\in E(G)$ only participates in a constant number of the connected components $\set{C_t}_{t\in V(X)}$ and paths $\set{P_e}_{e\in E(X)}$. Once we find such an embedding, we use {\bf vertex-disjoint} routing in the expander $X$, that gives a low edge-congestion routing in the original graph $G$. Since we construct the expander $X$ using the procedure of~\cite{KRV}, the degree of every vertex in $X$ is bounded by $O(\log^2k)$, and so a good routing on vertex-disjoint paths can be found in  $X$ using standard algorithms for routing in expanders.

A major point of our departure from previous work is how the expander $X$ is constructed and embedded into $G$. A central notion in our algorithm is that of a good family of vertex sets. Let $k'=k/\poly\log k$ be some parameter, where $k=|\mset|$ is the number of the demand pairs. We say that a subset $S\sse G$ of vertices is a \emph{good subset}, iff there is a collection $\Gamma\sse \out(S)$ of $k'$ edges, such that $G[S]\cup \Gamma$ is well-linked for $\Gamma$, and moreover the edges in $\Gamma$ can send $|\Gamma|$ flow units to the terminals in $\tset$ with low edge-congestion in graph $G$. A family $\fset$ of vertex subsets is a \emph{good family} iff it contains $\gamma$ mutually disjoint good vertex subsets $S_1,\ldots,S_{\gamma}$, where $\gamma=O(\log^2k)$ is the parameter from the cut-matching game of~\cite{KRV}.

Suppose we have found a good family $\fset=\set{S_1,\ldots,S_{\gamma}}$ of vertex subsets. For each $1\leq j\leq \gamma$, let $\Gamma_j\sse \out(S_j)$ be the corresponding subset $\Gamma$ of edges. In order to construct the expander $X$, we select a subset $\tset'=\set{t_1,\ldots,t_{k'}}\sse \tset$ of $k'$ terminals, and  we let $V(X)=\tset'$. For each $1\leq i\leq k'$, we then construct a connected component $C_i$ in graph $G$, that contains, for each $1\leq j\leq \gamma$, a distinct edge $e_{i,j}\in \Gamma_j$, and also contains the terminal $t_i$. For each $1\leq j\leq \gamma$, the edges $e_{1,j},\ldots,e_{k',j}$ are all distinct, and we view the edge $e_{i,j}$ as the copy of terminal $t_i$ for the set $S_j$. We also ensure that each edge of graph $G$ only participates in a constant number of components $\set{C_i}_{i=1}^{k'}$. For each $i$, $C_i$ is viewed as representing the vertex $t_i$ of $X$ in graph $G$. In order to construct the expander $X$, we use the cut-matching game of~\cite{KRV}, where in each iteration $1\leq j\leq \gamma$, we use the sub-graph $G[S_j]$ to route some matching $M_j$ between the copies of the terminals in $A_j$ and the terminals in $B_j$ for set $S_j$. This ensures that the congestion does not accumulate across different iterations.

Finally, we show an efficient algorithm for finding a good family $\fset$ of vertex subsets. Throughout the algorithm, we maintain a \emph{contracted graph} $G'$. This graph is uniquely defined by a collection $\cset$ of disjoint vertex subsets of $V(G)$. Given the set $\cset$ of clusters, graph $G'$ is obtained from $G$ by contracting every cluster $C\in \cset$ into a vertex $v_C$. We say that $G'$ is a \emph{legal contracted graph} for $G$ iff for each $C\in \cset$, $|\out(C)|\leq k'$, cluster $C$ is well-linked for $\out(C)$, and it does not contain any terminals. We show a randomized algorithm, that, given a legal contracted graph $G'$, w.h.p. either finds a good family $\fset$ of vertex subsets in the original graph $G$, or returns a new legal contracted graph $G''$ with $|E(G'')|<|E(G')|$. Therefore, after at most $|E(G)|$ such iterations, our algorithm is guaranteed to return a good family of vertex subsets w.h.p. Each such iteration is executed as follows. Given a current contracted graph $G'$, we find a random partition $V_1,\ldots,V_{\gamma}$ of its vertices. For each $1\leq j\leq \gamma$, we then try to recover a good vertex subset $S_j\sse V(G)$ from the set $V_j\sse V(G')$ of vertices. If we succeed to do so for all $1\leq j\leq \gamma$, then we have found a good family of vertex subsets. Otherwise, if we fail to recover a good vertex subset for some $1\leq j\leq \gamma$, then we find a new legal contracted graph $G''$ that contains strictly fewer edges than $G'$. The heart of this algorithm is a (somewhat non-standard) well-linked decomposition procedure, that is applied to each set $V_j$ in turn, and whose result is either a good subset $S_j$ of vertices, or a new contracted graph $G''$.

\section{Preliminaries and Notation}
\label{--------------------------------sec: prelims--------------------------------------}
\label{sec: Prelims}
\paragraph{Problem Definition}
We are given an undirected graph $G=(V,E)$, and a set $\mset=\set{(s_1,t_1),\ldots,(s_k,t_k)}$ of $k$ source-sink pairs, that we also refer to as demand pairs. We denote by $\tset$ the set of vertices that participate in pairs in $\mset$, and we call them \emph{terminals}. Let $\opt$ denote the maximum number of demand pairs that can be simultaneously routed via edge-disjoint paths. Our goal is to find a collection of $\Omega(\opt/\poly\log k)$ paths connecting distinct source-sink pairs, with congestion at most $14$.

We assume w.l.o.g. that each terminal in $\tset$ participates in exactly one source-sink pair. Otherwise, if a terminal $v\in \tset$ participates in $r>1$ source-sink pairs, we can add $r$ new terminals $t_1(v),\ldots,t_r(v)$, connect each of them to $v$ with an edge, and use a distinct terminal in $\set{t_1(v),\ldots,t_r(v)}$ for each source-sink pair in which $v$ participates.

We also assume w.l.o.g. that the maximum vertex degree in $G$ is $4$, and that the degree of every terminal is $1$. In order to achieve this, we perform the following simple transformation to graph $G$. First, if $v$ is a terminal, whose degree is greater than $1$, then we add a new vertex $u$ to graph $G$ that connects to $v$ with an edge, and becomes a terminal instead of $v$. Next, we process the non-terminal vertices one-by-one. Let $v$ be any such vertex, and assume that the degree of $v$ is $d>4$. Let $u_1,\ldots,u_d$ be the vertices that are neighbors of $v$. We replace $v$ with a $d\times d$ grid $Z_v$, and denote by $u_1',\ldots,u_d'$ the vertices in the first row of $Z_v$. For each $1\leq i\leq d$, we add an edge $(u_i,u'_i)$. It is easy to verify that any solution to the EDP problem in the original graph can be transformed into a feasible routing of the same value and no congestion in the new graph, and any routing in the new graph with congestion $\eta$ can be transformed into a routing in the original graph with the same congestion. Therefore, we assume from now on that the maximum vertex degree in $G$ is $4$, the degree of every terminal is $1$, and every terminal participates in one source-sink pair. 

\paragraph{General Notation}
For a graph $G=(V,E)$, and subsets $V'\sse V$, $E'\sse E$ of its vertices and edges respectively, we denote by $G[V']$, $G\setminus V'$, and $G\setminus E'$ the sub-graphs of $G$ induced by $V'$, $V\setminus V'$, and $E\setminus E'$, respectively.
For any subset $S\sse V$ of vertices, we denote by $\out_G(S)=E_G(S,V\setminus S)$ the subset of edges with  one endpoint in $S$ and the other endpoint in $V\setminus S$, and by $E_G(S)$ the subset of edges with both endpoints in $S$. When clear from context, we omit the subscript $G$. Throughout the paper, we say that a random event succeeds w.h.p., if the probability of success is $(1-1/\poly(n))$, where $n$ is the number of vertices in the input graph.

Let $\pset$ be any collection of paths in graph $G$. We say that paths in $\pset$ cause congestion $\eta$ in $G$, iff the maximum number of paths in $\pset$ containing any edge is $\eta$. In other words, if $\pset_e\sse \pset$ is the subset of paths that contain the edge $e\in E(G)$, then $\max_{e\in E(G)}\set{|\pset_e|}=\eta$.

\begin{definition}
Assume that we are given a subset $S\sse V$ of vertices and a subset $E'\sse E$ of edges of $G$. We say that a collection $\pset$ of paths connects the vertices of $S$ to the edges of $E'$ with congestion $\eta$, and denote $\paths{\pset}{S}{E'}{\eta}$, iff $\pset=\set{P_v\mid v\in S}$, where path $P_v$ has $v$ as its first vertex and some edge of $E'$ as its last edge, and $\pset$ causes congestion at most $\eta$ in $G$. In particular, each edge in $E'$ serves as the last edge on at most $\eta$ paths in $\pset$. Similarly, given two subsets $S,S'$ of vertices, if $\pset$ is a collection of paths, connecting every vertex in $S$ to some vertex in $S'$ with overall congestion at most $\eta$, then we denote this by $\paths{\pset}{S}{S'}{\eta}$. Finally, if $|S|=|S'|=|\pset|$, and each path in $\pset$ connects a distinct vertex of $S$ to a distinct vertex of $S'$, then we denote this by $\pset:S\sconnect_{\eta} S'$
\end{definition}

Similarly to the above definition, we say that a flow $F$ connects the vertices of $S$ to the edges of $E'$ with congestion $\eta$, and denote  $\flow{F}{S}{E'}{\eta}$, iff each vertex $v\in S$ sends one flow unit to the edges in $E'$, and the flow $F$ causes congestion at most $\eta$ in $G$. Notice that each flow-path in $F$ starts at a vertex of $S$ and terminates at some edge $e\in E'$. We view edge $e$ as part of the flow-path, so in particular each edge in $E'$ receives at most $\eta$ flow units. Notice that from the integrality of flow, there is a flow $\flow{F}{S}{E'}{\eta}$ iff there is a collection  $\paths{\pset}{S}{E'}{\eta}$ of paths.

Given a graph $G=(V,E)$, and a subset $\tset\sse V$ of terminals, a set $D$ of demands is a function $D:\tset\times \tset\rightarrow \reals^+$, that specifies, for each pair $t,t'\in \tset$ of terminals, a demand $D_{t,t'}$. For simplicity, we assume that the pairs $t,t'$ of terminals are unordered, that is $D_{t,t'}=D_{t',t}$ for all $t,t'\in\tset$.
We say that the set $D$ of demands is $\gamma$-restricted, iff for each $t\in \tset$, the total demand $\sum_{t'\in \tset}D_{t,t'}\leq \gamma$.
We say that the set $D$ of demands is \emph{integral} iff $D_{t,t'}$ is an integer for each $t,t'\in \tset$.

Given any set $D$ of demands, a \emph{fractional routing} of $D$ is a flow $F$, where for each unordered pair $t,t'\in \tset$, the amount of flow sent from $t$ to $t'$ (or from $t'$ to $t$) is $D_{t,t'}$. Given an integral set $D$ of demands, an \emph{integral} routing of $D$ is a collection $\pset$ of paths, where for each unordered pair $(t,t')\in \tset$, there are $D_{t,t'}$ paths connecting $t$ to $t'$ in $\pset$. The congestion of this integral routing is the congestion caused by the set $\pset$ of paths in $G$.

Given a matching $\mset$ on a set $\tset$ of vertices, we say that $\mset$ can be routed in $G$ with congestion $\eta$, iff the set $D$ of demands, where $D_{t,t'}=1$ iff $(t,t')\in \mset$, and $D_{t,t'}=0$ otherwise, can be routed in $G$ with congestion at most $\eta$.

\paragraph{Sparsest Cut and the Flow-Cut Gap} 
Suppose we are given a graph $G=(V,E)$, with non-negative weights $w_v$ on vertices $v\in V$, and a subset $\tset\sse V$ of $k$ terminals, such that for all $v\not\in \tset$, $w_v=0$. 
For any subset $S\sse V$ of vertices, let $w(S)=\sum_{v\in S}w(v)$. The sparsity of a cut $(S,\nots)$ in $G$ is $\Phi(S)=\frac{|E(S,\nots)|}{w(S)\cdot w(\nots)}$, and the value of the sparsest cut in graph $G$ is defined to be:
$\Phi(G)=\min_{S\subset V}\set{\Phi(S)}$.
In the sparsest cut problem, the input is a graph $G$ with non-negative weights on vertices, and the goal is to find a cut of minimum sparsity. Arora, Rao and Vazirani~\cite{ARV} have shown an $O(\sqrt {\log k})$-approximation algorithm for the sparsest cut problem. We will often work with a special case of the sparsest cut problem, where for each $t\in \tset$, $w_t=1$.

A problem dual to the sparsest cut problem is the maximum concurrent flow problem. For the case where the weights of all terminals are unit, the goal in the maximum concurrent flow problem is to find the maximum value $\lambda$, such each pair of terminals can send $\lambda$ flow units to each other with no congestion. The flow-cut gap is the maximum possible ratio, in any graph, between the value of the minimum sparsest cut and the maximum concurrent flow. The value of the flow-cut gap in undirected graphs, that we denote by $\beta(k)$ throughout the paper, is $\Theta(\log k)$~\cite{LR, GVY,LLR,Aumann-Rabani}. Therefore, if $\Phi(G)=\alpha$, then every pair of terminals can send $\alpha/\beta(k)$ flow units to each other with no congestion.

We will use a sightly different, but also standard, and roughly equivalent, definition of sparsity.
Given any partition $(S,\nots)$ of $V$, the \emph{sparsity} of the cut $(S,\nots)$ is $\Psi(S,\nots)=\frac{|E(S,\nots)|}{\min\set{w(S),w(\nots)}}$. We then denote:
$\Psi(G)=\min_{S\subset V}\set{\Psi(S,\nots)}$.

It is easy to see that $2\Psi(G)/k\geq \Phi(G)\geq \Psi(G)/k$. Therefore, if $\Psi(G)=\alpha$, then $\Phi(G)\geq \alpha/k$, and every pair of terminals can send $\frac{\alpha}{k\beta(k)}$ flow units to each other with no congestion. Equivalently, every pair of terminals can send $1/k$ flow units to each other with congestion at most $\beta(k)/\alpha$. Moreover, any matching on the set $\tset$ of terminals can be fractionally routed with congestion at most $2\beta(k)/\alpha$.
In the rest of the paper, we will use the latter definition of sparsity, and we will use the term cut sparsity and the value of sparsest cut to denote $\Psi(S,\nots)$ and $\Psi(G)$ respectively. The algorithm of Arora, Rao and Vazirani~\cite{ARV} can still be used to obtain a cut in graph $G$ whose sparsity is at most $O(\sqrt{\log k})\cdot \Psi(G)$.
We denote by $\algsc$ this algorithm and by $\alphasc(k)=O(\sqrt{\log k})$ its approximation factor.

\paragraph{Routing on Expanders}
\begin{definition} We say that a graph $G=(V,E)$ is an $\alpha$-expander, iff
$\min_{\stackrel{S\sse V:}{|S|\leq |V|/2}}\set{\frac{|E(S,\nots)|}{|S|}}\geq \alpha$
\end{definition}


There are many algorithms for routing on expanders, e.g. \cite{LR,BFU,BFSU,KleinbergR,Frieze}, which give different types of guarantees. For example, Frieze~\cite{Frieze} has shown that if $G$ is an $r$-regular graph (where $r$ is a constant) with strong enough expansion properties, then there is an efficient randomized algorithm for routing \emph{any} matching on any subset of $\Omega(n/\log n)$ of its vertices via edge-disjoint paths. We need a slightly different type of guarantee: the routing should be on {\bf vertex-disjoint} paths, and the graph degree may be super-constant (but still bounded). Rao and Zhou~\cite{RaoZhou} give such an algorithm, which is summarized in the next theorem. For completeness, we provide a proof sketch in Appendix.

\begin{theorem}[Theorem 7.1 in~\cite{RaoZhou}]\label{thm: vertex-disjoint routing on expanders}
Let $G=(V,E)$ be any $n$-vertex $d$-regular $\alpha$-expander, for $\alpha=1/2$.
Assume further that $n$ is even, and that the vertices of $G$ are partitioned into $n/2$ disjoint demand pairs $\mset=\set{(s_1,t_1),\ldots,(s_{n/2},t_{n/2})}$. Then there is an efficient algorithm that routes $\Omega\left (\frac n {\log n\cdot d^2}\right )$ of the demand pairs on vertex-disjoint paths in $G$.
\end{theorem}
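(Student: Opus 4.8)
The plan is to route a sub-collection of the pairs in three stages: (i) compute a \emph{fractional} routing of the whole matching $\mset$ along \emph{short} paths, (ii) randomly sample $m$ of the pairs and round each sampled pair's fractional flow to a single path, and (iii) delete the few rounded paths that still collide, so that the survivors are pairwise vertex-disjoint.

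First I would construct the short fractional routing. Since $G$ is a $(1/2)$-expander, putting unit weight on every vertex gives $\Psi(G)\ge 1/2$, so by the flow-cut gap the matching $\mset$ (which is $1$-restricted on $V$) can be routed fractionally with edge-congestion $O(\beta(n))=O(\log n)$. Decomposing this flow into flow-paths, the flow-weighted total length equals the total edge-flow, which is at most $O(\log n)\cdot|E|=O(nd\log n)$; hence the amount of flow carried on paths of length more than $\Lambda=\Theta(d\log n)$ is only an $o(1)$-fraction of the $n/2$ total flow units, so for a constant fraction of the pairs at least half of the flow is carried on paths of length at most $\Lambda$. I would restrict attention to this set $\mset'$ of $\Theta(n)$ pairs, keep only their short flow-paths, rescale, and shortcut every flow-path to a simple path (which only lowers congestion). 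The result is, for each $i\in\mset'$, a flow $f_i$ of value $1$ from $s_i$ to $t_i$ on simple paths of length $O(d\log n)$, such that $f=\sum_{i\in\mset'}f_i$ has edge-congestion $O(\log n)$ and hence vertex-load $g(v)$ at most $O(d)\cdot O(\log n)=O(d\log n)$ at every $v$ (a path through $v$ uses two of its $\le d$ incident edges). In particular $\sum_v g(v)=O(nd\log n)$ and $\max_v g(v)=O(d\log n)$.

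Next I would sample a uniformly random $S\subseteq\mset'$ with $|S|=m$ and, independently for each $i\in S$, pick a single path $P_i\sim f_i$. Writing $X_v$ for the number of chosen paths through $v$ and $g_i(v)$ for the vertex-load of $f_i$ at $v$ (so $\sum_i g_i(v)=g(v)$), linearity, independence of the rounding, and $\Pr[i,j\in S]=O((m/n)^2)$ give, for the number of collisions $C=\sum_v\binom{X_v}{2}$,
\[
\expect{C}\ \le\ O\!\left(\tfrac{m^{2}}{n^{2}}\right)\sum_v\Big(\sum_i g_i(v)\Big)^{2}\ \le\ O\!\left(\tfrac{m^{2}}{n^{2}}\right)\max_v g(v)\sum_v g(v)\ =\ O\!\left(\tfrac{m^{2}d^{2}\log^{2}n}{n}\right).
\]
Choosing $m=\Theta\big(n/(d^{2}\log^{2}n)\big)$ makes $\expect{C}\le m/4$, so with constant probability $C\le m/2$; building the ``collision graph'' on $\{P_i\}_{i\in S}$ (an edge whenever two paths share a vertex) and deleting one endpoint of each of its $\le m/2$ edges leaves an independent set of $\ge m/2=\Omega\big(n/(d^{2}\log^{2}n)\big)$ pairwise vertex-disjoint paths connecting distinct demand pairs. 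A constant number of repetitions (the output is easy to verify) boosts this to high probability. To sharpen the bound to the stated $\Omega\big(n/(d^{2}\log n)\big)$, I would replace the flow-cut-gap routing by the canonical random-walk routing: lazy walks of length $\Theta(d^{2}\log n)$, which is the mixing time of $G$ since Cheeger's inequality gives spectral gap $\Omega(1/d^{2})$ for a degree-$d$ $(1/2)$-expander; the walk endpoint is then $n^{-\Omega(1)}$-close to uniform, which makes $f$ essentially load-balanced and saves a logarithmic factor in the second-moment estimate.

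The hard part is this final balance: the sampling rate $m/n$ has to be small enough that the alteration step in (iii) throws away only a constant fraction of the sampled paths, which is exactly a bound on the ``self-overlap'' $\sum_v g(v)^{2}$ of the fractional routing; keeping that small forces the routing to use short paths, and the shortest length one can achieve is dictated by the expander's mixing time $\Theta(d^{2}\log n)$ — which is precisely where the $d^{2}\log n$ in the denominator comes from. Minor points still to be discharged are that every path can be taken simple (shortcut the walks), that the flow and random walks are computed in polynomial time, and that the estimates remain meaningful in the regimes where $d$ or $n$ is small.
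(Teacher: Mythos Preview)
Your three-stage plan is sound, and stages (i)--(iii) correctly yield $\Omega\bigl(n/(d^{2}\log^{2}n)\bigr)$ vertex-disjoint paths. The gap is in the last paragraph: the random-walk ``improvement'' cannot recover the missing $\log n$. Lazy walks on a $d$-regular $(1/2)$-expander mix in time $T=\Theta(d^{2}\log n)$, so walk-based paths are \emph{longer} than the $O(d\log n)$ paths you already have; by reversibility the resulting load is $g(v)\approx T$ at every vertex, giving $\sum_{v}g(v)^{2}=\Theta(nT^{2})=\Theta(nd^{4}\log^{2}n)$, which is worse, not better. More to the point, the second-moment alteration is intrinsically capped at $m=O(n/L^{2})$ for any routing with average path length $L$: by Cauchy--Schwarz, $\sum_{v}g(v)^{2}\ge(\sum_{v}g(v))^{2}/n=\Theta(nL^{2})$ whenever $\Theta(n)$ pairs are routed on length-$L$ paths, and $L$ cannot go below the diameter. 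Load-balancing buys nothing here because your flow-cut-gap routing is already balanced ($\max_{v}g(v)$ matches the average up to constants). So as written the argument proves a bound one $\log n$ factor weaker than the theorem states.

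The paper's proof (following Rao--Zhou) takes a different route that avoids any collision analysis. It is a greedy short-path algorithm: while some remaining pair $(s_i,t_i)$ has a path of length at most $\ell=4d\beta(n)=O(d\log n)$ in the current graph, route that pair along such a path and delete all of the path's vertices. The analysis is a multicut bound: at termination, weight $1/\ell$ on each surviving edge is a feasible fractional multicut for the remaining pairs, so together with the deleted edge set $E'$ one gets a multicut for \emph{all} pairs of value at most $|E'|+|E''|\beta(n)/\ell$. Since the pairs partition $V$, every component of any multicut has at most $n/2$ vertices and hence, by $(1/2)$-expansion, at least half that many boundary edges; thus every multicut has value at least $n/4$, which forces $|E'|\ge n/8$. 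Each iteration removes at most $(\ell+1)d$ edges, so the number of pairs routed is at least $n/(8(\ell+1)d)=\Omega\bigl(n/(d^{2}\log n)\bigr)$. Vertex-disjointness is built in (vertices are deleted as they are used), so there is no second-moment slack to lose.
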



\paragraph{The Cut-Matching Game}

We use the cut-matching game of Khandekar, Rao and Vazirani~\cite{KRV}. In this game, we are given a set $V$ of $N$ vertices, where $N$ is even, and two players: a cut player and a matching player. The goal of the cut player is to construct an expander $X$ on the set $V$ of vertices as quickly as possible, and the goal of the matching player is to delay its construction. The game is played in iterations. We start with the graph $X$ containing the set $V$ of vertices, and no edges.
In each iteration $j$, the cut player computes a bi-partition $(A_j,B_j)$ of the vertices of $V$ into two equal-sized sets, and the matching player returns some perfect matching $M_j$ between the two sets. The edges of $M_j$ are then added to $X$. The following theorem was proved in~\cite{KRV}.

\begin{theorem}[\cite{KRV}]\label{thm: KRV}
There is a probabilistic algorithm for the cut player, such that, no matter how the matching player plays, after $\gkrv(N)=O(\log^2N)$ iterations, graph $X$ is a $\half$-expander w.h.p.
\end{theorem}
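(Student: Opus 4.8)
The plan is to run the potential‑function argument of Khandekar, Rao and Vazirani. Associate with each vertex $u\in V$ a vector $x_u\in\reals^{N}$, initialized to $x^{(0)}_u=e_u$, and whenever the matching player returns $M_j$, replace $x_u$ and $x_v$ by their common average $\tfrac12(x_u+x_v)$ for every $(u,v)\in M_j$ (one lazy step of a random walk along $M_j$). The mean $\mu=\tfrac1N\sum_u x_u=\tfrac1N\mathbf{1}$ is invariant, and the progress measure is $\Phi_j=\sum_{u}\norm{x^{(j)}_u-\mu}^{2}$, with $\Phi_0=N-1$. A one‑line computation gives that averaging a pair $(u,v)$ decreases $\Phi$ by exactly $\tfrac12\norm{x_u-x_v}^{2}$, so $\Phi_{j-1}-\Phi_j=\tfrac12\sum_{(u,v)\in M_j}\norm{x^{(j-1)}_u-x^{(j-1)}_v}^{2}$ \emph{for whatever matching the adversary returns}. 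The proof then splits into two parts: (i) design the cut player so that this drop is an $\Omega(1/\log N)$ fraction of $\Phi_{j-1}$ in expectation, whence $\Phi$ falls below $1/\poly(N)$ within $\gkrv(N)=O(\log^{2}N)$ rounds w.h.p.; and (ii) show that $\Phi_j<1/\poly(N)$ forces $X$ to be a $\half$-expander.

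\paragraph{The cut player.} Writing $v_u=x_u-\mu$, I want a bipartition $(A_j,B_j)$ into equal halves with $\min_M\sum_{(u,v)\in M}\norm{v_u-v_v}^{2}\ge\Omega(\Phi_{j-1}/\log N)$, the minimum over perfect matchings $M$ between $A_j$ and $B_j$; then the drop is guaranteed regardless of the adversary. The simplest candidate is to project the $v_u$ onto a uniformly random unit vector $r\perp\mathbf{1}$, sort the scalars $p_u=\langle v_u,r\rangle$, and cut at the median: since then every projection in $A_j$ is at least every projection in $B_j$, each matched pair straddles the median, so $\sum_{(u,v)\in M}(p_u-p_v)^{2}\ge\sum_u p_u^{2}$ for \emph{every} $M$, giving an expected drop of only $\tfrac{1}{2(N-1)}\Phi_{j-1}$. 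That is too weak, so to get the $1/\log N$ fraction I would instead feed the negative‑type semimetric $d(u,v)=\norm{v_u-v_v}^{2}$ on $V$ (whose total ``spread'' $\sum_u\norm{v_u-\mu}^{2}=\Phi_{j-1}$ is controlled) into a balanced‑separator routine in the spirit of $\algsc$: its guarantee yields an equipartition whose two sides are $\Omega(\Phi_{j-1}/\poly\log N)$‑separated in $d$, and a Hall/LP‑duality argument converts this separation into the desired lower bound on $\min_M\sum_M d(u,v)$. Iterating over rounds and applying an Azuma/Chernoff bound to the (bounded) per‑round drops then yields $\Phi_{\gkrv(N)}<1/\poly(N)$ with high probability.

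\paragraph{Small potential implies expansion, and the main obstacle.} For part (ii), suppose $X$ (the graph on $V$ with edge set $M_1\cup\cdots\cup M_j$) is \emph{not} a $\half$-expander, so there is $S$ with $|S|\le N/2$ and $|E_X(S,\overline{S})|<|S|/2$. Put $z=\mathbf{1}_S-\tfrac{|S|}{N}\mathbf{1}$, so $z\perp\mathbf{1}$ and $\norm{z}^{2}\ge|S|/2$. If $W_\ell=\tfrac12(I+\Pi_{M_\ell})$ (with $\Pi_{M_\ell}$ the permutation matrix of the involution $M_\ell$) and $X^{(j)}=W_j\cdots W_1$ is the matrix of the $x^{(j)}_u$'s, then $\sum_u\langle z,x^{(j)}_u-\mu\rangle^{2}=\norm{(X^{(j)})^{\top}z}^{2}=z^{\top}\paren{\textstyle\sum_u v_uv_u^{\top}}z\le\norm{z}^{2}\,\Phi_j$. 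On the other hand, tracking how this norm shrinks as one applies the $W_\ell$ one matching at a time, a charging/``leakage'' argument should show that a cut of $X$ this sparse can bleed only $O(|E_X(S,\overline{S})|)=o(\norm{z}^{2})$ worth of squared norm over all the rounds together, so the left‑hand side is $\Omega(\norm{z}^{2})$; the two bounds force $\Phi_j=\Omega(1)$, contradicting $\Phi_j<1/\poly(N)$, and hence $X$ is a $\half$-expander. A final union bound over the $O(\log^{2}N)$ rounds handles the w.h.p.\ claim. I expect the two delicate points to be exactly these: pushing the per‑round potential drop up to an $\Omega(1/\poly\log N)$ fraction (rather than the $\Omega(1/N)$ a single random projection gives), which is what pins the round count at $O(\log^{2}N)$ and is where the ARV‑type balanced‑separator machinery is genuinely needed; and making the leakage bound in part (ii) tight enough that it does not accumulate a spurious factor of $j$ across rounds. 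Each of these is arrangeable but requires a real argument, not a one‑line estimate.
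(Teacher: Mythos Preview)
The paper does not prove this theorem; it is quoted as a black box from Khandekar--Rao--Vazirani, so there is no ``paper's own proof'' to compare against. Your outline is indeed the KRV potential framework, but two of your assessments are off, and in opposite directions.

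\textbf{The cut player.} Your conclusion that the random--projection cut only yields an $\Omega(\Phi/N)$ drop, and that one therefore needs ARV--type machinery, is mistaken. You lower-bounded the drop by $\tfrac12\sum_M (p_u-p_v)^2$, but the \emph{actual} drop is $\tfrac12\sum_M \|v_u-v_v\|^2$, which can be vastly larger than its one--dimensional projection. The KRV analysis works directly with the $\ell_2^2$ distances: using Gaussian/spherical concentration, one shows that for the median cut along a random direction $r$, any perfect matching $M$ between the two halves must satisfy $\expect{\sum_M \|v_u-v_v\|^2}\ge \Omega(\Phi/\log N)$ (the $\log N$ comes from a union bound over ``length scales'' in the concentration argument, not from a dimension loss). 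No balanced--separator or ARV subroutine is used; the cut player is literally ``project and bisect''.

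\textbf{Small potential $\Rightarrow$ expansion.} Your proposed leakage argument is both harder and shakier than what is actually needed, and your worry about a spurious factor of $j$ is a symptom of choosing the wrong route. The clean argument is a flow embedding: $X^{(j)}=W_j\cdots W_1$ is doubly stochastic, and $\Phi_j<1/(4N^2)$ forces every entry $X^{(j)}_{uw}\ge 1/(2N)$. Interpreting the lazy walk as a multicommodity flow, commodity $u$ ships $X^{(j)}_{uw}$ units to $w$, and on any edge $(a,b)\in M_\ell$ the total load is $\tfrac12\sum_u |X^{(\ell-1)}_{ua}-X^{(\ell-1)}_{ub}|\le 1$ by double stochasticity. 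Thus a scaled complete graph is embedded in $X$ with congestion $1$, and for any $|S|\le N/2$ the flow across the cut is at least $|S|\cdot|\overline S|/(2N)\ge |S|/4$, forcing $|E_X(S,\overline S)|\ge |S|/4$; a slightly sharper threshold (or the trivial matching edges) recovers the constant $\tfrac12$. No per--round leakage accounting is required.
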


\paragraph{Well-Linked Decompositions}
Well-linked decompositions have been used extensively in algorithms for network routing, e.g. in~\cite{Raecke,ANF,CKS,RaoZhou,Andrews}. We define below the specific type of well-linkedness that our algorithm uses and give an algorithm for computing the corresponding well-linked decomposition.

\begin{definition} Given a graph $G$, a subset $S$ of its vertices, and a parameter $\alpha>0$, we say that $S$ is $\alpha$-well-linked, iff for any partition $(A,B)$ of $S$, if we denote by $T_A=\out(A)\cap \out(S)$, and by $T_B=\out(B)\cap \out(S)$, then
$|E(A,B)|\geq \alpha\cdot \min\set{|T_A|,|T_B|}$.
\end{definition}

We also need a more general notion of well-linkedness that we define below. Intuitively, this definition of well-linkedness handles subsets $S$ of vertices, where $|\out(S)|$ may be large, but we will only be interested in routing small amounts of flow through $S$.

\begin{definition}
Let $S$ be any subset of vertices of a graph $G$. For any integer $k>0$ and for any $0<\alpha<1$, we say that set $S$ is $(k,\alpha)$-well-linked iff for any pair $T_1,T_2\sse \out(S)$ of disjoint subsets of edges, with $|T_1|+|T_2|\leq k$, the value of the minimum cut separating $T_1$ from $T_2$ in $G[S]\cup\out(S)$ is at least $\alpha\cdot\min\set{|T_1|,|T_2|}$. (We say that a cut $(X,Y)$ of $S$ separates $T_1$ from $T_2$ iff $T_1\sse \out(X)$ and $T_2\sse \out(Y)$.) \end{definition}

Note that if $|\out(S)|\leq k$, then set $S$ is $(k,\alpha)$-well-linked iff it is $\alpha$-well-linked, that is, the two definitions of well-linkedness become equivalent.
Notice also that if $S$ is $(k,\alpha)$-well-linked, then for any subset $T\sse\out(S)$ of at most $k$ edges, any matching on $T$ can be fractionally routed in graph $G[S]\cup \out(S)$ with congestion at most $2\beta(k)/\alpha$. This is since we can set up an instance of the sparsest cut problem on graph $G[S]\cup\out(S)$, where the edges of $T$ serve as terminals. Since $S$ is $(k,\alpha)$-well-linked, the value of the sparsest cut is at least $\alpha$, and so any matching on $T$ can be routed with congestion at most $2\beta(k)/\alpha$. 

Assume now that $S$ is not $(k,\alpha)$-well-linked. Then there must be a partition $(X,Y)$ of $S$, and two subsets $T_1\sse \out(X)\cap \out(S)$, $T_2\sse \out(Y)\cap \out(S)$ with $|T_1|+|T_2|\leq k$, such that $|E(X,Y)|<\alpha\cdot\min\set{|T_1|,|T_2|}$. We say that $(X,Y)$ is a \emph{$(k,\alpha)$-violating cut} for $S$.

Given a subset $S$ of vertices of $G$, we would like to find a partition $\wset$ of $S$, such that each set in $W\in\wset$ is $(k,\alpha)$-well-linked. We could do so using the standard well-linked decomposition procedures, for example like those used in~\cite{Raecke, CKS}. However,  in order to do so, we need to be able to check whether a given subset $W$ of vertices is $(k,\alpha)$-well-linked, and if not, find a $(k,\alpha)$-violating cut efficiently. We do not know how to do this, even approximately. Therefore, we will assume for now that we are given an oracle that finds a $(k,\alpha)$-violating cut in a given subset of vertices, if such cut exists. We describe the decomposition procedure and bound the number of edges $\sum_{W\in \wset}|\out(W)|$ in the resulting decomposition. When we use this decomposition later in the algorithm, we will be interested in routing small amounts of flow (up to $k$) across the clusters of the decomposition. Whenever we will be unable to route this flow, we will naturally obtain a $(k,\alpha)$-violating cut. Therefore, our algorithm itself will serve as an oracle to the decomposition procedure. We note that in the eventual decomposition $\wset$, not all sets $W\in \wset$ may be $(k,\alpha)$-well-linked, but we will be able to route the flow that we need to route across these clusters, and this is sufficient for us. We now describe the oracle-based decomposition procedure and analyze it.

We are given as input a subset $S$ of vertices of $G$, an integer $k$, and a parameter $0<\alpha<1$. Throughout the decomposition procedure, we maintain a partition $\wset$ of $S$, and at the beginning, $\wset=\set{S}$. The algorithm proceeds as follows. As long as not all sets in $\wset$ are $(k,\alpha)$-well-linked, our oracle computes a $(k,\alpha)$-violating partition $(X,Y)$ of one of the sets $W\in \wset$. We then remove $W$ from $\wset$ and add $X$ and $Y$ to $\wset$ instead. The next theorem bounds $\sum_{W\in \wset}|\out(W)|$.

\begin{theorem}\label{thm: oracle-well-linked}
Let $k>8$, and denote $\gamma=\gkrv(k)=\Theta(\log^2k)$. Let $\alpha(k)=\frac{1}{2^{11}\cdot \gamma\cdot\log k}$, and let $\wset$ be any partition of $S$ produced over the course of the above algorithm. Then $\sum_{W\in \wset}|\out(W)|\leq |\out(S)|\left (1+\frac 1 {64\gamma}\right )$.
\end{theorem}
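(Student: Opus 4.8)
The plan is to use a charging/amortization argument that tracks the potential $\sum_{W \in \wset}|\out(W)|$ through each step where a $(k,\alpha)$-violating cut splits some cluster $W$ into $X$ and $Y$. When we replace $W$ by $X$ and $Y$, the quantity $|\out(W)|$ is replaced by $|\out(X)| + |\out(Y)| = |\out(W)| + 2|E(X,Y)|$, since every edge of $E(X,Y)$ leaves both $X$ and $Y$ but was internal to $W$. So each split increases the potential by exactly $2|E(X,Y)|$, and by the definition of a $(k,\alpha)$-violating cut we have $|E(X,Y)| < \alpha \cdot \min\{|T_1|,|T_2|\}$ where $T_1 \subseteq \out(X) \cap \out(S)$ and $T_2 \subseteq \out(Y) \cap \out(S)$. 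The first key point is therefore: the total increase in the potential is at most $2\alpha \sum (\text{charge of each split})$, where the charge of a split is $\min\{|T_1|,|T_2|\}$, a quantity counted in the \emph{original} boundary edges $\out(S)$.

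The heart of the argument is to bound $\sum_{\text{splits}} \min\{|T_1|,|T_2|\}$ by $O(|\out(S)| \log k)$. I would do this by a standard laminar/recursion-tree analysis: the sequence of splits organizes the clusters into a laminar family (a binary forest) whose leaves are the final clusters of $\wset$ and whose internal nodes are the intermediate clusters that got split. Assign to each original edge $e \in \out(S)$ a counter. When a cluster $W$ (with $e \in \out(W)$, i.e. $e$ currently ``touches'' the part of $W$'s boundary coming from $\out(S)$) is split into $X,Y$ by a violating cut with parts $T_1, T_2$, the edge $e$ lies in the boundary of exactly one of $X, Y$; charge $e$ only when $e$ lies on the \emph{smaller} side, i.e. $e \in T_i$ with $|T_i| = \min\{|T_1|,|T_2|\}$. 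Each time $e$ is charged, the number of $\out(S)$-edges in the boundary of the cluster containing $e$ drops by at least a factor of $2$ (it goes from $\geq |T_1|+|T_2| \geq 2|T_i|$ down to $|T_i|$ — wait, more carefully, it drops from at least $|T_1|+|T_2|$, hmm, actually from the current count down to the count on the smaller side). Since this count starts at most $|\out(S)| \leq n^{O(1)}$ and, more to the point, is at most $k$ once we are in the regime where violating cuts have $|T_1|+|T_2| \leq k$, each edge can be charged at most $\log k$ times (here one must be a little careful about the very first splits, where $|\out(W)|$ may exceed $k$; those are handled because a $(k,\alpha)$-violating cut still has $|T_1|+|T_2|\le k$, so the edge $e$ being charged must already lie in a set of at most $k$ boundary edges, or one argues the count halves each time regardless). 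Summing, $\sum_{\text{splits}} \min\{|T_1|,|T_2|\} = \sum_{e \in \out(S)} (\text{number of times } e \text{ is charged}) \leq |\out(S)| \cdot \log k$.

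Putting the pieces together: the total increase in $\sum_{W\in\wset}|\out(W)|$ over the whole decomposition is at most $2\alpha \cdot |\out(S)| \log k$. With $\alpha = \alpha(k) = \frac{1}{2^{11} \gamma \log k}$, this is at most $2 \cdot \frac{1}{2^{11}\gamma \log k} \cdot |\out(S)| \log k = \frac{|\out(S)|}{2^{10}\gamma} \leq \frac{|\out(S)|}{64\gamma}$, so $\sum_{W \in \wset}|\out(W)| \leq |\out(S)|\bigl(1 + \frac{1}{64\gamma}\bigr)$, as claimed. (The slack between $2^{10}$ and $64$ absorbs the constant-factor looseness in the ``charged at most $\log k$ times'' bound and the boundary effects at the first few splits.)

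I expect the main obstacle to be making the ``each edge is charged at most $\log k$ times'' claim fully rigorous, in particular handling the clusters whose boundary has more than $k$ edges at the time of a split: there the relevant monotone quantity is not $|\out(W)|$ itself but the number of $\out(S)$-edges on the boundary of the cluster currently containing a fixed charged edge $e$, restricted to those edges that actually participate as $T_1 \cup T_2$ in the violating cut. One has to argue that whenever $e$ is charged, the value of this quantity at least halves, and that it never increases between charges (it can only decrease, since subsequent splits either leave $e$'s cluster alone or split it further, removing boundary edges from $e$'s side). This monotonicity, plus the factor-$2$ drop at each charge and the initial bound of $|\out(S)|$, gives the $\log(|\out(S)|) = O(\log n)$ bound in general; to get $\log k$ one additionally observes that $e$ can only be charged while it sits in a $T_i$ with $|T_1|+|T_2| \le k$, so effectively only the charges after $e$'s boundary-count first drops to $\le k$ matter, and there are at most $\log k$ of those — the earlier ones being absorbable or nonexistent depending on how the decomposition is initialized. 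Getting the bookkeeping exactly right to land on the stated constant $64$ is the delicate part; the rest is routine.
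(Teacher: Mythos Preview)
Your high-level strategy matches the paper's: charge the cost $2|E(X,Y)|$ of each split to the edges on the smaller side, argue each edge receives $O(\log k)$ charges, and conclude. But there is a genuine gap in the accounting. You write $T_1\subseteq\out(X)\cap\out(S)$ and $T_2\subseteq\out(Y)\cap\out(S)$, and build the whole argument on the premise that the charged edges lie in the \emph{original} boundary $\out(S)$. That is false: a $(k,\alpha)$-violating cut of a cluster $W$ has $T_1\subseteq\out(X)\cap\out(W)$ and $T_2\subseteq\out(Y)\cap\out(W)$, the boundary of the \emph{current} cluster. Once earlier splits have occurred, $\out(W)$ contains not only edges of $\out(S)$ but also the cut edges $E(X',Y')$ from every previous split separating $W$ from the rest of $S$. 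So the edges being charged may themselves be ``new'' edges; summing only over $e\in\out(S)$ simply misses part of the cost, and your halving claim fails too, since which side is ``smaller'' is decided by $|\out(X)\cap\out(W)|$, not by the number of $\out(S)$-edges on that side.

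The paper handles this with two ingredients you are missing. First, it tracks $|\out(W)|$ (including new edges) for the cluster containing a fixed edge, shows this drops by a factor $2/3$ whenever the edge is on the smaller side, and splits the per-edge charge into two regimes: when $|T_X|\ge k/2$ the charge is $\alpha k/|\out(X)|$ and these sum geometrically to $O(\alpha)$; when $|T_X|<k/2$ the charge is $\alpha$ for at most $O(\log k)$ further rounds. This gives the $O(\alpha\log k)$ \emph{direct} charge per edge. Second, it observes that the new edges in $\out(W)\setminus\out(S)$ can themselves accrue charge, and that this \emph{indirect} charge propagates back to $\out(S)$ as a geometric series with ratio $O(\alpha\log k)\ll 1$, at most doubling the total. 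Your final paragraph frames the remaining difficulty as bookkeeping about $\log k$ versus $\log n$ and landing on the constant $64$; the actual missing step is this indirect-charge argument, without which the bound does not close.
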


We emphasize that the bound on $\sum_{W\in\wset}|\out(W)|$ holds for any partition produced over the course of the algorithm, and not just the final partition.

\begin{proof}
The proof uses a standard charging scheme. For simplicity, we denote $\alpha=\alpha(k)$. Consider some iteration of the algorithm, and suppose the oracle has found a $(k,\alpha)$-violating partition $(X,Y)$ of some set $W$ in the current partition. Let $T_X=\out(X)\cap \out(W)$, $T_Y=\out(Y)\cap \out(W)$, and assume w.l.o.g. that $|T_X|\leq |T_Y|$ (note that it is possible that $|T_X|>k$). We charge the edges of $T_X$ evenly for the edges in $E(X,Y)$. Specifically, if $|T_X|\geq k/2$, then $|E(X,Y)|\leq \alpha k/2$ must hold, and the charge to each edge in $T_X$ is at most $\frac{\alpha k}{2|T_X|}\leq \frac{\alpha k}{|\out(X)|}$. Otherwise, $|E(X,Y)|\leq \alpha \cdot |T_X|$, and the charge to each edge of $T_X$ is at most $\alpha$. In any case, $|\out(X)|=|T_X|+|E(X,Y)|<2|\out(W)|/3$, and $|\out(Y)|\leq |\out(W)|$.

Consider some edge $e=(u,v)\in \bigcup_{W\in \wset}\out(W)$. We analyze the charge to edge $e$. We first bound the charge via the vertex $u$. Let $i_1\leq i_2\leq \cdots\leq i_{\ell}$ be the iterations of the decomposition procedure in which $e$ was charged via vertex $u$, and for each $1\leq j\leq \ell$, let $z_j=|\out(W)|$, where $W$ is the cluster to which $u$ belonged at the end of iteration $i_j$. Note that for each $1<j\leq \ell$, $z_j<2z_{j-1}/3$. Let $j^*$ be the largest index for which $z_{j^*}>k/2$.
 Then the total charge to $e$ via $u$ in iterations $i_1,\ldots,i_{j^*}$ is at most:
\[\frac{\alpha k}{z_1}+\frac{\alpha k}{z_2}+\cdots +\frac{\alpha k}{z_{j^*}}\leq \frac{\alpha k}{z_{j^*}}\left (1+(2/3)+(2/3)^2+\cdots+(2/3)^{j^*-1}\right )<\frac{3\alpha k}{z_{j^*}}\leq 6\alpha\] 
 
  In each subsequent iteration, the charge to edge $e$ was at most $\alpha$, and the number of such iterations is bounded by $2\log k$. So the charge to edge $e$ via vertex $u$ is at  most $6\alpha+2\alpha\log k<4\alpha\log k$, and the total charge to edge $e$ is at most $8\alpha \log k\leq \frac{1}{2^8\gamma}$. 
  This however only accounts for the \emph{direct} charge. For example, some edge $e'\not \in \out(S)$, that was first charged to the edges in $\out(S)$, can in turn be charged for some other edges. We call such charging \emph{indirect}. If we sum up the indirect charge for every edge $e\in \out(S)$, we obtain a geometric series, and so the total direct and indirect amount charged to every edge $e\in \out(S)$ is at most $\frac{1}{128\gamma}$. We conclude that $\sum_{W\in \wset}|\out(W)|\leq |S|\left (1+\frac{1}{64\gamma}\right )$. (The additional factor of $2$ is due to the fact that each edge of the partition is counted twice in $\sum_{W\in \wset}|\out(W)|$ - once for each its endpoint).
\end{proof}

Let $\alphaWL(k)=\alpha(k)/\alphasc(k)=\Omega(1/(\log^{3.5}k))$. If $|\out(S)|\leq k$, then we can obtain a $(k,\alphaWL(k))$-well-linked decomposition of $S$ efficiently, by using the algorithm $\algsc$ for the Sparsest Cut problem as our oracle: In each iteration, for each $W\in \wset$, we apply the algorithm $\algsc$ to the corresponding instance of the sparsest cut problem (where the edges of $\out(W)$ are viewed as terminals). If the algorithm $\algsc$ returns a $(k,\alpha(k))$-violating cut $(X,Y)$ for any set $W\in \wset$, then we can proceed with the decomposition procedure as before. Otherwise, we are guaranteed that each set $W\in \wset$ is $\alphaWL(k)$-well-linked. We therefore have the following corollary.

\begin{corollary}\label{corollary: weak well linked}
Let $S$ be any subset of vertices of $G$, such that $|\out(S)|\leq k$. Then we can efficiently find a partition $\wset$ of $S$, such that for each $W\in \wset$, $|\out(W)|\leq k$, and it is $\alphawl(k)=\frac{1}{2^{11}\cdot \alphasc(k)\cdot\gkrv(k)\cdot \log k}=\Omega(1/(\log^{3.5}k))$-well-linked. Moreover, $\sum_{W\in \wset}|\out(W)|\leq |\out(S)|\left (1+\frac 1 {64\gkrv(k)}\right )$.
\end{corollary}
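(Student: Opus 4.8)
The plan is to obtain $\wset$ by running the oracle-based well-linked decomposition procedure of Theorem~\ref{thm: oracle-well-linked} with parameter $\alpha=\alpha(k)=\frac{1}{2^{11}\cdot\gkrv(k)\cdot\log k}$, using the ARV algorithm $\algsc$ of~\cite{ARV} to implement the oracle, and then reading off both conclusions from that theorem together with the approximation guarantee of $\algsc$.

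Concretely, I would maintain a partition $\wset$ of $S$, initialized to $\set{S}$. At each iteration, for every cluster $W\in\wset$ I form the sparsest-cut instance on $G[W]\cup\out(W)$ in which each edge of $\out(W)$ is subdivided by a unit-weight terminal, and I run $\algsc$ on it. If for some $W$ the cut $(X,Y)$ returned by $\algsc$ has sparsity $\Psi(X,Y)<\alpha(k)$, I replace $W$ by $X$ and $Y$ in $\wset$ and continue; otherwise (i.e.\ $\algsc$ returns a cut of sparsity at least $\alpha(k)$ for every $W\in\wset$) I stop. The first point to check is that this is a legitimate instantiation of the abstract oracle: when $\algsc$ returns $(X,Y)$ with $\Psi(X,Y)<\alpha(k)$, setting $T_1=\out(X)\cap\out(W)$ and $T_2=\out(Y)\cap\out(W)$ we have $|T_1|+|T_2|=|\out(W)|$ and $|E(X,Y)|<\alpha(k)\cdot\min\set{|T_1|,|T_2|}$, so $(X,Y)$ is indeed a $(k,\alpha(k))$-violating cut for $W$ provided $|\out(W)|\le k$. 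The latter invariant holds by induction: it holds for $S$ by hypothesis, and the computation in the proof of Theorem~\ref{thm: oracle-well-linked} shows that splitting $W$ at a $(k,\alpha(k))$-violating cut yields $|\out(X)|<\tfrac{2}{3}|\out(W)|$ and $|\out(Y)|\le|\out(W)|$. Consequently every intermediate partition, and in particular the final $\wset$, satisfies $\sum_{W\in\wset}|\out(W)|\le|\out(S)|\bigl(1+\tfrac{1}{64\gkrv(k)}\bigr)$ by Theorem~\ref{thm: oracle-well-linked}, and $|\out(W)|\le k$ for every $W\in\wset$.

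For the well-linkedness guarantee I would use the stopping condition: if $\algsc$ returns a cut of sparsity at least $\alpha(k)$ for $W$, then since $\algsc$ is an $\alphasc(\cdot)$-approximation and the instance on $G[W]\cup\out(W)$ has at most $k$ terminals, the true sparsest cut value of $G[W]\cup\out(W)$ is at least $\alpha(k)/\alphasc(k)=\alphawl(k)$. Unpacking the definition of $\Psi$ with $\out(W)$ as the terminal set, this says exactly that for every partition $(A,B)$ of $W$ one has $|E(A,B)|\ge\alphawl(k)\cdot\min\set{|\out(A)\cap\out(W)|,|\out(B)\cap\out(W)|}$, i.e.\ $W$ is $\alphawl(k)$-well-linked; and since $|\out(W)|\le k$, this coincides with $W$ being $(k,\alphawl(k))$-well-linked, as noted following the definition of $(k,\alpha)$-well-linkedness. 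The claimed bound $\alphawl(k)=\Omega(1/\log^{3.5}k)$ then follows by plugging in $\alpha(k)=\Theta(1/(\gkrv(k)\log k))=\Theta(1/\log^{3}k)$ and $\alphasc(k)=O(\sqrt{\log k})$.

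Finally, for efficiency: each split increases $|\wset|$ by one and the clusters are nonempty and pairwise disjoint subsets of $S$, so the procedure performs at most $|S|-1$ splits; each iteration runs $\algsc$ (which is polynomial-time) on at most $|S|$ instances, so the whole decomposition runs in polynomial time. I do not expect a genuine obstacle here — all the quantitative work is already packaged in Theorem~\ref{thm: oracle-well-linked} and in the ARV guarantee — so the only thing that needs care is the bookkeeping linking $\algsc$ to the abstract oracle, namely verifying that a sub-$\alpha(k)$-sparse cut is a $(k,\alpha(k))$-violating cut (which is where the maintained invariant $|\out(W)|\le k$ is used) and that the stopping condition genuinely certifies $\alphawl(k)$-well-linkedness via the approximation ratio.
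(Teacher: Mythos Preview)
Your proposal is correct and follows essentially the same approach as the paper: run the oracle-based decomposition of Theorem~\ref{thm: oracle-well-linked} with $\alpha=\alpha(k)$, using $\algsc$ on the sparsest-cut instance with terminal set $\out(W)$ as the oracle, and read off the well-linkedness from the approximation guarantee and the boundary bound from the theorem. You have simply filled in a few bookkeeping details (the invariant $|\out(W)|\le k$ and the termination/efficiency argument) that the paper leaves implicit.
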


This finishes the description of the well-linked decomposition procedure. 
Throughout the paper, we use $\alpha(k)=\frac{1}{2^{11}\cdot \gkrv(k)\cdot\log k}$ to denote the parameter from Theorem~\ref{thm: oracle-well-linked}, and $\alphaWL(k)=\alpha(k)/\alphasc(k)$ the parameter from Corollary~\ref{corollary: weak well linked}.

\paragraph{The Grouping Technique}
The grouping technique was first introduced by Chekuri, Khanna and Shepherd~\cite{ANF}, and has since been widely used in algorithms for network routing~\cite{CKS, RaoZhou, Andrews}, as a means of boosting network connectivity and well-linkedness parameters. 
We summarize it in the following theorem.

\begin{theorem}\label{thm: grouping}
Suppose we are given a graph $G=(V,E)$, with weights $w(v)$ on vertices $v\in V$, and a parameter $p$. Assume further that for each $v\in V$, $0\leq w(v)\leq p$. Then we can find a partition $\gset$ of the vertices in $V$, and for each group $U\in \gset$, find a tree $T_U\sse G$, such that:

\begin{itemize}
\item For each $U\in \gset$, $p\leq w(U)\leq 3p$, where $w(U)=\sum_{v\in U}w(v)$.
\item For each $U\in \gset$, tree $T_U$ contains all vertices of $U$.
\item The trees $\set{T_U}_{U\in \gset}$ are edge-disjoint. 
\end{itemize}
\end{theorem}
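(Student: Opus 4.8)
The plan is to reduce to the case of a single tree and then cut that tree into pieces of the right weight. First I would pass to a spanning forest: it suffices to prove the statement when $G$ is a tree, since we can take a spanning forest $F$ of $G$, handle each tree of $F$ separately, and simply discard (leave ungrouped) any leftover vertices whose total weight within a component is less than $p$ — these contribute nothing we need, because the partition $\gset$ in the statement need not cover all of $V$. (If one insists $\gset$ covers $V$, attach each such small leftover blob to a neighboring group, which at most triples the weight again; but the cleanest reading is that $\gset$ is a partition of a subset, which is how it is used later.) Actually, the simplest route: root each tree $T$ of the forest arbitrarily, and I will greedily carve subtrees off from the bottom.

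The core step is the following greedy bottom-up procedure on a rooted tree $T$ with vertex weights $w(v)\in[0,p]$. Process vertices in post-order. Maintain, for each vertex $v$, the "accumulated weight" $a(v)$ of the not-yet-grouped descendants of $v$ (including $v$). When processing $v$, set $a(v) = w(v) + \sum_{u} a(u)$ summed over children $u$ of $v$ that have not triggered a cut. If at any point, while adding in the children's contributions one at a time, the running sum would first reach or exceed $p$, we "close" a group: we take $v$ together with all the currently-accumulated descendant material, form a group $U$, let $T_U$ be the induced subtree (which is connected and contains all of $U$), and reset the running count. Since each $a(u)\le$ (value at which it was last reset, hence $<p$, inductively) and $w(v)\le p$, the first time the running sum crosses $p$ it is at most $p + (\text{last child's } a(u)) < p + p$... more carefully: just before we add the child that pushes us over, the sum is $<p$; that child contributes $a(u)<p$; so the group weight is $<2p<3p$. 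Wait — we also may need to absorb $w(v)$ itself first; bounding everything by $p$ gives group weight $<p+p = 2p \le 3p$, and it is $\ge p$ by construction. So every closed group $U$ satisfies $p\le w(U)\le 3p$, as required. The trees $T_U$ are vertex-disjoint (each vertex is placed in exactly one group, or left over), hence certainly edge-disjoint. After processing the root, any residual accumulated weight $<p$ is simply left ungrouped.

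The main obstacle is purely bookkeeping: being careful about the order in which children are absorbed so that no group exceeds $3p$, and handling the edge case where a single subtree $a(u)$ together with $w(v)$ already lands in $[p,3p]$ versus where we must combine several light children. The bound $2p$ (hence $3p$) is robust to the exact tie-breaking as long as we close a group the moment the running total first reaches $p$. One should also note that the leftover vertices at the root, and leftover material hanging off each closed cut, never cause a problem because we never promised to cover $V$; if coverage were demanded, one would instead attach the final residue of weight $<p$ to the last-formed group along the path to the root, pushing that group's weight to at most $3p$ still (since $2p + p = 3p$), which is exactly why the constant $3$ appears. I would present the argument in this order: (1) reduce to a forest / single rooted tree; (2) state the greedy post-order carving; (3) verify the weight bounds $p\le w(U)\le 3p$ by the "first crossing" argument with all weights bounded by $p$; (4) observe connectivity of each $T_U$ and vertex-disjointness (hence edge-disjointness); (5) dispose of the residue.
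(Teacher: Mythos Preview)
Your approach---root a spanning tree and greedily carve off subtrees once their accumulated weight first reaches $p$---is exactly the paper's approach. Two points in your write-up are wrong, however.

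First, the theorem asks for a partition of all of $V$, not of a subset, and the paper's proof does cover every vertex. Your suggestion to leave the root residue ungrouped does not meet the statement, and ``attach it to the last-formed group'' is not justified as written (you need that group's tree to be adjacent to the residue). The paper handles this cleanly: it only carves while the remaining tree has weight $>3p$; since each carved piece has weight at most $2p$, after the last carve the remaining tree has weight $>3p-2p=p$, and it becomes the final group.

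Second, your claim that the trees $T_U$ are vertex-disjoint is false. When several groups are closed while processing the same vertex $v$ (because $v$ has many children with residue), each such tree must pass through $v$ to be connected; for the same reason you cannot put $v$ itself into more than one group $U$, or you no longer have a partition. The paper is explicit here: in this case $v$ belongs to the tree $T_U$ but \emph{not} to the group $U$. The trees are nevertheless edge-disjoint---each edge $(v,u_i)$ is used only by the group that absorbed the child $u_i$---which is all the theorem claims.
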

\begin{proof}
Let $T$ be the spanning tree of the graph $G$, and assume that it is rooted at some vertex $r$. We perform a number of iterations, where in each iteration we delete some edges and vertices from $T$. For each vertex $v$ of the tree $T$, let $T_v$ denote the sub-tree rooted at $v$, and let $w(T_v)$ denote the total weight of all vertices in $T_v$. We build the partition $\gset$ of $V$ gradually. At the beginning, $\gset=\emptyset$. While $w(T)>3p$, we perform the following iteration:

\begin{itemize}
\item Let $v$ be the lowest vertex in the tree $T$, such that $w(T_v)>p$.
\item If $w(T_v)\leq 2p$, then we add a new group $U$ to $\gset$, containing all vertices of $T_v$, and we delete $T_v$ from the tree $T$, setting $T_U=T_v$.
\item Otherwise, let $u_1,\ldots,u_k$ be the children of $v$, and let $j$ be the smallest index, such that $\sum_{i=1}^jw(T_{u_i})\geq p$. We add a new group $U$ to $\gset$, consisting of all vertices in trees $T_{u_1},\ldots,T_{u_j}$. Notice that $w(U)\leq 2p$ must hold. We let $T_U$ be the sub-tree of $T$ consisting of $v$ and the trees $T_{u_1},\ldots,T_{u_j}$. We delete the trees $T_{u_1},\ldots,T_{u_j}$ from the tree $T$.
\end{itemize}

Notice that if, at the beginning of the current iteration, $w(T)>3p$, then at the end of the current iteration, $w(T)>p$ must hold. In the last iteration, when $w(T)\leq 3p$, we add a final group $U$ to $\gset$, containing all vertices currently in the tree $T$, and we let $T_U$ be the current tree $T$. It is easy to verify that all conditions of the theorem hold for the final partition $\gset$ of $V$.
\end{proof}

\paragraph{Remark.} We will sometimes use the grouping theorem in slightly different settings. The first such setting is when we are given a subset $\tset\sse V$ of vertices called terminals, and we would like to group them into groups of cardinality at least $p$ and at most $3p$. In this case we will think of all non-terminal vertices as having weight $0$, and terminal vertices as having weight $1$. Instead of finding a partition $\gset$ of all vertices, we will be looking for a partition $\gset'$ of the set $\tset$ of terminals. This partition is obtained from $\gset$ by ignoring the non-terminal vertices. Another setting in which we use the grouping theorem is when we are given a subset $E'\sse E$ of edges, and we would like to find a grouping $\gset$ of these edges into groups of at least $p$
and at most $3p$ edges. As before, we would also like to find, for each group $U\in \gset$, a tree $T_U$ containing all edges in $U$, and we require that the trees $\set{T_U}_{U\in \gset}$ are edge-disjoint. This setting can be reduced to the previous one, by sub-dividing each edge $e\in E'$ with a terminal vertex. It is easy to verify that Theorem~\ref{thm: grouping} can be applied in this setting as well.

\section{The Algorithm}
\subsection{The Starting Point}\label{subsec: preprocessing}
Our starting point is similar to that used in previous work on the problem~\cite{ANF,CKS,RaoZhou,Andrews}: namely, we use the standard multicommodity flow LP-relaxation for the EDP problem to partition our graph into several disjoint sub-graphs, that are well-linked for their respective sets of terminals, and solve the problem separately on each such sub-graph. Recall that the standard LP-relaxation for EDP is defined as follows.
For each $1\leq i\leq k$, we have an indicator variable $x_i$ for whether or not we route the pair $(s_i,t_i)$. Let $\pset_i$ denote the set of all paths connecting $s_i$ to $t_i$ in $G$. The LP relaxation is defined as follows.

\begin{eqnarray*}
\mbox(LP)&&\\
\max&\sum_{i=1}^kx_i&\\
\mbox{s.t.}&&\\
&\sum_{P\in \pset_i}f(P)\geq x_i&\forall 1\leq i\leq k\\
&\sum_{P: e\in P}f(P)\leq 1&\forall e\in E\\
&0\leq x_i\leq 1&\forall 1\leq i\leq k\\
&f(P)\geq 0&\forall 1\leq i\leq k,\forall P\in \pset_i\\
\end{eqnarray*}

While this LP has exponentially many variables, it can be efficiently solved using standard techniques, e.g. by using an equivalent polynomial-size LP formulation. Throughout the rest of the paper, we denote by $\opt$ the value of the optimal solution to the LP. Clearly, the value of the optimal solution to the \EDP problem instance is at most $\opt$.

We need the following definition.

\begin{definition}
Given a graph $G=(V,E)$, and a subset $\tset\sse V$ of vertices called terminals, we say that $\tset$ is \emph{flow-well-linked} in $G$, iff any matching $\mset$ on $\tset$ can be fractionally routed with congestion at most $2$ in $G$.
\end{definition}

The next theorem follows from previous work of Chekuri, Khanna and Shepherd~\cite{ANF,CKS}, and we provide a short proof sketch in Appendix for completeness.

\begin{theorem}\label{thm: starting point}
Suppose we are given a graph $G=(V,E)$ and a set $\mset$ of $k$ source-sink pairs in $G$. Then we can efficiently partition $G$ into a collection $G_1,\ldots,G_{\ell}$ of vertex-disjoint induced sub-graphs, and compute, for each $1\leq i\leq \ell$, a collection $\mset_i\sse \mset$ of source-sink pairs contained in $G_i$, such that:

\begin{itemize}
\item $\sum_{i=1}^{\ell}|\mset_i|=\Omega(\opt/\log^2k)$.

\item For each $1\leq i\leq \ell$, if $\tset_i$ denotes the set of terminals participating in pairs in $\mset_i$, then $\tset_i\sse V(G_i)$, and moreover $G_i$ is flow-well-linked for $\tset_i$.
\end{itemize}
\end{theorem}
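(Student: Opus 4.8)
The goal is to decompose $G$ into vertex-disjoint induced subgraphs, each flow-well-linked for its retained terminals, while losing only an $O(\log^2 k)$ factor in the total number of demand pairs. The natural starting point is the LP-relaxation for \EDP. I would first solve the LP to obtain a fractional solution of value $\opt$. A standard grouping/rounding step lets me assume that every terminal either has $x_i \geq 1/2$ or is discarded, so that after scaling we have a fractional multicommodity flow that routes $\Omega(\opt)$ units among the demand pairs with congestion $O(1)$; renaming, we now have a flow-well-linked-type condition for the full terminal set but possibly with large congestion, which is not yet what we want. The real work is to fix this via a recursive well-linked decomposition.

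\textbf{The recursive decomposition.} I would proceed as in Chekuri--Khanna--Shepherd~\cite{ANF,CKS}: maintain a current subgraph $G'$ (initially $G$) together with a set $\tset'$ of terminals it must serve, and check whether $\tset'$ is flow-well-linked in $G'$. Equivalently, set up the sparsest-cut instance on $G'$ with the terminals of $\tset'$ as unit-weight vertices. If the sparsity is above the relevant threshold (roughly $\Omega(1)$ after accounting for the flow-cut gap $\beta(k) = \Theta(\log k)$ and the ARV approximation $\alphasc(k) = O(\sqrt{\log k})$), then by the flow-cut gap every matching on $\tset'$ routes with congestion at most $2$, so $\tset'$ is flow-well-linked in $G'$ and we output $(G', \tset')$ as one of the pieces. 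Otherwise \algsc{} returns a sparse cut $(S, \nots)$; we split $G'$ along this cut into $G'[S]$ and $G'[\nots]$, assign each terminal to the side containing it, and recurse on both sides. To control the terminal loss, I would use the standard charging argument: each sparse cut $(S,\nots)$ with $|E(S,\nots)| < \Psi \cdot \min\{|\tset' \cap S|, |\tset' \cap \nots|\}$ is charged to the smaller terminal side, and since a terminal can be on the smaller side at most $O(\log k)$ times before its cluster shrinks to a single terminal, the total number of terminals lost over the whole recursion is at most an $O(\log k)$ fraction of the cut edges crossed, which in turn (using well-linkedness of the LP flow) is at most an $O(\log k)$ fraction of $\opt$ — giving the claimed $\Omega(\opt/\log^2 k)$. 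One subtlety: after decomposition, I must re-pair the retained terminals into actual source-sink pairs $\mset_i$ from the original $\mset$; this is handled by only retaining a pair when \emph{both} its endpoints survive in the same cluster, and arguing (as in \cite{CKS}) that a constant fraction of terminals can be matched up this way, or alternatively by working with a single representative terminal per original pair throughout.

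\textbf{Main obstacle.} The delicate point is the interaction between the two sources of loss — the LP rounding and the recursive cut-charging — together with the bookkeeping needed to convert "flow-well-linked for a terminal \emph{set}" into "retaining genuine demand \emph{pairs}." One has to be careful that the sparsity threshold chosen is high enough that the final pieces truly satisfy the congestion-$2$ routing guarantee (this forces the threshold to absorb both $\beta(k)$ and the ARV factor), yet low enough that the charging argument still only costs $O(\log k)$ per level. Since the statement only claims $\Omega(\opt/\log^2 k)$ with a generous constant, there is slack, and the cleanest route is simply to cite \cite{ANF,CKS} for the core decomposition lemma and supply only the short glue argument; indeed the theorem statement explicitly says it "follows from previous work," so I would keep the proof to a sketch that (i) recalls the LP, (ii) recalls the well-linked decomposition of \cite{CKS} with its $O(\log^2 k)$ loss, and (iii) notes that the resulting well-linked clusters satisfy the flow-well-linked (congestion $\leq 2$) property by the flow-cut gap, possibly after one more level of grouping to boost the well-linkedness constant to the needed value.
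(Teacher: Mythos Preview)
Your proposal is broadly correct and lands in the same place as the paper, but the paper structures the argument differently and you should be aware of the distinction. You propose to run the sparsest-cut recursion yourself and then do a grouping step at the end to boost the well-linkedness constant. The paper instead invokes Theorem~2.1 of~\cite{CKS} as a black box: that theorem already returns pieces $G_i$ together with \emph{weight functions} $\pi_i$ on the terminals such that each $G_i$ is $\pi_i$-flow-well-linked (the product-multicommodity-flow notion) and $\sum_i \pi_i(\tset'_i)=\Omega(\opt/\log^2 k)$. The entire content of the paper's proof sketch is then the conversion from this weighted guarantee to the unweighted ``any matching routes with congestion $2$'' guarantee while retaining genuine demand pairs. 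Concretely, the paper applies the grouping theorem with the weights $\pi_i$ and parameter $p=2$, greedily selects one demand pair at a time (removing from $\mset'_i$ all pairs touching the at most two groups hit by the selected pair, and charging their $\pi_i$-weight---at most $24$---to the selected pair), and finally verifies congestion-$2$ routability by spreading each terminal's unit of flow over its group via the grouping tree and then using the $\pi_i$-flow-well-linked routing.

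So both arguments use ``CKS decomposition $+$ grouping,'' but the division of labor differs: you put the work into the recursion and treat grouping as an afterthought, whereas the paper treats the recursion as a citation and makes the grouping/selection step the substance. One point where your sketch is vaguer than it should be is exactly the step the paper spells out: how to go from ``terminal set is well-linked'' to ``a large subset of \emph{original demand pairs} is retained and flow-well-linked.'' Your suggestion to ``only retain a pair when both endpoints survive in the same cluster'' is necessary but not sufficient---you still need the greedy one-representative-per-group selection (or an equivalent argument) to get the congestion bound down to $2$, and that is where the $\pi_i$-weights and the grouping trees are actually used.
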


We now proceed to solve the problem on each one of the graphs $G_i$ separately. In order to simplify the notation, we denote the graph $G_i$ by $G$, the set $\mset_i$ of the source-sink pairs by $\mset$, and the set of terminals by $\tset$. For simplicity, we denote $|\mset|=k$. Recall that $G$ is flow-well-linked for $\tset$, the degree of every terminal in $\tset$ is $1$, and the maximum vertex degree in $G$ is at most $4$. It is now enough to prove that we can route $\Omega\left(\frac{k}{\log^{23.5}k\log\log k}\right )$ demand pairs in $\mset$ with congestion at most $14$.
We also assume that $k>k_0$, where $k_0$ is a large enough constant: otherwise, we can simply pick any source-sink pair $(s,t)\in \mset$, connect it with any path $P$ and output this as a solution. In particular, we will assume that $k>\log^{24}k$, and $\gkrv(k)=\Theta(\log^2k)>20$.

\subsection{Legal Contracted Graph}
Let $\gamma=\gkrv(k)=\Theta(\log^2k)$.
We use a parameter $k_1=\frac k{192\gamma^3\log \gamma}=\Omega\left (\frac k{\log^6k\log\log k}\right )$.

 We will maintain, throughout the algorithm, a graph $G'$, obtained from $G$ by contracting some subsets of non-terminal vertices of $G$. Specifically, we say that $G'$ is a \emph{legal contracted graph} for $G$, iff the following conditions holds:
 
 \begin{itemize}
 \item The set $V(G')$ is partitioned into two subsets, $V_1\sse V(G)$ containing the original vertices of $G$, and $V_2=V(G')\setminus V_1$, containing super-nodes $v_C$, for $C\sse V(G)$. The subsets $V_1$ and $\set{C}_{v_C\in V_2}$ of vertices of $G$ are all pairwise disjoint, and $\tset\sse V_1$.
 
 \item Graph $G'$ can be obtained from graph $G$ by contracting each cluster in set $\set{C\mid v_C\in V_2}$ into the super-node $v_C$ (we delete all self-loops, but we do not delete parallel edges). 

\item For each super-node $v_C\in V_2$,  $|\out_G(C)|\leq k_1$, and set $C$ is $\alphaWL(k)$-well-linked in graph $G$ for the edges in $\out_G(C)$. 
\end{itemize}

Notice that graph $G'$ may have parallel edges, and it remains flow-well-linked for the set $\tset$ of terminals. Also, since the maximum vertex degree in $G$ is constant, the maximum vertex degree in $G'$ is at most $k_1$, and each terminal has degree $1$. Each edge in graph $G'$ corresponds to some edge in the original graph $G$, and we will not distinguish between them. In particular, for every vertex subset $S'\sse V(G')$, if $S\sse V(G)$ is the corresponding subset of vertices in $G$, where every super-node $v_C\in S'\cap V_2$ is replaced by the vertices of $C$, then there is a one-to-one mapping between $\out_{G'}(S')$ and $\out_G(S)$, and we will identify the edges in these two sets, that is, $\out_{G'}(S')=\out_G(S)$.
We need the following simple claim.
\begin{claim}\label{claim: legal graph has many edges}
If $G'$ is a legal contracted graph for $G$, then $G'\setminus \tset$ contains at least $k/6$ edges.
\end{claim}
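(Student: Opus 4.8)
The plan is to lower-bound the number of edges in $G' \setminus \tset$ by relating it to quantities in the original graph $G$, using the structural properties we already have: every terminal has degree $1$, the maximum degree in $G$ is $4$, and $G$ is flow-well-linked for $\tset$. First I would observe that $|\tset| = k$ (each of the $k$ demand pairs contributes two distinct terminals, but after our w.l.o.g.\ reductions each terminal participates in exactly one pair, so $|\tset| = 2k$ — here one must be careful, but the cleanest route is to use $|\tset| \ge k$, which is all we need). Since each terminal has degree exactly $1$ in $G$, there are exactly $|\tset|$ edges incident to $\tset$, and these are the edges of $\out_G(\tset)$ (there are no edges inside $\tset$ since terminal degree is $1$ and $k$ is large). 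Each such edge, in $G'$, connects a terminal in $V_1$ to some other vertex of $G'$ (a non-terminal original vertex or a super-node $v_C$; note no super-node contains a terminal), so these $|\tset|$ edges all lie outside $G' \setminus \tset$ but they tell us how many vertices/edges the ``rest'' of the graph must support.

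The key step is to argue the complement $V(G') \setminus \tset$ carries enough edges. Here I would use flow-well-linkedness: since $G$ is flow-well-linked for $\tset$, any matching on $\tset$ can be fractionally routed with congestion at most $2$, which forces $G$ (hence $G'$) to be reasonably well-connected. A clean way to proceed: pick an arbitrary perfect (or near-perfect) matching $\mset$ on $\tset$; it routes with congestion $2$, so in particular for any cut $(S, \nots)$ of $G'$ with all terminals of $S$ matched to terminals of $\nots$, we get $|E_{G'}(S,\nots)| \ge |\mset|/2 = \Omega(k)$. Applying this with $S$ chosen so that $|S \cap \tset| \approx |\nots \cap \tset| \approx k/2$ (possible since we can design the matching to pair up the two halves), and noting that the cut edges either are incident to terminals (at most $|\tset|$ of them, but more usefully at most the number of terminals on the smaller side crossed) or lie entirely in $G' \setminus \tset$ — actually since terminal degree is $1$, at most $|S\cap\tset| + |\nots\cap\tset|$ cut edges touch $\tset$, and the rest are in $G'\setminus\tset$. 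A more direct counting: the total number of edges of $G'$ incident to $\tset$ is exactly $|\tset|$, so $|E(G')| = |E(G'\setminus\tset)| + |\tset|$, and I need a lower bound on $|E(G')|$ itself. Since $G'$ is connected (it's obtained from connected $G$ — or at least the component containing the terminals is what matters) and flow-well-linked for $\tset$, and since contracting clusters only decreases edge count, the simplest argument is: $|E(G')| \ge |E(G)| - \sum_{v_C \in V_2} |E_G(C)|$; combined with the degree bound $|E(G)| \ge |\tset|/1 \cdot(\text{something})$...

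Actually the cleanest path, and the one I expect the authors take: use that $G$ is flow-well-linked for $\tset$ to deduce the graph $G[V \setminus \tset]$ (the non-terminals) must be large, because each terminal's single edge leads into it and these edges must ``fan out'' enough to support a congestion-$2$ routing of a matching of size $k$. Concretely, a congestion-$2$ fractional routing of a perfect matching on $\tset$ uses flow-paths whose interiors lie in $V\setminus\tset$; total flow is $\ge k/2$, each non-terminal edge carries $\le 2$, so the number of edges in $G\setminus\tset$ that carry flow, counted with the length of paths... this gives $|E(G\setminus\tset)| \ge$ (total flow)$\cdot$(avg path length)$/2 \ge k/4$ even assuming average path length $\ge 1$. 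Then since contracting a well-linked cluster $C$ with $|\out_G(C)| \le k_1$ replaces $E_G(C)$ edges by $0$ but these clusters are vertex-disjoint and contain no terminals, the total edges destroyed is bounded, and one checks $|E(G'\setminus\tset)| \ge |E(G\setminus\tset)| - (\text{edges inside contracted clusters})$. The main obstacle is the bookkeeping of exactly how many edges are lost to contraction and getting the constant to come out to $k/6$; I would handle this by noting each contracted cluster is $\alphaWL(k)$-well-linked, so it cannot be too ``dense'' relative to its boundary — but since we only need a lower bound, the crude bound that every edge of $G\setminus\tset$ either survives in $G'\setminus\tset$ or is internal to some contracted cluster, combined with a direct flow-based lower bound of $k/4$ on $|E(G\setminus\tset)|$ applied \emph{after} noting contraction of well-linked clusters preserves flow-well-linkedness (so we may as well run the flow argument directly in $G'$), should close the gap: run the matching-routing argument in $G'$ itself, where $G'\setminus\tset$ has the routing interiors, yielding $|E(G'\setminus\tset)| \ge k/4 > k/6$.
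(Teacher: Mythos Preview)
Your final argument---route an arbitrary matching on $\tset$ in $G'$ with congestion $2$, observe that flow-path interiors lie in $V(G')\setminus\tset$, and conclude $|E(G'\setminus\tset)|\ge k/4$---has a real gap. Interior \emph{vertices} lying in $V(G')\setminus\tset$ does not force the path to use any \emph{edge} of $G'\setminus\tset$: if $t$ and $t'$ share their unique neighbor $u$, the path $t\!-\!u\!-\!t'$ has interior $\{u\}\subseteq V(G')\setminus\tset$ but both of its edges are incident to terminals, so it contributes nothing to $E(G'\setminus\tset)$. This is not a corner case in $G'$: a super-node $v_C$ may have degree up to $k_1$, so a single super-node can be the unique neighbor of $\Theta(k_1)$ terminals, and an arbitrary matching may pair many of them together. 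Your inequality ``$|E(G'\setminus\tset)|\ge(\text{total flow})\cdot(\text{avg path length})/2$'' would need the average number of $G'\setminus\tset$-edges per path to be at least $1$, which you have not established.

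The paper fixes exactly this point. It groups the terminals by their unique neighbor in $G'$ (each group has size at most $k_1$ since the maximum degree in $G'$ is $k_1$), then greedily partitions $\tset$ into $X,Y$ with $|X|,|Y|\ge k/3$ so that every group lies entirely in $X$ or entirely in $Y$. Now any flow path from a terminal in $X$ to a terminal in $Y$ connects two terminals with \emph{different} neighbors; since terminals have degree $1$, such a path cannot pass through another terminal, so its interior contains at least two distinct non-terminal vertices and hence at least one edge of $G'\setminus\tset$. Routing $k/3$ units from $X$ to $Y$ with congestion $2$ then gives $|E(G'\setminus\tset)|\ge (k/3)/2=k/6$. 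The grouping step is the missing idea in your sketch.
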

\begin{proof}
For each terminal $t\in \tset$, let $e_t$ be the unique edge adjacent to $t$ in $G'$, and let $u_t$ be the other endpoint of $e_t$. We partition the terminals in $\tset$ into groups, where two terminals $t,t'$ belong to the same group iff $u_t=u_{t'}$. Let $\gset$ be the resulting partition of the terminals. Since the degree of every vertex in $G'$ is at most $k_1$, each group $U\in \gset$ contains at most $k_1$ terminals. Next, we partition the terminals in $\tset$ into two subsets $X,Y$, where $|X|,|Y|\geq k/3$, and for each group $U\in \gset$, either $U\sse X$, or $U\sse Y$ holds. It is possible to find such a partition by greedily processing each group $U\in \gset$, and adding all terminals of $U$ to one of the subsets $X$ or $Y$, that currently contains fewer terminals. Finally, we remove terminals from set $X$ until $|X|=k/3$, and we do the same for $Y$. Since graph $G'$ is flow-well-linked for the terminals, it is possible to route $k/3$ flow units from the terminals in $X$ to the terminals in $Y$, with congestion at most $2$. Since no group $U$ is split between the two sets $X$ and $Y$, each flow-path must contain at least one edge of $G'\setminus \tset$. Therefore, the number of edges in $G'\setminus \tset$ is at least $k/6$.
\end{proof}

\subsection{Families of Good Vertex Subsets}
We define a good family of vertex subsets in graph $G$. We then proceed in two steps. First, we show that we can efficiently find a good family of vertex subsets in graph $G$. Next, we show that given such good family, we can find the desired routing of the source-sink pairs in $\mset$.

 \begin{definition}
 We say that a subset $S\sse V(G)\setminus \tset$ of vertices is a \emph{good subset} iff there is a subset $\Gamma\sse \out_G(S)$ of edges, with $|\Gamma|=k_1$, such that:
 
 \begin{itemize} 
 \item $S$ is $\alphawl(k)$-well-linked for $\Gamma$. That is, for any partition $(X,Y)$ of $S$, if $\Gamma_X=\Gamma\cap \out(X)$ and $\Gamma_Y=\Gamma\cap \out(Y)$, then $|E_G(X,Y)|\geq \alphawl (k)\cdot \min\set{|\Gamma_X|,|\Gamma_Y|}$.
 
 \item  There is a flow $F$ in graph $G$, where every edge $e\in \Gamma$ sends one flow unit to a distinct terminal $t_e\in\tset$  (so for $e\neq e'$, $t_e\neq t_{e'}$), and the congestion caused by $F$ is at most $2\beta(k)/\alphawl(k)=O(\log^{4.5}k)$.
\end{itemize}

We say that a family $\fset=\set{S_1,\ldots,S_\gamma}$ of $\gamma=\gkrv(k)=\Theta(\log^2 k)$ subsets of vertices is \emph{good} iff each subset $S_j$ is a good subset of vertices of $G$, and $S_1,\ldots,S_{\gamma}$ are pairwise disjoint.
 \end{definition}
 
We view the subset $\Gamma\sse \out_G(S)$ of edges as part of the definition of a good subset of vertices. In particular, when we say that we are given a good family $\fset=\set{S_1,\ldots,S_{\gamma}}$ of vertex subsets, we assume that we are also given the corresponding subsets $\Gamma_j\sse \out_G(S_j)$ of edges, for all $1\leq j\leq \gamma$.
 We use the next theorem to find a good family of vertex subsets in $G$.

 \begin{theorem}\label{thm: find a good family or a contractible set}
 Let $G'$ be a legal contracted graph for $G$. Then there is an efficient randomized algorithm that w.h.p. either returns a good family $\fset=\set{S_1,\ldots,S_{\gamma}}$ of vertex subsets in $G$, together with the corresponding subsets $\Gamma_j\sse \out_G(S_j)$ of edges for all $1\leq j\leq \gamma$, or finds a legal contracted graph $G''$ for $G$, with $|E(G'')|<|E(G')|$.
 \end{theorem}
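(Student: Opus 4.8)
The plan is to process the legal contracted graph $G'$ by a randomized partition into $\gamma$ parts, then run a well-linked decomposition inside each part, and argue that either every part yields a good vertex subset, or one of the decomposition steps produces a strictly smaller legal contracted graph.

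First I would recall the structure we have available. The graph $G'$ is legal, so $G' \setminus \tset$ has at least $k/6$ edges by Claim~\ref{claim: legal graph has many edges}, its maximum degree is at most $k_1$, and $G'$ is flow-well-linked for $\tset$. The parameter choices $\gamma = \Theta(\log^2 k)$, $k_1 = k/(192\gamma^3 \log\gamma)$, and $\alphawl(k) = \Omega(1/\log^{3.5}k)$ are set up precisely so that $\gamma \cdot k_1 \ll k$ with room to spare, and so that the flow-cut gap $\beta(k) = \Theta(\log k)$ and the well-linked decomposition overhead $1 + 1/(64\gamma)$ from Corollary~\ref{corollary: weak well linked} are affordable. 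I would then take a uniformly random partition $V_1,\ldots,V_\gamma$ of $V(G') \setminus \tset$ (keeping terminals aside), so that each non-terminal vertex lands in $V_j$ independently with probability $1/\gamma$. Since $G'\setminus\tset$ has $\Omega(k)$ edges and each edge stays inside some $V_j$ with probability $1/\gamma$, a Chernoff/Markov argument shows that w.h.p. many of the $V_j$ each induce $\Omega(k/\gamma^2)$ edges in $G'$, which is comfortably more than $k_1$; this is what lets us ``recover'' a good subset of size $k_1$ from each $V_j$.

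Next, for each $j$ I would try to recover a good subset $S_j \subseteq V(G)$ from $V_j$. Let $S_j^0 \subseteq V(G)$ be obtained by expanding every super-node $v_C \in V_j$ back into its cluster $C$; because each such $C$ is already $\alphawl(k)$-well-linked for $\out_G(C)$, the only obstruction to $S_j^0$ being well-linked is a cut that separates the inter-cluster edges. I would apply the oracle-based well-linked decomposition of Theorem~\ref{thm: oracle-well-linked}/Corollary~\ref{corollary: weak well linked} to $S_j^0$ with parameter $k_1$: either it runs to completion and certifies that (a large piece of) $S_j^0$ is $(k_1,\alpha(k))$-well-linked, in which case, after passing through the $\algsc$ approximation, we get an $\alphawl(k)$-well-linked set together with a set $\Gamma_j$ of $k_1$ edges of $\out_G(S_j)$; or the decomposition produces a $(k_1,\alpha)$-violating cut, which I would use to refine the clustering $\cset$ and produce a new legal contracted graph $G''$. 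The key accounting point is that whenever we split a cluster along a violating cut, $\sum_C |\out_G(C)|$ grows by at most a $(1 + 1/(64\gamma))$ factor but the total edge count strictly drops (the violating cut has $|E(X,Y)| < \alpha \cdot \min\{|T_X|,|T_Y|\}$, so contracting the two pieces removes edges) — so $|E(G'')| < |E(G')|$ as required. I also need the second bullet of ``good subset'': that the $k_1$ edges of $\Gamma_j$ can route $k_1$ units to distinct terminals with congestion $O(\log^{4.5}k)$; this follows because $G'$ (hence $G$ up to the well-linked clusters) is flow-well-linked for $\tset$, $|\Gamma_j| = k_1$ is small relative to $k$, and routing each of $k_1 \leq k$ edges to a distinct terminal is a matching-type routing whose congestion is bounded by $2\beta(k)/\alphawl(k)$ via the well-linkedness, with integrality of flow giving the distinct-terminal path collection.

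Finally I would put the pieces together: do the random partition once; if for \emph{every} $j$ the recovery step succeeds we output $\fset = \{S_1,\ldots,S_\gamma\}$ with its edge sets $\{\Gamma_j\}$, and one checks the $S_j$ are pairwise disjoint because the $V_j$ are disjoint (the expansions $S_j^0$ are disjoint since the clusters $C$ and the original vertices are disjoint, and the decomposition only shrinks each $S_j^0$); if for \emph{some} $j$ the recovery step fails, we output the resulting $G''$. The probability that the random partition leaves enough edges in each part, simultaneously for all $\gamma$ parts, is $1 - 1/\poly(n)$ by a union bound over the $\gamma = \poly\log k$ parts, each failing with inverse-polynomial probability. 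I expect the main obstacle to be the bookkeeping in the well-linked decomposition step: making sure that when it fails we genuinely get a \emph{legal} contracted graph (every new cluster still $\alphawl(k)$-well-linked with $|\out| \le k_1$, no terminals absorbed) and that the edge count strictly decreases, and simultaneously that when it succeeds we can still extract a full $k_1$-edge set $\Gamma_j$ from the (possibly shrunken) well-linked piece — this requires the initial bound on $|E_{G'}(V_j)|$ to beat $k_1$ by a wide enough margin to absorb both the $(1+1/(64\gamma))$ decomposition loss and the loss of passing to a single well-linked sub-cluster.
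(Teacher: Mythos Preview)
Your high-level plan (random $\gamma$-partition of $V(G')\setminus\tset$, then per-part recovery or fallback to a smaller legal contracted graph) matches the paper, and the concentration step is essentially right. But the recovery step has a real gap in how $\Gamma_j$ is obtained and how the flow-to-terminals property is established.

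You treat ``find a well-linked piece'' and ``verify the flow to terminals'' as separate, and your argument for the latter is wrong: flow-well-linkedness of $\tset$ lets you route matchings \emph{between terminals} with congestion $2$; it does not let you route from an arbitrary edge set $\Gamma_j\subseteq\out_G(S_j)$ to distinct terminals, and the bound $2\beta(k)/\alphawl(k)$ you invoke is the cost of routing \emph{inside} an $\alphawl$-well-linked cluster, not a consequence of global terminal well-linkedness. In the paper the two steps are fused: for the current candidate $S$, one first tries to send $k_1$ flow units from $\out_{\tilde G}(S)$ to $\tset$ in the maintained contracted graph $\tilde G$. If such a flow exists, the $k_1$ path-starts \emph{define} $\Gamma$, well-linkedness is tested only for this specific $\Gamma$ (not for all of $\out(S)$, which may be much larger than $k_1$), and the congestion-$2\beta(k)/\alphawl(k)$ routing in $G$ is obtained by pushing those paths through the $\alphawl$-well-linked contracted clusters; a failure to push through some active cluster yields a violating cut there. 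If the $k_1$-unit flow does \emph{not} exist, a cut $(X,Y)$ with $|E(X,Y)|<k_1$ separates $v_S$ from $\tset$; the paper un-contracts $X$ and applies Corollary~\ref{corollary: weak well linked} to it (here $|\out|\le k_1$ is guaranteed, so the corollary applies and all resulting clusters have $|\out|\le k_1$). Your version applies the decomposition directly to $S_j^0$, but $|\out_G(S_j^0)|=|\out_{G'}(V_j)|$ is $O(m/\gamma)$ with no a priori bound by $k_1$ (or even $k$), so Corollary~\ref{corollary: weak well linked} does not apply and you have no way to guarantee the legality condition $|\out|\le k_1$ for the new clusters. Finally, your edge-count argument (``contracting the two pieces removes edges'') is not the actual accounting: the strict decrease $|E(G'')|<|E(G')|$ comes from comparing the $(1+1/(64\gamma))^2$ blow-up of $|\out_{G'}(X_j)|$ produced by the two nested decompositions against the internal edges $|E_{G'}(X_j)|>|\out_{G'}(X_j)|/(20\gamma)$ that disappear, which is why the concentration step must control both $|\out_{G'}(X_j)|$ and $|E_{G'}(X_j)|$ simultaneously.
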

 \begin{proof}
 Let $m$ be the number of edges in $G'\setminus \tset$. From Claim~\ref{claim: legal graph has many edges}, $m\geq k/6$.
 The proof consists of two steps. First, we randomly partition the vertices in $G'\setminus \tset$ into $\gamma$ subsets $X_1,\ldots,X_{\gamma}$. We show that with high probability, for each $1\leq j\leq \gamma$, $|\out_{G'}(X_j)|< \frac{10m}{\gamma}$, while  the number of edges with both endpoints in $X_j$, $|E_{G'}(X_j)|\geq \frac{m}{2\gamma^2}$. Therefore, $|E_{G'}(X_j)|>\frac{|\out_{G'}(X_j)|}{20\gamma}$ w.h.p. For each $j: 1\leq j\leq \gamma$, we then try to recover a good subset $S_j$ of vertices from the cluster $X_j$. If we succeed, then we obtain a good family $\fset=\set{S_1,\ldots,S_{\gamma}}$ of vertex subsets. If we fail to recover a good vertex subset for some $1\leq j\leq \gamma$, then we will produce a legal contracted graph $G''$ containing fewer edges than $G'$.
 
We start with the first part. We partition the vertices in $V(G')\setminus \tset$ into subsets $X_1,\ldots,X_{\gamma}$, where each vertex $v\in V(G')\setminus \tset$ selects an index $1\leq j\leq \gamma$ independently uniformly at random, and is then added to $X_j$. We need the following claim.

\begin{claim}\label{claim: random partition into gamma sets} With probability at least $\half$, for each $1\leq j\leq \gamma$, $|\out_{G'}(X_j)|< \frac{10m}{\gamma}$, while $|E_{G'}(X_j)|\geq \frac{m}{2\gamma^2}$.
\end{claim}
 
\begin{proof}
Let $H=G'\setminus\tset$.
Fix some $1\leq j\leq \gamma$. 
Let $\event_1(j)$ be the bad event that $\sum_{v\in X_j}d_{H}(v)\geq \frac{2m}{\gamma}\cdot \left (1+\frac 1 {\gamma}\right )$. In order to bound the probability of $\event_1(j)$, we define, for each vertex $v\in V(H)$, a random variable $x_v$, whose value is $\frac{d_{H}(v)}{k_1}$ if $v\in X_j$ and $0$ otherwise. Notice that $x_v\in [0,1]$, and the random variables $\set{x_v}_{v\in V(H)}$ are pairwise independent. Let $B=\sum_{v\in V(H)} x_v$. Then the expectation of $B$, $\mu_1=\sum_{v\in V(H)} \frac{d_{H}(v)}{\gamma k_1}=\frac{2m}{\gamma k_1}$. Using the standard Chernoff bound (see e.g. Theorem 1.1 in~\cite{measure-concentration}),

\[\prob{\event_1(j)}=\prob{B> \left (1+1/\gamma\right )\mu_1}\leq e^{-\mu_1/(3\gamma^2)}=e^{-\frac{2m}{3\gamma^3 k_1}}<\frac 1 {6\gamma}\]

since $m\geq k/6$ and $k_1=\frac{k}{192\gamma^3\log \gamma}$.

For each terminal $t\in \tset$, let $e_t$ be the unique edge adjacent to $t$ in graph $G'$, and let $u_t$ be its other endpoint. Let $U=\set{u_t\mid t\in \tset}$. For each vertex $u\in U$, let $w(u)$ be the number of terminals $t$, such that $u=u_t$. Notice that $w(u)\leq k_1$ must hold. We say that a bad event $\event_2(j)$ happens iff $\sum_{u\in U\cap X_j}w(u)\geq \frac k{\gamma}\cdot \left (1+\frac 1 {\gamma}\right )$. 
In order to bound the probability of the event $\event_2(j)$, we define, for each $u\in U$, a random variable $y_u$, whose value is $w(u)/k_1$ iff $u\in X_j$, and it is $0$ otherwise. Notice that $y_u\in [0,1]$, and the variables $y_u$ are independent for all $u\in U$. Let $Y=\sum_{u\in U} y_u$. The expectation of $Y$ is $\mu_2=\frac{k}{k_1\gamma}$, and event $\event_2(j)$ holds iff $Y\geq \frac{k}{k_1\gamma}\cdot  \left (1+\frac 1 {\gamma}\right )\geq \mu_2\cdot \left (1+\frac 1 {\gamma}\right )$. Using the standard Chernoff bound again, we get that:

\[\prob{\event_2(j)}\leq e^{-\mu_2/(3\gamma^2)}\leq e^{-k/(3k_1\gamma^3)}\leq \frac 1 {6\gamma}\]

since $k_1=\frac{k}{192\gamma^3\log \gamma}$. Notice that if events $\event_1(j),\event_2(j)$ do not hold, then:

\[|\out_{G'}(X_j)|\leq \sum_{v\in X_j}d_{H}(v)+\sum_{u\in U\cap X_j}w(u)\leq \left (1+\frac 1 {\gamma}\right )\left (\frac{2m}{\gamma}+\frac{k}{\gamma}\right )< \frac{10m}{\gamma}\]

since $m\geq k/6$.

Let $\event_3(j)$ be the bad event that $|E_{G'}(X_j)|< \frac{m}{2\gamma^2}$. We next prove that $\prob{\event_3(j)}\leq \frac 1 {6\gamma}$. We say that two edges $e,e'\in E(G'\setminus \tset)$ are \emph{independent} iff they do not share any endpoints. Our first step is to compute a partition $U_1,\ldots,U_r$ of the set $E(G'\setminus \tset)$ of edges, where $r\leq 2k_1$, such that for each $1\leq i\leq r$, $|U_i|\geq \frac m{4k_1}$, and all edges in set $U_i$ are mutually independent. In order to compute such a partition, we construct an auxiliary graph $Z$, whose vertex set is $\set{v_e\mid e\in E(H)}$, and there is an edge $(v_e,v_{e'})$ iff $e$ and $e'$ are not independent. Since the maximum vertex degree in $G'$ is at most $k_1$, the maximum vertex degree in $Z$ is bounded by $2k_1-2$. Using the Hajnal-Szemer\'edi Theorem~\cite{Hajnal-Szemeredi}, we can find a partition $V_1,\ldots,V_r$ of the vertices of $Z$ into $r\leq 2k_1$ subsets, where each subset $V_i$ is an independent set, and $|V_i|\geq \frac{|V(Z)|}{r}-1\geq \frac{m}{4k_1}$. The partition $V_1,\ldots,V_r$ of the vertices of $Z$ gives the desired partition $U_1,\ldots,U_r$ of the edges of $G'\setminus \tset$. For each $1\leq i\leq r$, we say that the bad event $\event_3^i(j)$ happens iff $|U_i\cap E(X_j)|< \frac{|U_i|}{2\gamma^2}$. Notice that if $\event_3(j)$ happens, then event $\event_3^i(j)$ must happen for some $1\leq i\leq r$. Fix some $1\leq i\leq r$. The expectation of $|U_i\cap E(X_j)|$ is $\mu_3=\frac{|U_i|}{\gamma^2}$. Since all edges in $U_i$ are independent, we can use the standard Chernoff bound to bound the probability of $\event_3^i(j)$, as follows:

\[\prob{\event_3^i(j)}=\prob{|U_i\cap E(X_j)|<\mu_3/2}\leq e^{-\mu_3/8}=e^{-\frac{|U_i|}{8\gamma^2}}\]

Since $|U_i|\geq \frac{m}{4k_1}$, $m\geq k/6$, $k_1=\frac{k}{192\gamma^3\log \gamma}$, and $\gamma=\Theta(\log^2k)$, this is bounded by $\frac{1}{12k_1\gamma}$. We conclude that $\prob{\event_3^i(j)}\leq \frac{1}{12k_1\gamma}$, and by using the union bound over all $1\leq i\leq r$, $\prob{\event_3(j)}\leq \frac{1}{6\gamma}$.

 Using the union bound over all $1\leq j\leq \gamma$, with probability at least $\half$, none of the events $\event_1(j),\event_2(j),\event_3(j)$ for $1\leq j\leq \gamma$ happen, and so for each $1\leq j\leq \gamma$, $|\out_{G'}(X_j)|< \frac{10m}{\gamma}$, and $|E_{G'}(X_j)|\geq\frac{m}{2\gamma^2}$ must hold.
\end{proof}
 

 Given a partition $X_1,\ldots,X_{\gamma}$, we can efficiently check whether the conditions of Claim~\ref{claim: random partition into gamma sets} hold. If they do not hold, we repeat the randomized partitioning procedure.  From Claim~\ref{claim: random partition into gamma sets}, we are guaranteed that w.h.p., after $\poly(n)$ iterations, we will obtain a partition with the desired properties. Assume now that we are given the partition $X_1,\ldots,X_{\gamma}$ of $V(G')\setminus \tset$, for which the conditions of Claim~\ref{claim: random partition into gamma sets} hold. Then for each $1\leq j\leq \gamma$, $|E_{G'}(X_j)|>\frac{|\out_{G'}(X_j)|}{20\gamma}$. Let $X'_j\sse V(G)\setminus \tset$ be the set obtained from $X_j$, after we un-contract each cluster, that is, for each super-node $v_C\in V_2\cap X_j$, we replace $v_C$ with the vertices of $C$. Notice that $\set{X'_j}_{j=1}^{\gamma}$ is a partition of $V(G)\setminus\tset$.
We now proceed as follows. For each $1\leq j\leq \gamma$, we perform a partitioning procedure for the set $X'_j$ of vertices. We say that this partitioning procedure is successful, iff we find a good subset $S_j\sse X'_j$ of vertices. Therefore, if the partitioning procedure is successful for all $j$, then we obtain a good family $(S_1,\ldots,S_{\gamma})$ of disjoint vertex subsets. If the partitioning procedure is not successful for some $j$, then we will produce a legal contracted graph $G''$ as required. 

We now describe the partitioning procedure for some $j: 1\leq j\leq \gamma$. 
Intuitively, we would like to perform a well-linked decomposition of the set $X'_j$ of vertices, using Theorem~\ref{thm: oracle-well-linked}, to obtain a partition $\wset_j$ of $X'_j$. If we could ensure that each set $W\in\wset_j$ has $|\out_G(W)|\leq k_1$, and it is $\alphaWL(k)$-well-linked, then we could simply obtain the graph $G''$ by first uncontracting all clusters $C$ with $v_C\in V_2\cap X_j$, and then contracting all clusters in $\wset_j$ into super-nodes. Since we are guaranteed that $\sum_{W\in \wset_j}|\out_G(W)|\leq |\out_G(X_j')|(1+\frac{1}{64\gamma})$, while $|E_{G'}(X_j)|>\frac{|\out_G(X'_j)|}{20\gamma}$, it is easy to verify that $|E(G'')|< |E(G')|$ would hold. There are two problems with this approach. First, in order to use Theorem~\ref{thm: oracle-well-linked}, we need an oracle for finding $(k,\alpha(k))$-violating cuts of sets. Second, even if we had such an oracle, we would not be able to guarantee that for each set $W\in \wset$, $|\out_G(W)|\leq k_1$.  On the other hand, if, for some set $W\in \wset_j$, $|\out_{G}(W)|\geq k_1$, then it is possible that $W$ is a good set, though this is not guaranteed. Our idea is to gradually perform the well-linked decomposition of the set $X_j'$, using Theorem~\ref{thm: oracle-well-linked}. 
We will maintain the current partition $\wset_j$ of $X_j'$ into clusters, and in addition, a partition of $\wset_j$ into two subsets: $\wset^1$ and $\wset^2$. Intuitively, $\wset^1$ contains all active clusters, that still participate in the well-linked decomposition procedure, and that we may still sub-divide into smaller clusters later, while $\wset^2$ contains inactive clusters.
In each iteration, we will select an arbitrary cluster $S\in \wset^1$, and check if $S$ is a good set of vertices. If so, then we declare the iteration successful, and stop the procedure. Otherwise, we will either obtain a $(k,\alpha(k))$-violating cut of some set $S'\in \wset_j$, or we will be able to perform a different well-linked decomposition step that will turn cluster $S$ into an inactive one. We now give a formal description of the partitioning procedure.

Throughout the partitioning procedure, we maintain a partition $\wset_j$ of the set $X_j'$ of vertices, where at the beginning $\wset_j=\set{X_j'}$. 
Set $\wset_j$ is in turn partitioned into two subsets: set $\wset^1$ of active clusters and set $\wset^2$ of inactive clusters. At the beginning, $\wset^1=\wset_j$, and $\wset^2=\emptyset$.
We also maintain a graph $\tilde{G}$, which is an ``almost legal'' contracted graph for $G$ in the following sense. The set $V(\tilde G)$ of vertices is partitioned into two subsets, $\tV_1=V(\tilde G)\cap V(G)$ and $\tV_2=V(\tilde G)\setminus \tV_1$, with $\tset\sse \tV_1$. Each vertex $v_C\in \tV_2$ is associated with a cluster $C\sse V(G)\setminus \tV_1$, and all subsets $\set{C}_{v_C\in \tV_2}$ of vertices are pairwise disjoint. As before, we can obtain $\tilde G$ from $G$, by contracting each cluster $C$ (where $v_C\in \tV_2$) into a super-node $v_C$, and deleting self-loops. For each cluster $S\in \wset^1$, there is a super-node $v_S\in \tV_2$. Let $V_2'=\set{v_S\mid S\in \wset^1}$ be the set of all such super-nodes. Then for each super-node $v_C\in \tV_2\setminus V_2'$, $|\out_G(C)|\leq k_1$, and $C$ is $\alphawl(k)$-well-linked for $\out_G(C)$ in graph $G$. In other words, graph $\tilde{G}$ is a legal contracted graph for $G$, except for the super-nodes $v_S$, where $S\in \wset^1$: for such nodes $v_S$, we are not guaranteed that $|\out_G(S)|\leq k_1$, or that $S$ is well-linked. However, if $\wset^1=\emptyset$, then $\tilde{G}$ is a legal contracted graph of $G$. We remark that for clusters $S\in \wset^2$, graph $\tilde G$ does not necessarily contain a super-node $v_S$, and it is possible that the vertices of $S$ are split among several super-nodes. We only maintain the set $\wset^2$ for accounting purposes. The initial graph $\tilde{G}$ is obtained from $G'$ as follows: we un-contract all super-nodes $v_C\in X_j$, and then contract all vertices of $X'_j$ into a single super-node $v_{X'_j}$. We set $\wset_j=\wset^1=\set{X'_j}$ and $\wset^2=\emptyset$.
While $\wset^1$ is non-empty, we select any cluster $S\in \wset^1$ and process it. At the end of this procedure, we will either declare that $S$ is a good set, or we will find a $(k,\alpha(k))$-violating cut of some cluster $S'\in \wset^1$, or $S$ will become inactive.

Let $S\in \wset^1$ be the current cluster. We try to send $k_1$ flow units from the edges of $\out_{\tilde{G}}(S)$ to  the terminals in $\tset$ in the current graph $\tilde{G}$  with no congestion. Two case are possible, depending on whether or not such flow exists.

\paragraph{Case 1:} Assume first that such flow exists. From the integrality of flow, there is a collection $\pset$ of $k_1$ edge-disjoint paths in $\tilde G$, each path connecting distinct edges in $\out_{\tilde{G}}(S)$ to distinct terminals in $\tset$. Let $\Gamma\sse \out_{\tilde{G}}(S)$ be the set of $k_1$ edges which serve as endpoints of paths in $\pset$. We set up an instance of the sparsest cut problem in graph $G[S]\cup \out_G(S)$, where the edges in set $\Gamma$ serve as terminals. We then run the algorithm \algsc on the resulting instance. If the algorithm returns a cut $(X,Y)$ of sparsity less than $\alpha(k)$, then $(X,Y)$ is a $(k,\alpha(k))$-violating cut for $S$. We then replace $S$ with $X$ and $Y$ in $\wset_j$ and in $\wset^1$. We also update the current graph $\tilde{G}$, by first un-contracting the super-node $v_S$, and then contracting the two clusters $X$ and $Y$ into super-nodes $v_X$ and $v_Y$, respectively. This ends the current iteration, and we then proceed to process some new set in $\wset^1$. Assume now that algorithm  \algsc returns a cut whose sparsity is at least $\alpha(k)$. Then we are guaranteed that  $S$ is $\alphawl(k)=\alpha(k)/\alphasc(k)$-well-linked for $\Gamma$. Recall that we are given a set $\pset$ of $k_1$ edge-disjoint paths connecting the edges in $\Gamma$ to the terminals $\tset$ in graph $\tilde G$, where each path connects a distinct edge $e\in \Gamma$ to a distinct terminal $t_e\in \tset$. In order for $S$ to be a good set, a low-congestion flow connecting the edges in $\Gamma$ to the terminals must exist in the original graph $G$. We will try to find this flow, as follows. The flow will follow the paths in $\pset$, except that we need to specify how the flow is routed inside each cluster $C$ for $v_C\in \tV_2$. Observe that for each such cluster $C$, the paths in $\pset$ define a set $D_C$ of $1$-restricted demands on $\out_G(C)$. Moreover, the total number of edges in $\out_G(C)$ participating in the paths in $\pset$ is at most $k_1$, as there are only $k_1$ paths in $\pset$ and we can assume w.l.o.g. that they are simple. If $v_C\not\in V_2'$, then we are guaranteed that graph $G[C]\cup \out_G(C)$ is $\alphawl(k)$-well-linked for $\out_G(C)$. Therefore, we can route the set $D_C$ demands inside $G[C]\cup\out_G(C)$ with congestion at most $2\beta(k)/\alphawl(k)$. If $v_C\in V'_2$, then $C\in\wset^1$, and it is possible that we cannot route the set $D_C$ of demands inside $G[C]\cup \out_G(C)$ with congestion at most $2\beta(k)/\alphawl(k)$. We then proceed as follows. If, for each super-node $v_C\in V_2'$, we can route the set $D_C$ of demands inside $G[C]\cup \out_G(C)$ with congestion at most $2\beta(k)/\alphawl(k)$, then $S$ is a good set, and the $j$th iteration is successful. Otherwise, let $v_C\in V_2'$ be any super-node, for which such flow does not exist. Consider the instance of the sparsest cut problem defined on the graph $G[C]\cup\out_G(C)$, where the edges of $\out_G(C)$ with non-zero demand serve as terminals (recall that there are at most $k_1$ such edges). Then the value of the sparsest cut in this instance is at most $\alphawl(k)$, and so by applying algorithm \algsc on this instance of sparsest cut, we will obtain a $(k,\alpha(k))$-violating cut $(X,Y)$ for set $C$. We then remove $C$ from $\wset^1$ and from $\wset_j$, and add $X$ and $Y$ to $\wset^1$ and $\wset_j$ instead. We also update $\tilde G$ by un-contracting the super-node $v_C$ and contracting the clusters $X$ and $Y$ into super-nodes $v_X$ and $v_Y$, respectively, and end the current iteration. To conclude, if it is possible to send $k_1$ flow units with no congestion in graph $\tilde G$
between $\out_{\tilde G}(v_S)$ and $\tset$, then either $S$ is a good set, or we  find a $(k,\alpha(k))$-violating cut $(X,Y)$ of some cluster $C\in \wset^1$ (where possibly $C=S$).

\paragraph{Case 2:} Assume now that such flow does not exist. Then there is a cut $(X,Y)$ in graph $\tilde G$, where $\tset\sse Y$, $v_S\in X$, and $|E(X,Y)|< k_1$. (If $|\out_{\tilde G}(S)|< k_1$, then we set $X=\set{v_S}$). 
Let $A\sse V(G)\setminus \tset$ be the subset of vertices obtained from $X$ after we un-contract every super-node $v_C\in X$. Then $|\out_G(A)|<k_1$. We perform a well-linked decomposition of $A$, using Corollary~\ref{corollary: weak well linked}, and we denote the resulting partition of $A$ by $\wset(A)$. Recall that each set $C\in \wset(A)$ is guaranteed to be $\alphawl(k)$-well-linked, and $|\out_G(C)|<k_1$. Moreover, $\sum_{C\in \wset(A)}|\out_G(C)|\leq |\out_G(A)|\left(1+\frac 1 {64\gamma}\right )\leq |\out_G(S)|\left(1+\frac 1 {64\gamma}\right )$. We say that the cluster $S\in \wset^1$ is responsible for $A$, and for the partition $\wset(A)$ (we will eventually charge the edges in $\out_G(S)$ for the edges in $\bigcup_{C\in \wset(A)}\out_G(C)$). 
We update the graph $\tilde G$, by first un-contracting all super-nodes that belong to $X$, and then contracting each cluster $C\in \wset(A)$ into a super-node $v_C$. Also, for each vertex $v_C\in \wset^1$, if $v_C\in X$, then we move $C$ from $\wset^1$ to $\wset^2$, where it becomes an inactive cluster (notice that super-node $v_C$ may not exist in the new graph anymore, as the vertices of $C$ may end up being partitioned into several clusters by the contraction procedure). Observe that the cluster $S$ that is responsible for $A$ has been moved from $\wset^1$ to $\wset^2$ in the current iteration, and hence it becomes an inactive cluster.

This finishes the description of the decomposition procedure for $X_j$, for $1\leq j\leq \gamma$. In order to analyze it, it is enough to show that if this procedure was not declared successful, then the final graph $G''$, obtained at the end of the procedure, when $\wset^1=\emptyset$, contains fewer edges than $G'$. (We note that from the above discussion it is clear that $G''$ must be a legal contracted graph for $G$.) We bound the number of edges in $G''$ in two steps. First, we bound the number of edges in $\sum_{C\in \wset^2}|\out_G(C)|$. Observe that $\wset^2$ defines a partition of the set $X_j'$ of vertices of $G$. Moreover, this partition was obtained by performing an oracle-based well-linked decomposition of $X_j'$. Therefore, from Theorem~\ref{thm: oracle-well-linked}, $\sum_{C\in \wset^2}|\out_G(C)|\leq |\out_G(X_j')|\left (1+\frac 1 {64\gamma}\right )$.

 Next, we bound the number of edges in $G''$, by charging them to the edges of $\bigcup_{C\in \wset^2} \out_G(C)$. 
 Let $A_1,A_2,\ldots,A_{\ell}$ be all sets of vertices $A$ that were decomposed in iterations where Case 2 happened, in the order in which they were processed. Observe that all vertices of $X_j'$ are contained in $\bigcup_{i=1}^{\ell}A_i$, as all clusters in $\wset^2$ are contained in $\bigcup_{i=1}^{\ell}A_i$ (but  the sets $A_i$ are not necessarily disjoint). The set of edges of $G''$ can be partitioned into two subsets: $E_1=\set{e=(u,v)\mid e\in E(G')\cap E(G''); u,v\not\in X_j}$, and set $E_2$ containing all remaining edges. It is easy to see that $E_2\sse \bigcup_{i=1}^{\ell}(\bigcup_{C\in\wset(A_i)}\out_G(C))$. 
Indeed, let $e=(u,v)\in E_2$. Let $u',v'$ be the endpoints of the corresponding edge in the original graph $G$. Two cases are possible. If both $u,v\not\in X_j'$, then the only way that edge $e$ was added to the graph $\tilde{G}$ is when either $u'$ or $v'$ belonged to some set $A_i$. Let $i^*$ be the largest index for which $\set{u',v'}\cap A_{i^*}\neq \emptyset$. Then $e\in  \bigcup_{C\in\wset(A_{i^*})}\out_G(C)$ must hold. Otherwise, if at least one of the vertices (say $v'$) belongs to $X_j'$, then, since every vertex in $X_j'$ belongs to some inactive cluster at the end of the algorithm, there is at least one index $i$ such that $v'\in A_i$. Let $i^*$ be the largest index for which $\set{u',v'}\cap A_{i^*}\neq \emptyset$. Then $e\in  \bigcup_{C\in\wset(A_{i^*})}\out_G(C)$ must hold. Therefore, $E_2\sse \bigcup_{i=1}^{\ell}(\bigcup_{C\in\wset(A_i)}\out_G(C))$.

 Recall that for each set $A_i$, for $1\leq i\leq \ell$, we have a distinct cluster $S_i\in \wset^2$ responsible for $A_i$, and $\sum_{C\in \wset(A_i)}|\out_G(C)|\leq |\out_G(S_i)|\left(1+\frac 1 {64\gamma}\right )$
Therefore, the total number of edges in graph $G''$ is bounded by:
 
 \[\begin{split}|E(G'')|& \leq 
 |E(G')|-|E_{G'}(X_j)|-|\out_{G'}(X_j)|+|E_2|\\
 &\leq |E(G')|-|\out_{G'}(X_j)|\left(1+\frac 1 {20\gamma}\right )+\sum_{C\in \wset_2}|\out_G(C)|\left (1+\frac 1 {64\gamma}\right )
 \\&\leq |E(G')|-|\out_{G'}(X_j)|\left(1+\frac 1 {20\gamma}\right )+|\out_{G'}(X_j)|\left (1+\frac 1 {64\gamma}\right )^2\\
 &<|E(G')|
 \end{split}
 \]
 \end{proof}
 
 We are now ready to describe the algorithm for finding a good family of vertex subsets in graph $G$. We start with the graph $G'=G$, which is trivially a legal contracted graph, and repeatedly apply Theorem~\ref{thm: find a good family or a contractible set} to it. Since the number of edges in any legal contracted graph is at least $k/6$ by Claim~\ref{claim: legal graph has many edges}, we are guaranteed that after at most $|E(G)|$ iterations, the algorithm will produce a good family of vertex subsets w.h.p. We summarize the result of this section in the next corollary.
 
 \begin{corollary}
 There is an efficient randomized algorithm that w.h.p. computes a good family of vertex subsets in graph $G$.
 \end{corollary}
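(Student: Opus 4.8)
The plan is to iterate the algorithm of Theorem~\ref{thm: find a good family or a contractible set}, using the number of edges of the maintained legal contracted graph as a monotone progress measure. We begin with $G'=G$ itself: since $G$ has no super-nodes (formally $V_1=V(G)$, $V_2=\emptyset$, and all well-linkedness conditions are vacuous), the graph $G$ is trivially a legal contracted graph for itself, with $|E(G')|=|E(G)|\le\poly(n)$. We then repeatedly invoke Theorem~\ref{thm: find a good family or a contractible set} on the current graph $G'$.

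In each invocation, w.h.p.\ one of two things happens. Either the algorithm returns a good family $\fset=\set{S_1,\ldots,S_{\gamma}}$ of vertex subsets of $G$, together with the corresponding edge subsets $\Gamma_j\sse\out_G(S_j)$; in this case we halt and output this family. Or it returns a new legal contracted graph $G''$ for $G$ with $|E(G'')|<|E(G')|$; in this case we replace $G'$ by $G''$ and continue. Because the edge count strictly decreases in every unsuccessful iteration, and because, by Claim~\ref{claim: legal graph has many edges}, every legal contracted graph for $G$ has at least $k/6$ edges, the process must terminate after at most $|E(G)|\le\poly(n)$ iterations, and when it terminates it does so by producing a good family.

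Finally, each individual iteration of Theorem~\ref{thm: find a good family or a contractible set} succeeds (in the sense of its ``w.h.p.''\ guarantee, i.e.\ with probability $1-1/\poly(n)$) regardless of the outcomes of the previous iterations; since the total number of iterations is polynomial in $n$, a union bound shows that with high probability all of them are successful, and hence with high probability the overall algorithm outputs a good family of vertex subsets in $G$. The running time is polynomial, since each iteration runs in polynomial time and there are polynomially many of them.

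Since the only non-routine ingredient --- either finding a good family or else strictly shrinking the contracted graph --- is exactly the content of the already-established Theorem~\ref{thm: find a good family or a contractible set}, there is no real obstacle here; the one point to keep track of is that the progress measure is both monotone decreasing and bounded below (which is precisely what Claim~\ref{claim: legal graph has many edges} supplies), and that the union bound over the polynomially many iterations keeps the overall success probability at $1-1/\poly(n)$.
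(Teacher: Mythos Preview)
Your proof is correct and follows essentially the same approach as the paper: start with $G'=G$ (trivially a legal contracted graph), iterate Theorem~\ref{thm: find a good family or a contractible set}, use $|E(G')|$ as a strictly decreasing progress measure bounded below by $k/6$ via Claim~\ref{claim: legal graph has many edges}, and conclude termination within $|E(G)|$ iterations. Your explicit treatment of the union bound over the polynomially many iterations is a useful elaboration that the paper leaves implicit.
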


 \subsection{Finding the Routing}\label{subsec: from good family to routing}
 In this section we assume that we are given a good family $\fset=\set{S_1,\ldots,S_{\gamma}}$ of vertex subsets of $G$. For each $1\leq j\leq \gamma$, we are also given a subset $\Gamma_j\sse \out_G(S_j)$ of edges, such that $S_j$ is $\alphawl(k)$-well-linked for $\Gamma_j$, and there is a flow $F_j: \Gamma_j\connect_{\eta} \tset$, where each edge $e\in \Gamma_j$ sends one flow unit to a distinct terminal $t_e$, and the total congestion due to $F_j$ is at most $\eta=2\beta(k)/\alphawl(k)$.
 
 In order to find the final routing, we build an expander on a subset of terminals and embed it into graph $G$. More precisely, we select an arbitrary subset $\mset'\sse \mset$ of $k'/2$ source-sink pairs, where $k'=k/\poly\log k$. Let $\tset'\sse \tset$ be the subset of terminals participating in pairs in $\mset'$, and assume that $\tset'=\set{t_1,\ldots,t_{k'}}$.
 We construct an expander $X$ on the set $\set{v_1,\ldots,v_{k'}}$ of vertices, which is then embedded into the graph $G$ as follows. 
For each $1\leq i\leq k'$, we define a connected component $C_i$ in graph $G$, that represents the vertex $v_i$ of the expander. For each edge $e=(v_i,v_j)\in E(X)$, we define a path $P_e$, connecting a vertex of $C_i$ to a vertex of $C_j$ in $G$. We will ensure that each edge of $G$ may only appear in a small constant number of components $C_i$, and a small constant number of paths $P_e$. We also ensure that for each $1\leq i\leq k'$, terminal $t_i\in C_i$. We will think about the expander vertex $v_i$ as representing the terminal $t_i$. The idea is that any {\bf vertex-disjoint} routing of the terminal pairs in the expander $X$ can now be translated into a low edge-congestion routing in the original graph $G$. 

We now turn to describe the construction of the expander $X$ and the connected components $C_1,\ldots,C_{k'}$ that we use to embed $X$ into $G$. The construction exploits the good family $\fset=\set{S_1,\ldots,S_{\gamma}}$ of vertex subsets.
For each $1\leq i\leq k'$, we construct a collection $T_1,\ldots,T_{k'}$ of trees in graph $G$. Each such tree $T_i$ contains, for each $1\leq j\leq \gamma$, an edge $e_{i,j}\in \Gamma_j$. For each $1\leq j\leq \gamma$, the edges $e_{1,j},e_{2,j},\ldots,e_{k',j}$ are all distinct, and we think of the edge $e_{i,j}$ as the copy of the vertex $v_i\in V(X)$ for the set $S_j$. In other words, each tree $T_i$ spans $\gamma$ copies of the vertex $v_i$, one copy $e_{i,j}$ for each set $S_j\in \fset$. We will ensure that each edge of graph $G$ only participates in a constant number of such trees.
Additionally, we build a set $\pset=\set{P_t\mid t\in \tset'}$ of paths, where path $P_t$ connects the 
terminal $t$ to a distinct tree $T_i$ (so if $t\neq t'$, then $t$ and $t'$ are connected to different trees), and the total congestion caused by paths in $\pset$ is at most $4$. We rename the terminals in $\tset'$, so that $t_i$ denotes the terminal that is connected to the tree $T_i$. The final connected component $C_i$ is simply the union of the tree $T_i$ and the path $P_{t_i}$.

In order to construct the expander $X$ on the set $\set{v_1,\ldots,v_{k'}}$ of vertices, we use the cut-matching game of \cite{KRV}, where we use the sub-graph $G[S_j]$ of $G$ to route the $j$th matching between the corresponding copies $e_{1,j},e_{2,j},\ldots,e_{k',j}$ of the vertices $v_1,\ldots,v_{k'}$, respectively. Recall that we are only guaranteed that sets $\set{S_j}_{j=1}^{\gamma}$ are $\alphawl(k)$-well-linked for the edges in $\Gamma_j$, and so in order to route these matchings, we may have to incur the congestion of $\Omega(1/\alphaWL(k))$, which we cannot afford. However, this problem is easy to overcome by performing a suitable grouping of the edges of $\Gamma_j$.

The rest of the algorithm proceeds in three steps. In the first step, we perform groupings of the edges in the subsets $\Gamma_j$ for $1\leq j\leq \gamma$. In the second step, we construct the trees $T_1,\ldots,T_{k'}$. In the third step, we finish the construction of the expander $X$ and its embedding into $G$, and produce the final routing of a subset of demand pairs in $\mset'$.
 
 \paragraph{Step 1: Groupings.}
 In this step we compute, for each $1\leq j\leq \gamma$, a grouping of the edges in $\Gamma_j$. We then establish some properties of these groupings. We use the following two parameters: $p=8\beta(k)/\alphawl(k)=O(\log^{4.5}k)$ is the grouping parameter for the sets $\Gamma_j$. The second parameter, $k'=\frac{1}{2\gamma^3}\cdot \lfloor \frac{k_1}{6p}\rfloor=\Omega\left (\frac k {\log^{16.5}k\log\log k}\right )$
is the number of the vertices in the expander $X$ that we will eventually construct. We assume w.l.o.g. that $k'$ is even; otherwise we decrease its value by $1$.

Fix some $1\leq j\leq \gamma$. 
Since $G[S_j]\cup \out_G(S_j)$ is a connected graph, we can find a spanning tree $T_j$ of this graph, and perform a grouping of the edges in $\Gamma_j$ along this tree into groups whose size is at least $p$ and at most $3p$. Let $\gset_j$ be the resulting collection of groups, and let $k^*=\lfloor \frac{k_1}{6p}\rfloor $. For each group $U\in \gset_j$, let $T_j(U)$ be the sub-tree of the tree $T_j$ spanning the edges of $U$. For each group $U\in \gset_j$, we select one arbitrary representative edge, and we let $\Gamma'_j$ denote this set of representative edges. For each $e\in \Gamma'_j$, we denote by $U_e$ the group to which $e$ belongs. Additionally, let $U'_e\sse U_e$ be an arbitrary subset of $p$ edges of $U_e$, including $e$ itself. Notice that $|\Gamma'_j|\geq k^*$ must hold. If $|\Gamma'_j|>k^*$, then we discard edges from $\Gamma'_j$ arbitrarily, until $|\Gamma'_j|=k^*$ holds. This finishes the description of the grouping. The next theorem establishes some properties of the resulting groupings that will be used later. 

\ 

\begin{theorem}\label{thm: properties of grouping}

\ 
\begin{itemize}

\item For each $1\leq j\leq \gamma$, for any pair $X,Y\sse \Gamma'_j$ of edge subsets, where $|X|=|Y|$, there is a collection $\pset(X,Y):X\sconnect_{2} Y$ of paths contained in $G[S_j]\cup \out_G(S_j)$, where each path connects a distinct edge of $X$ to a distinct edge of $Y$, and the paths cause congestion at most $2$.

\item For all $1\leq i,j\leq \gamma$, there is a set $\pset_{i,j}:\Gamma'_i\sconnect_2\Gamma'_j$ of $k^*$ paths in graph $G$. That is, each path connects a distinct edge of $\Gamma'_i$ to a distinct edge of $\Gamma'_j$, with total congestion at most $2$.

\item Let $\Gamma^*_1\sse \Gamma'_1$ be any subset of $k'$ edges, $\mset'\sse \mset$ any subset of $k'/2$ source-sink pairs, and $\tset'$ the subset of terminals participating in pairs in $\mset'$. Then there is a set $\pset:\tset'\sconnect_4\Gamma^*_1$ of paths in $G$, each path  connecting a distinct terminal of $\tset'$ to a distinct edge of $\Gamma^*_1$, with total congestion at most $4$.
\end{itemize}
\end{theorem}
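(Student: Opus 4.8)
The plan is to prove all three bullets by a common template: construct a low-congestion \emph{fractional} flow that realizes the required routing, and then invoke integrality of flow together with flow decomposition to extract an integral one-to-one collection of paths whose congestion is the ceiling of the fractional bound. (In each case the fractional flow will send exactly one unit out of every source and into every sink, so its integral version decomposes into exactly that many source-to-sink paths, i.e.\ into the desired one-to-one path collection.) Two observations drive all the estimates. First, the grouping lets us trade one unit of flow out of a representative edge $e$ for $1/p$ units out of each of the $p$ edges of $U'_e$, paying only unit congestion on the edge-disjoint trees $\{T_j(U)\}$. Second, with $p=8\beta(k)/\alphawl(k)=4\eta$, scaling down by $p$ turns the congestion $\eta=2\beta(k)/\alphawl(k)$ of any ``expensive'' flow (the $F_i$'s, or a matching routed through an $\alphawl(k)$-well-linked $S_j$) into $1/4$.

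For the first bullet I would first route the pairs in $X\cap Y$ by trivial single-edge paths, and then handle $X\setminus Y$ and $Y\setminus X$, which are disjoint, of equal size, and made up of pairwise distinct groups. Spread each $e\in X\setminus Y$ over $U'_e$ along $T_j(U_e)$ and perform the mirror-image collection on the $Y\setminus X$ side: all the trees involved are distinct, so together they contribute congestion at most $1$. In the middle, route an arbitrary perfect matching between $\bigcup_{e\in X\setminus Y}U'_e$ and $\bigcup_{e'\in Y\setminus X}U'_{e'}$ inside $G[S_j]\cup\out_G(S_j)$; this matching involves at most $2k^*p\le k_1\le k$ edges of $\Gamma_j$, so $\alphawl(k)$-well-linkedness of $S_j$ for $\Gamma_j$ routes it with congestion at most $\eta$, and after scaling by $1/p$ this drops to $1/4$. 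The total fractional congestion is at most $1+1/4$, hence the integral routing has congestion at most $2$; the pendant edges carrying the $X\cap Y$ trivial paths belong to groups not used elsewhere and stay at congestion $1$.

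The second and third bullets follow the same scheme, now composing flows inside $G$ and passing through $\tset$. For the second, on the $S_i$ side spread each $e\in\Gamma'_i$ over $U'_e$ (tree congestion at most $1$, inside $G[S_i]\cup\out_G(S_i)$) and push to $\tset$ via $\frac1p F_i$ (congestion at most $\eta/p=1/4$), arriving at mass $1/p$ on each of the $k^*p$ distinct terminals $\{t_f: f\in\bigcup_{e\in\Gamma'_i}U'_e\}$; do the mirror image on the $S_j$ side; and in the middle route any transportation between these two terminal distributions. Because $S_i\cap S_j=\emptyset$, the two families of spreading trees use edge-disjoint parts of $G$ (their only $\out$-edges lie in $\Gamma_i$ and in $\Gamma_j$, which may be assumed disjoint, discarding a handful of conflicting edges otherwise), so their congestions do not add. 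The middle transportation is a bipartite fractional matching carrying mass $1/p$ at every terminal, hence a convex combination of integral matchings on $\tset$, each of which routes with congestion at most $2$ since $G$ is flow-well-linked for $\tset$; so it routes with congestion at most $2/p\ll 1$. Summing, the fractional congestion is at most $1+\frac14+\frac14+\frac2p<2$, giving $\pset_{i,j}:\Gamma'_i\sconnect_2\Gamma'_j$. The third bullet is identical except that $\tset'$ replaces the ``far'' endpoint: spread each $e\in\Gamma^*_1$ over $U'_e$ and push to $\tset$ via $\frac1p F_1$ (congestion so far at most $1+\frac14$), reaching mass $1/p$ on $k'p$ distinct terminals $A^*$, and then route a transportation collecting one unit at each terminal of $\tset'$ by splitting $A^*$ into $p$ subsets each in bijection with $\tset'$, routing the $\ell$-th bijection as a matching on $\tset$ with congestion at most $2$, scaling it by $1/p$, and overlaying the $p$ of them for a middle congestion of at most $p\cdot\frac2p=2$; the total fractional congestion is at most $1+\frac14+2<4$, yielding $\pset:\tset'\sconnect_4\Gamma^*_1$.

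The main obstacle is simply keeping the congestion budget tight enough: the bounds $2$ and $4$ leave essentially no slack, which is why the argument needs the exact relation $p=4\eta$, the observation that the two families of spreading trees live in edge-disjoint subgraphs, and the use of bipartiteness together with flow-well-linkedness of $\tset$ to route the reshuffling transportations almost for free. The set overlaps ($X\cap Y$, $A_i\cap A_j$, $\Gamma_i\cap\Gamma_j$, $A^*\cap\tset'$) are nuisances rather than real difficulties, disposed of by trivial paths or by discarding a negligible number of edges; and the passage from a fractional flow to an integral one-to-one path collection is the standard flow-integrality-plus-decomposition argument already invoked in the excerpt.
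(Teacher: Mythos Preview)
Your proposal is correct and follows essentially the same template as the paper's proof: spread each representative edge over its group $U'_e$ along the edge-disjoint trees $T_j(U)$, route the resulting ``expanded'' sets to one another via the appropriate well-linkedness or the flows $F_j$, scale down by $p$, and collect on the other side; then appeal to integrality of single-commodity flow.  The only noteworthy differences are cosmetic: for the first bullet you separate out $X\cap Y$ explicitly (the paper simply overlays spreading and collecting and asserts congestion $2$), and for the third bullet you build one fractional flow $\Gamma^*_1\to\tset'$ with congestion $<4$ and integralize once, whereas the paper first integralizes $\Gamma^*_1\to\tset''$ (congestion $2$) and then separately integralizes $\tset''\to\tset'$ via flow-well-linkedness (congestion $2$) before concatenating.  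One small caveat: your fix ``discarding a handful of conflicting edges'' for a possible overlap $\Gamma_i\cap\Gamma_j$ is not quite available, since $|\Gamma'_i|=|\Gamma'_j|=k^*$ is fixed; the paper simply asserts the two tree families are edge-disjoint (their non-leaf edges lie in the disjoint $G[S_i]$ and $G[S_j]$, and the leaf edges are in $\out_G(S_i)$, $\out_G(S_j)$), and the leaf-edge contribution is small enough that the bound survives without any discarding.
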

\begin{proof}
In order to prove the first assertion, fix some $1\leq j\leq \gamma$. From the integrality of flow, it is enough to prove that there is a flow $F_j(X,Y)$ in $G[S_j]\cup \out_G(S_j)$, where each edge in $X$ sends one flow unit, each edge in $Y$ receives one flow unit, and the flow congestion is at most $2$. We start by defining two subsets $X',Y'\sse \Gamma_j$ of edges, as follows: $X'=\bigcup_{e\in X}U'_e$, and $Y'=\bigcup_{e\in Y}U'_e$. Observe that $|X'|=|Y'|=|X|\cdot p$. Since set $S_j$ is $\alphawl(k)$-well-linked for $\Gamma_j$, there is a flow $F_j(X',Y')$  in $G[S_j]\cup \out_G(S_j)$, where every edge in $X'$ sends one flow unit, every edge in $Y'$ receives one flow unit, and the congestion due to this flow is at most $1/\alphawl(k)$. We are now ready to define the flow $F_j(X,Y)$. Each edge $e\in X$ spreads one flow unit uniformly among the edges of $U'_e$ along the tree $T_j(U_e)$. Next, all this flow is sent along the flow-paths in $F_j(X',Y')$, where we scale this flow down by factor $p$. Finally, each edge $e\in Y$ collects all flow from edges in $U'_e$ along the tree $T_j(U_e)$. Since all trees $\set{T_U}_{U\in \gset_j}$ are disjoint, and since the congestion caused by $F_j(X',Y')$ is at most $1/\alphawl(k)<p$, the resulting flow $F_j(X,Y)$ causes congestion at most $2$.

We now turn to prove the second assertion.
From the integrality of flow, it is enough to prove that there is a flow $F_{i,j}: \Gamma'_i\connect_2\Gamma'_j$, where every edge in $\Gamma'_i$ sends one flow unit and every edge in $\Gamma'_j$ receives one flow unit.
As before, we construct two edge subsets, $X\sse \Gamma_j$ and $Y\sse \Gamma_i$, as follows: $X=\bigcup_{e\in \Gamma'_i}U'_e$, and $Y=\bigcup_{e\in \Gamma'_j}U'_e$. Notice that $|X|=|Y|=k^*\cdot p$. 

Recall that from the definition of good vertex subsets, we already have a flow $F_j$, where each edge $e\in \Gamma_j$ sends one flow unit to a distinct terminal in $\tset$, with total congestion at most $\eta=2\beta(k)/\alphawl(k)$. We discard all flow-paths except those originating at the edges of $X$. As a result, we obtain a flow $F^{*}_j$, where each edge $e\in X$ sends one flow unit to a distinct terminal $t_e\in \tset$, and $F^{*}_j$ causes congestion at most $\eta$ in $G$. Let $\tset_j$ be the subset of terminals that receive flow in $F^{*}_j$, $|\tset_j|=|X|$. Similarly, we can define a flow $F^{*}_i$, where each edge $e\in Y$ sends one flow unit to a distinct terminal $t_e\in \tset$, and $F^{*}_i$ causes congestion at most $\eta$ in $G$. Subset $\tset_i$ of terminals is defined similarly. Notice that $\tset_i$ and $\tset_j$ are not necessarily disjoint. But since the set $\tset$ of terminals is flow-well-linked in $G$, there is a flow $F:\tset_i\sconnect_2\tset_j$, where each terminal in $\tset_i$ sends one flow unit, each terminal in $\tset_j$ receives one flow unit, and the congestion is at most $2$. We concatenate the three flows, $F_i^{*},F,F^{*}_j$, to obtain a flow $F':X\connect Y$. In this flow, each edge in $X$ sends one flow unit, each edge in $Y$ receives one flow unit, and the total congestion is at most $2\eta+2$.

We are now ready to define the flow $F_{i,j}$. Each edge $e\in \Gamma'_i$ sends one flow unit along the tree $T_{i}(U_e)$, which is evenly split among the edges of $U'_e$. We then use the flow $F'$, scaled down by factor $p$, to route this flow to the edges of $Y$. Finally, each edge $e\in \Gamma'_j$ collects the flow that the edges of $U'_e$ receive, along the tree $T_{j}(U_e)$, so that after collecting all that flow, edge $e$ receives $1$ flow unit. In order to analyze the total congestion due to flow $F_{i,j}$, observe that all trees $\set{T_i(U)}_{U\in \gset_i}\cup\set{T_j(U)}_{U\in \gset_j}$
are edge-disjoint. So the routing along these trees causes a congestion of at most $1$. Since flow $F'$ causes congestion of at most $2\eta+2$, and $p$ is selected so that $p\geq 2\eta+2$, the congestion due to the scaled-down flow $F'$ is at most $1$. The total congestion is therefore at most $2$.

Finally, we prove the third assertion. Let $\Gamma^*_1\sse \Gamma_1'$ be any subset of $k'$ edges, $\mset'\sse \mset$ any subset of $k'/2$ source-sink pairs, and $\tset'$ the set of all terminals participating in the pairs in $\mset'$. 
Let $X=\bigcup_{e\in \Gamma^*_1}U'_e$, so $|X|=k'p$. As before, 
we make use of the previously defined flow $F_1$, where each edge $e\in \Gamma_1$ sends one flow unit to a distinct terminal in $\tset$, with total congestion at most $\eta=2\beta(k)/\alphawl(k)$. We discard all flow-paths except those that originate at the edges of $X$. As a result, we obtain a flow $F^{*}$, where each edge $e\in X$ sends one flow unit to a distinct terminal $t_e\in \tset$, and $F^{*}$ causes congestion at most $\eta<p$ in $G$. 
We now define a new flow $F^{**}:\Gamma^*_1\connect_2 \tset$, where each edge in $\Gamma^*_1$ sends one flow unit, and each terminal in $\tset$ receives at most one flow unit. Flow $F^{**}$ is defined as follows. Each edge $e\in \Gamma^*_1$ sends one flow unit to the edges in set $U'_e$ along the tree $T_1(U_e)$, distributing it evenly among these edges. Each edge in $U'_e$ then sends the $1/p$ flow unit it receives from $e$ to the terminals via the flow $F^{*}$, so the flow $F^*$ is scaled down by factor $p$. Since the congestion caused by flow $F^*$ is $\eta<p$, and the trees $\set{T_1(U_e)}_{e\in \Gamma^*_1}$ are edge-disjoint, the total congestion caused by $F^{**}$ is at most $2$. Moreover, each terminal receives at most one flow unit in $F^{**}$. From the integrality of flow, there is a subset $\tset''\sse \tset$ of $k'$ terminals, and a collection $\pset_1:\Gamma^*_1\sconnect_2\tset''$ of paths in $G$. Since the set $\tset$ of terminals is flow-well-linked, using the integrality of flow, there is a collection $\pset_2: \tset''\sconnect_2\tset'$ of paths in $G$. We then obtain the desired collection $\pset$ of paths by concatenating the paths in $\pset_1$ with the paths in $\pset_2$.\end{proof}

\paragraph{Step 2: Constructing the Trees.}
The goal of this step is to find a collection $T_1,\ldots, T_{k'}$ of trees in graph $G$, such that each edge of $G$ belongs to at most $8$ trees. For each tree $T_i$, we will find a subset $E_i\sse E(T_i)$ of \emph{special edges}, that contains, for each $1\leq j\leq \gamma$, one edge $e_{i,j}\in \Gamma'_j$, such that the sets $E_1,\ldots,E_{k'}$ are pairwise disjoint. Notice that an edge $e\in \Gamma'_j$ may belong to several trees, but only to one of them as a special edge. 
For each $1\leq j\leq \gamma$, we denote $\Gamma^*_j=\set{e_{1,j},\ldots,e_{k',j}}$, the subset of edges of $\Gamma'_j$ that the trees $T_1,\ldots,T_{k'}$ contain as special edges. We summarize Step 2 in the next theorem.

\begin{theorem}\label{thm: building the trees}
Given a good family $\fset$, and a subset $\Gamma'_j\sse \out_G(S_j)$ of edges for each $1\leq j\leq \gamma$, as computed in Step 1, we can efficiently find $k'$ trees $T_1,\ldots,T_{k'}$ in graph $G$, and for each tree $T_i$ a subset $E_i\sse E(T_i)$ of special edges, such that:

\begin{itemize}
\item Each edge of $G$ belongs to at most $8$ trees;

\item Subsets $E_1,\ldots,E_{k'}$ of edges are pairwise disjoint; and

\item For all $1\leq i\leq k'$, $E_i=\set{e_{i,1},\ldots,e_{i,\gamma}}$, where  for all $1\leq j\leq \gamma$, $e_{i,j}\in \Gamma'_j$.
\end{itemize}
\end{theorem}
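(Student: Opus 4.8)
The plan is to build every tree around a single fixed ``hub'' coordinate, which we take to be $\Gamma'_1$. By the second bullet of Theorem~\ref{thm: properties of grouping}, for each $j\in\set{2,\dots,\gamma}$ there is a one-to-one collection $\pset_{1,j}:\Gamma'_1\sconnect_2\Gamma'_j$ of $k^*$ paths of congestion at most $2$ in $G$; write $P^e_j\in\pset_{1,j}$ for the path leaving $e\in\Gamma'_1$ and $\phi_j(e)\in\Gamma'_j$ for its terminal edge, so each $\phi_j$ is injective (in fact a bijection, since $|\Gamma'_1|=|\Gamma'_j|=k^*$). For $e\in\Gamma'_1$ put $Q_e=\set{e}\cup\bigcup_{j=2}^{\gamma}P^e_j$: a connected subgraph of $G$ that contains the edge $e$ of $\Gamma'_1$ together with one edge $\phi_j(e)$ of every $\Gamma'_j$. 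Let $T_e$ be a spanning tree of $Q_e$ that retains $e$ and all the $\phi_j(e)$, and set $E_e=\set{e}\cup\set{\phi_j(e)\mid 2\le j\le\gamma}$. Modulo the two points below, outputting $k'$ of the pairs $(T_e,E_e)$ already yields the statement: each $T_e$ is a tree in $G$ whose edge set contains exactly one edge of each $\Gamma'_j$.

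The first point is disjointness of the output sets $E_e$. For fixed $j$ the edges $\set{\phi_j(e)\mid e\in\Gamma'_1}$ are distinct by injectivity of $\phi_j$, so a collision $E_e\cap E_{e'}\neq\emptyset$ with $e\neq e'$ can only arise through an edge that lies simultaneously in two distinct $\Gamma'_j$'s. I would deal with this by choosing the hub set $\Gamma^*_1\sse\Gamma'_1$ of the required size inside the collision-free part: on the auxiliary graph with vertex set $\Gamma'_1$ in which $e\sim e'$ whenever $\phi_j(e)$ (or $e$) coincides with $\phi_{j'}(e')$ (or $e'$) for some $j,j'$, the maximum degree is $O(\gamma^2)$ (each copy of $e$ lies in at most $\gamma$ of the sets $\Gamma'_{j'}$, and there are $\gamma$ coordinates), so a collision-free subset of size $\gg k'$ survives even after any greedy pruning; for $e\in\Gamma^*_1$ the sets $E_e$ are then pairwise disjoint and each consists of $\gamma$ distinct edges.

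The second, and the real, point is the congestion bound. What comes for free is weak: an edge $f$ of $G$ lies in $Q_e$ only if it lies on some $P^e_j$, and for each fixed $j$ the congestion bound on $\pset_{1,j}$ forces $f$ onto at most $2$ of the paths $\set{P^e_j\mid e\in\Gamma'_1}$; hence $f$ lies in at most $2(\gamma-1)<2\gamma=\Theta(\log^2k)$ of the subgraphs $Q_e$, so the full family $\set{T_e}_{e\in\Gamma'_1}$ only has congestion $\le 2\gamma$. Driving this down to the absolute constant $8$ is exactly what the generous slack $k'=k^*/(2\gamma^3)\ll k^*$ is for: it suffices to choose $\Gamma^*_1$ (of size $k'$) so that the sub-family $\set{T_e}_{e\in\Gamma^*_1}$ has congestion at most $8$. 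A crude random subsampling of $\Gamma^*_1$ does concentrate the per-edge load well below $1$, but it cannot be closed by a union bound over $E(G)$, since $|E(G)|$ may be polynomially larger than any $\poly\log k$; the robust route is instead to realize the cross-coordinate links through groups produced by the grouping theorem (Theorem~\ref{thm: grouping}), whose spanning trees are edge-disjoint by construction, so that the congestion is a small constant independently of $|E(G)|$.

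I expect the main obstacle to be precisely this last reduction: making the grouping \emph{balanced}, i.e.\ arranging that each of the $k'$ output trees picks up exactly one of the $k^*$ copies from \emph{every} $\Gamma'_j$ — ordinary grouping controls only the \emph{total} number of special edges per group, not their distribution among the $\gamma$ coordinates — and it is in forcing this balance, together with paying the constant-factor overhead of a bounded number of nested groupings, that the factor $\gamma^3$ in the ratio $k^*/k'$ is consumed: one factor $\gamma$ for the $\gamma$ coordinates being threaded together, and the remaining $\gamma^2$ for the slack absorbed by the balancing step.
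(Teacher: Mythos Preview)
Your hub-and-spoke construction is clean but, as you correctly diagnose, it only delivers congestion $2\gamma$, and none of your suggested fixes closes this gap. Random subsampling fails for exactly the reason you state. The grouping suggestion is too vague to evaluate: Theorem~\ref{thm: grouping} gives edge-disjoint trees spanning groups of terminals from a \emph{single} set, but you need each output tree to hit one representative of \emph{every} $\Gamma'_j$, and you offer no mechanism that converts a single-set grouping into a $\gamma$-coordinate balanced one. Your last paragraph is an honest admission that the main step is missing; as written this is not a proof.

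The paper's argument is genuinely different and rests on two ideas you did not use. First, it replaces the star topology on $\set{1,\ldots,\gamma}$ by a \emph{bounded-degree spanning tree} $T^*$ (degree at most $3$, obtained via the Singh--Lau rounding). Along each edge $(s_i,s_j)$ of $T^*$ it needs a bundle of $\Theta(k^*/\gamma^3)$ paths from $\Gamma'_i$ to $\Gamma'_j$, and the bounded degree ensures that inside each $G[S_j]$ only a constant number of internal routings are required; since the sets $S_j$ are disjoint, these contribute congestion $O(1)$ regardless of $\gamma$. Second, and this is the point your approach cannot substitute for, the paper does \emph{not} build the cross-coordinate bundles as $\gamma-1$ separate congestion-$2$ path families. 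Instead it augments $G$ by vertices $s_1,\ldots,s_\gamma$, passes to a directed Eulerian version $G^+$ (each edge of $G$ becoming four arcs), and uses Mader-type edge splitting (Frank--Jackson) to produce a multigraph $\tH'$ on $\set{s_1,\ldots,s_\gamma}$ whose $\gamma k^*$ edges correspond to paths that are \emph{simultaneously} edge-disjoint in $G^+$. Hence the union of \emph{all} bundles along \emph{all} edges of $T^*$ causes congestion at most $4$ in $G$. Combining the two contributions yields the constant $8$. In short: the paper achieves constant congestion not by subsampling a star, but by changing the topology to a low-degree tree and invoking splitting-off to make all cross-bundles edge-disjoint at once.
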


\begin{proof}
In order to prove the theorem, we start by augmenting the graph $G$ as follows. First, replace each edge of $G$ with two parallel edges. Next, for each $1\leq j\leq \gamma$, add a new vertex $s_j$, and for each edge $e\in \Gamma'_j$, we sub-divide one of the copies of $e$, by adding a new vertex $v_e$, which is then connected to the vertex $s_j$. 
Notice that from Theorem~\ref{thm: properties of grouping}, for each $1\leq j\neq j'\leq \gamma$, there are exactly $k^*$ edge-disjoint paths connecting $s_j$ to $s_{j'}$ in the resulting graph.
Finally, we replace each edge in the resulting graph by two bi-directed edges, thus obtaining a directed Eulerian graph that we denote by $G^+$. From Theorem~\ref{thm: properties of grouping}, for each pair $1\leq j\neq j'\leq \gamma$ of indices, there are $k^*$ edge-disjoint paths connecting $s_j$ to $s_{j'}$, and $k^*$ edge-disjoint paths connecting $s_{j'}$ to $s_j$. Notice also that each vertex $s_j$ has exactly $k^*$ incoming edges and exactly $k^*$ outgoing edges. 

As a next step, we use the standard edge splitting procedure in graph $G^+$. Our goal is to eventually obtain a graph $\tH$ on the set $\set{s_1,\ldots,s_{\gamma}}$ of vertices, such that each pair $s_j,s_{j'}$ is $k^*$-edge connected, and each edge $e=(s_j,s_{j'})\in E(\tH)$ is associated with a path $P_e$ connecting $s_j$ to $s_{j'}$ in $G^+$, while all paths in $\set{P_e\mid e\in E(\tH)}$ are edge-disjoint in $G^+$. 

Let $D=(V,A)$ be any directed multigraph with no self-loops. For any pair $(v,v')\in V$ of vertices, their connectivity $\lambda(v,v';D)$ is the maximum number of edge-disjoint paths connecting $v$ to $v'$ in $D$. Given a pair $a=(u,v)$, $b=(v,w)$ of edges, a splitting-off procedure replaces the two edges $a,b$ by a single edge $(u,w)$. We denote by $D^{a,b}$ the resulting graph. We use the extension of Mader's theorem~\cite{Mader} to directed graphs, due to Frank~\cite{Frank} and Jackson~\cite{Jackson}. Following is a simplified version of Theorem 3 from~\cite{Jackson}:

\begin{theorem}\label{thm: splitting-off}
Let $D=(V,A)$ be an Eulerian digraph, $v\in V$ and $a=(v,u)\in A$.
Then there is an edge $b=(w,v)\in A$, such that for all $y,y'\in V\setminus\set{v}$:
$\lambda(y,y';D)=\lambda(y,y';D^{ab})$
\end{theorem}

We apply Theorem~\ref{thm: splitting-off} repeatedly to all vertices of $G^+$ except for the vertices in set $\set{s_1,\ldots,s_{\gamma}}$, until we obtain a directed graph $\tH$, whose vertex set is $\set{s_1,\ldots,s_{\gamma}}$, and for each $1\leq j,j'\leq \gamma$, there are $k^*$ edge-disjoint paths connecting $s_j$ to $s_{j'}$ and $k^*$
edge-disjoint paths connecting $s_{j'}$ to $s_j$. Clearly, each edge $e=(s_j,s_{j'})\in E(\tH)$ is associated with a path $P_e$ connecting $s_j$ to $s_{j'}$ in $G^+$, and all paths $\set{P_e\mid e\in E(\tH)}$ are edge-disjoint.
Let $\tH'$ denote the undirected multi-graph identical to $\tH$, except that now all edges become undirected. 
Notice that each vertex $s_j$ must have $2k^*>>\gamma$ edges adjacent to it in $\tH'$, so the graph contains many parallel edges. For each pair $s_j,s_{j'}$ of vertices, there are exactly $2k^*$ edge-disjoint paths connecting $s_j$ to $s_{j'}$ in $\tH'$. For convenience, let us denote $2k^*$ by $\ell$.

As a next step, we build an auxiliary undirected graph $Z$ on the set $\set{s_1,\ldots,s_{\gamma}}$ of vertices, as follows. For each pair $s_j,s_{j'}$ of vertices, there is an edge $(s_j,s_{j'})$ in graph $Z$ iff there are at least $\ell/\gamma^3$ edges connecting $s_j$ and $s_{j'}$ in $\tH'$. If edge $e=(s_j,s_{j'})$ is present in graph $Z$, then its capacity $c(e)$ is set to be the number of edges connecting $s_j$ to $s_{j'}$ in $\tH'$. For each vertex $s_j$, let $C(s_j)$ denote the total capacity of edges incident on $s_j$ in graph $Z$.
We need the following simple observation.

\begin{observation}\label{observation: graph Z}
\ 
\begin{itemize}

\item For each vertex $v\in V(Z)$, $(1-1/\gamma^2)\ell\leq C(v)\leq \ell$. 
\item For each pair $(u,v)$ of vertices in graph $Z$, we can send at least $(1-1/\gamma)\ell$ flow units from $u$ to $v$ in $Z$ without violating the edge capacities.
\end{itemize}
\end{observation}
\begin{proof}
In order to prove the fist assertion, recall that each vertex in graph $\tH'$ has $\ell$ edges incident to it (this is since, in graph $G^+$, each vertex $s_1,\ldots,s_{\gamma}$ had exactly $k^*$ incoming and $k^*$ outgoing edges, and we did not perform edge splitting on these vertices). So $C(v)\leq \ell$ for all $v\in V(Z)$.
Call a pair $(s_j,s_{j'})$ of vertices bad iff there are fewer than $\ell/\gamma^3$ edges connecting $s_j$ to $s_{j'}$ in $\tH'$. Notice that each vertex $v\in V(Z)$ may participate in at most $\gamma$ bad pairs, as $|V(Z)|=\gamma$. Therefore, $C(v)\geq \ell-\gamma\ell/\gamma^3=\ell(1-1/\gamma^2)$ must hold.

For the second assertion, assume for contradiction that it is not true, and let $(u,v)$ be a violating pair of vertices. Then there is a cut $(A,B)$ in $Z$, with $u\in A$, $v\in B$, and the total capacity of edges crossing this cut is at most $(1-1/\gamma)\ell$. Since $u$ and $v$ were connected by $\ell$ edge-disjoint paths in graph $\tH'$, this means that there are at least $\ell/\gamma$ edges in graph $\tH'$ that connect bad pairs of vertices. But since we can only have at most $\gamma^2$ bad pairs, and each pair has less than $\ell/\gamma^3$ edges connecting them, this is impossible.
\end{proof}

We now proceed in two steps. First, we show that we can efficiently find a spanning tree of $Z$ with maximum vertex degree at most $3$. Next, using this spanning tree, we show how to construct the collection $T_1,\ldots,T_{k'}$ of trees. 

\begin{claim}\label{claim: small-degree spanning tree}
We can efficiently find a spanning tree $T^*$ of $Z$ with maximum vertex degree at most $3$.
\end{claim}
\begin{proof}
We use the algorithm of Singh and Lau~\cite{Singh-Lau} for constructing bounded-degree spanning trees. Suppose we are given a graph $G=(V,E)$, and our goal is to construct a spanning tree $T$ of $G$, where the degree of every vertex is bounded by $B$. For each subset $S\sse V$ of vertices, let $E(S)$ denote the subset of edges with both endpoints in $S$, and $\delta(S)$ the subset of edges with exactly one endpoint in $S$. Singh and Lau consider a natural LP-relaxation for the problem. We note that their algorithm works for a more general problem where edges are associated with costs, and the goal is to find a minimum-cost tree that respects the degree requirements; since we do not need to minimize the tree cost, we only discuss the unweighted version here. For each edge $e\in E$, we have a variable $x_e$ indicating whether $e$ is included in the solution. We are looking for a feasible solution to the following LP.

\begin{eqnarray}
&\sum_{e\in E}x_e=|V|-1&\label{LP: total edge weights}\\
&\sum_{e\in E(S)}x_e\leq |S|-1&\forall S\subset V \label{LP: sum for subsets}\\
&\sum_{e\in \delta(v)}x_e\leq B&\forall v\in V \label{LP: degree constraints}\\
&x_e\geq 0&\forall e\in E
\end{eqnarray}

Singh and Lau~\cite{Singh-Lau} show an efficient algorithm, that, given a feasible solution to the above LP, produces a spanning tree $T$, where for each vertex $v\in V$, the degree of $v$ is at most $B+1$ in $T$. Therefore, in order to prove the claim, it is enough to show a feasible solution to the LP, where $B=2$. Recall that $|V(Z)|=\gamma$. The solution is defined as follows. Let $e=(u,v)$ be any edge in $E(Z)$. We set the LP-value of $e$ to be $x_e=\frac{\gamma-1}{\gamma}\cdot \left (\frac{c(e)}{C(v)}+\frac{c(e)}{C(u)}\right )$. 
We say that $\frac{\gamma-1}{\gamma}\cdot \frac{c(e)}{C(v)}$ is the contribution of $v$ to $x_e$, and $\frac{\gamma-1}{\gamma}\cdot \frac{c(e)}{C(u)}$ is the contribution of $u$. We now verify that all constraints of the LP hold.

First, it is easy to see that $\sum_{e\in E}x_e=\gamma-1$, as required. Next, consider some subset $S\subset V$ of vertices. Notice that it is enough to establish Constraint~(\ref{LP: sum for subsets}) for subsets $S$ with $|S|\geq 2$. From Observation~\ref{observation: graph Z}, the total capacity of edges in $E_Z(S,\nots)$ must be at least $(1-1/\gamma)\ell$. Since for each $v\in S$, $C(v)\leq \ell$, the total contribution of the vertices in $S$ towards the LP-weights of edges in $E_Z(S,\nots)$ is at least $\frac{\gamma-1}{\gamma}\cdot (1-1/\gamma)=(1-1/\gamma)^2$. Therefore,

\[\sum_{e\in E(S)}x_e\leq \frac{\gamma-1}{\gamma}|S|-(1-1/\gamma)^2=|S|-|S|/\gamma-1-1/\gamma^2+2/\gamma\leq |S|-1\]

since we assume that $|S|\geq 2$. This establishes Constraint~(\ref{LP: sum for subsets}).
Finally, we show that for each $v\in V(Z)$, $\sum_{e\in \delta_v}x_e\leq 2$. First, the contribution of the vertex $v$ to  this summation is bounded by $1$. Next, recall that for each $u\in V(Z)$, $C(u)\geq (1-1/\gamma^2)\ell$, while the total capacity of edges in $\delta(v)$ is at most $\ell$. Therefore, the total contribution of other vertices to this summation is bounded by $\frac{\ell}{(1-1/\gamma^2)\ell}\cdot \frac{\gamma-1}{\gamma}\leq \frac{\gamma}{\gamma+1}\leq 1$. The algorithm of Singh and Lau can now be used to obtain a spanning tree $T^*$ for $Z$ with maximum vertex degree at most $3$.
\end{proof}

Root the tree $T^*$ at any degree-$1$ vertex $r$. Let $e=(s_i,s_j)$ be some edge of the tree, where $s_i$ is the parent of $s_j$. Recall that there are at least $\ell/\gamma^3$ edges $(s_i,s_j)$ in graph $\tH'$. Let $A(e)$ be any  collection of exactly $\ell/\gamma^3$ such edges. Recall that for each edge $e'\in A(e)$ in graph $\tH'$, there is a path $P$, connecting either $s_i$ to $s_j$ or $s_j$ to $s_i$ in graph $G^+$ (recall that graph $G^+$ is directed).
Since the direction of the edges in $G^+$ will not play any role in the following argument, we will assume w.l.o.g. that $P$ is directed from $s_j$ towards $s_i$. Recall that the first edge on path $P$ must connect $s_j$ to some vertex $v_{\te}$, where $\te\in \Gamma'_j$, and similarly, the last edge on path $P$ connects some vertex $v_{\te'}$, for $\te'\in \Gamma'_i$ to $s_i$. So by removing the first and the last edges from path $P$, we obtain a path $P_{e'}$ in graph $G$, that connects edge $\te\in \Gamma'_j$ to edge $\te'\in \Gamma'_i$. Since $s_i$ is the parent of $s_j$ in tree $T^*$, we will think of $P_{e'}$ as being directed from $S_j$ towards $S_i$. We call $\te$ \emph{the first edge of $P_{e'}$}, and $\te'$ \emph{the last edge of $P_{e'}$}. Going back to the edge $e=(s_i,s_j)$ in tree $T^*$, we can now define a set $\pset(e)=\set{P_{e'}\mid e'\in A(e)}$ of exactly $\ell/\gamma^3$ paths in graph $G$, associated with $e$. We let 

\[B_1(e)=\set{\te\in \Gamma'_j\mid \te \mbox{ is the first edge on some path $P_{e'}\in \pset(e)$}}\]

and

\[B_2(e)=\set{\te\in \Gamma'_i\mid \te \mbox{ is the last edge on some path $P_{e'}\in \pset(e)$}}\]

Both sets $B_1(e)$, $B_2(e)$ are multi-sets, that is, if some edge $\te\in \Gamma'_j$ appears as a first edge on two paths in $P_{e'}$, then we add two copies of $\te$ to $B_1(e)$. (From the construction of $G^+$, it is easy to see that $\te$ may appear as the first edge on at most two such paths). We then have that $\pset(e):B_1(e)\sconnect_4 B_2(e)$, since,
from the construction of graphs $G^+$ and $\tH'$, every edge of graph $G$ may appear on at most four paths of $\bigcup_{e\in E(T^*)}\pset(e)$.

We call the sets $B_1(e),B_2(e)$ of edges \emph{bundles corresponding to $e$}, and we view $B_1(e)$ as a bundle that belongs to $S_j$, while $B_2(e)$ is a bundle that belongs to $S_i$. Since the degree of tree $T^*$ is at most $3$, every set $S_j$ has at most three bundles that belong to it. From the construction of graph $G^+$, for every vertex $s_i: 1\leq i\leq \gamma$, each edge in $\Gamma'_i$ may appear at most twice in the multi-set defined by the union of the bundles that belong to $S_i$. In particular, it is possible that it appears twice in the same bundle. 
We need to make sure that this never happens. In order to achieve this, we will define, for each edge $e\in E(T^*)$, smaller bundles, $B_1'(e)\sse B_1(e)$ and $B_2'(e)\sse B_2(e)$, such that each edge appears at most once in each bundle, and there is a subset $\pset'(e)\sse \pset(e)$, where $\pset'(e):B_1'(e)\sconnect_4 B_2'(e)$. We will also ensure that $|B_1'(e)|=|B_2'(e)|=\frac{\ell}{4\gamma^3}$.

This is done as follows. Consider some edge $e=(s_i,s_j)$ in tree $T^*$, and assume that $s_i$ is the parent of $s_j$ in the tree. Consider first $B_1(e)$. For each edge $\te\in B_1(e)$, if two copies of $\te$ appear in $B_1(e)$, then we remove one of the copies from $B_1(e)$.  If $P\in\pset(e)$ is one of the two paths for which $\te$ is the first edge, then we remove $P$ from $\pset(e)$, and we also remove its last edge from $B_2(e)$. It is easy to see that we remove at most half the edges of $B_1(e)$. We then perform the same operation for $B_2(e)$. In the end, both $B_1(e)$ and $B_2(e)$ must contain at least a $1/4$ of the original edges, and $\pset(e)$ contains at least a $1/4$ of the original paths. We now let $\pset'(e)$ be any subset of exactly $\ell/4\gamma^3$ remaining paths, and we set $B_1'(e)$ to be the set of all edges $\te$ that appear as the first edge on some path in $\pset'(e)$, and similarly $B_2'(e)$ the set of all edges that appear as the last edge on some path in $\pset'(e)$. We perform this operation for all edges $e$ of tree $T^*$.

We are now ready to define the subsets $\Gamma^*_j\sse \Gamma'_j$ of $k'$ edges, $\Gamma^*_j=\set{e_{1,j},\ldots,e_{k',j}}$, that our trees will span. Fix some index $1\leq j\leq \gamma$. If $s_j$ is not the root of the tree $T^*$, then we let $\Gamma^*_j=B_1(e)$, where $e$ is the edge connecting $s_j$ to its father in $T^*$. If $s_j$ is the root of the tree, then $\Gamma^*_j=B_2(e)$, where $e$ is the unique edge incident on $s_j$ in tree $T^*$. Notice that $|\Gamma^*_j|=\frac{\ell}{4\gamma^3}=\frac{k^*}{2\gamma^3}=k'$.

Finally, we construct the trees $T_1,\ldots,T_k'$. In order to construct these trees, we process the vertices of the tree $T^*$ in the bottom-up order, starting from the leaves. Let $s_j$ be any vertex of $T^*$, and let $T^*(s_j)$ be the sub-tree of $T^*$, rooted at $s_j$. We will ensure that after vertex $s_j$ is processed, we will have a collection $T_1(s_j),\ldots,T_{k'}(s_j)$ of trees, such that for each vertex $s_i\in T^*(s_j)$, each one of the trees contains exactly one distinct edge of $\Gamma_i^*$ as a special edge. The trees $T_1(s_j),\ldots,T_{k'}(s_j)$ will consist of the union of the paths $\pset'(e)$, where $e$ is an edge in the sub-tree $T^*(s_j)$ of $T^*$, of the edges of $G$ whose both endpoints lie in sets $S_i$ for $s_i\in T^*(s_j)$, and of sets $\Gamma^*_i$, for $s_i\in T^*(s_j)$.

Assume first that $s_j$ is a leaf of $T^*$. Then the trees $T_1(s_j),\ldots,T_{k}(s_j)$ consist of a single distinct edge of $\Gamma^*_j$ each. Assume now that $s_j$ is an inner vertex of $T^*$. We will assume here that $s_j$ has two children, $s_a$ and $s_b$; the case where $s_j$ only has one child is treated similarly. 

Recall that we are given a collection $T_1(s_a),\ldots,T_{k'}(s_a)$ of trees spanning the sets $\Gamma_i^*$ of vertices $s_i$ in the sub-tree $T^*(s_a)$. We will assume w.l.o.g., that for each such tree $T_q(s_a)$, the root of the tree is an endpoint of the unique edge of $\Gamma^*_a$ that belongs to $T_q(s_a)$ as a special edge. Let $e=(s_a,s_j)$ be the edge of $T^*$ connecting $s_a$ to $s_j$. Recall that we are given a collection $\pset'(e):\Gamma^*_a\sconnect B_2(e)$ of paths in $G$. From Theorem~\ref{thm: properties of grouping}, we can find a set $\pset_1: B_2(e)\sconnect_2 \Gamma_j^*$ of paths contained in the sub-graph $G[S_j]$ of $G$, where each path in $\pset_1$ connects a distinct edge of $B_2(e)$ to a distinct edge of $\Gamma_j^*$. 
We now concatenate the paths in $\pset'(e)$ with the paths in $\pset_1$, to get a collection $\pset'_1$ of paths. Each path in $\pset'_1$ connects a root of a distinct tree $T_q(s_a)$ to a distinct edge of $\Gamma_j^*$.

Similarly, let $e'=(s_b,s_j)$ be the edge of $T^*$ connecting $s_b$ to $s_j$. We are again given a collection $\pset'(e'): \Gamma^*_b\sconnect B_2(e')$ of paths in $G$, and we can again find a set $\pset_2: B_2(e')\sconnect_2 \Gamma_j^*$ of paths contained in $G[S_j]$. Concatenating the paths in $\pset'(e')$ and $\pset_2$, we again obtain a collection $\pset_2'$ of paths, where each path connects a root of a distinct tree $T_q(s_b)$ with a distinct edge in $\Gamma_j^*$.

Consider now some edge $\te\in \Gamma_j^*$. We have two paths: $P_1\in \pset'_1$, connecting $\te$ to the root of some tree $T_q(s_a)$, and path $P_2\in \pset'_2$ connecting $\te$ to the root of some tree $T_{q'}(s_b)$. We obtain a tree $T_{\te}(s_j)$ by taking the union of $T_{q}(s_a),T_{q'}(s_b),P_1$ and $P_2$ (we may need to delete some edges to ensure that it is indeed a tree). The set of the special edges of this new tree consists of all special edges of $T_{q}(s_a),T_{q'}(s_b)$, and the edge $\te$.

At the end of this procedure, when the root $r$ of $T^*$ is processed, we will obtain a desired collection $T_1,\ldots,T_{k'}$ of trees, where for each $1\leq j\leq \gamma$, for each $1\leq i\leq k'$, tree $T_i$ contains an edge $e_{i,j}\in \Gamma^*_j$, and the edges $e_{1,j},\ldots,e_{k',j}$ are all distinct. We now analyze the congestion caused by these trees. First, as already observed, each edge of graph $G$ may belong to at most four paths of the set $\bigcup_{e\in E(T^*)}\pset'(e)$. Additionally, for each $1\leq j\leq \gamma$, we route two subsets of edges of $\Gamma'_j$ to each other twice. Each such routing causes congestion $2$ in graph $G[S_j]$, and so the total congestion caused by all these routings is at most $4$. We conclude that each edge of $G$ belongs to at most $8$ trees $T_1,\ldots,T_{k'}$.

\end{proof}

\paragraph{Step 3: Constructing the Expander and finding the routing.}

In this step, we construct the expander $X$, together with its embedding into the graph $G$, and find the final routing of a subset of demands in $\mset$. 
Let $\mset'\sse \mset$ be any subset of $k'/2$ demand pairs, and let $\tset'$ be the subset of terminals participating in the pairs of $\mset'$. 

Let $\pset=\tset'\sconnect_4\Gamma^*_1$ be the collection of paths connecting the terminals of $\tset'$ to the edges of $\Gamma^*_1\sse \Gamma_1$ (where $\Gamma_1^*=\set{e_{1,1},\ldots,e_{k',1}}$), guaranteed by Theorem~\ref{thm: properties of grouping}. Denote $\pset=\set{P_t\mid t\in \tset'}$, where $P_t$ is the path originating from terminal $t$. Rename the terminals in $\tset'$ as $\tset'=\set{t_1,\ldots,t_{k'}}$, where for each $1\leq i\leq k'$, $t_i$ is the terminal whose path $P_t$ terminates at the edge $e_{i,1}$ (the unique edge of $\Gamma^*_1$ that belongs to the tree $T_i$ as a special edge). For $1\leq i\leq k'$, let $C_i$ be the connected component of graph $G$, that consists of the union of the tree $T_i$ and the path $P_{t_i}$. Since each edge of graph $G$ participates in at most $8$ trees $T_i$, and at most $4$ paths in $\pset$, each edge of $G$ participates in at most $12$ connected components $C_i$. 

We now construct the expander $X$ and embed it into the graph $G$. The set of vertices of $X$ is $V(X)=\set{v_1,\ldots,v_{k'}}$, where we view each vertex $v_i$ as representing the terminal $t_i\in \tset'$. We view the connected component $C_i$ as the embedding of the vertex $v_i$ into $G$. Finally, we need to define the of the edges of $X$ and specify their embedding into $G$.
In order to do so, we use the cut-matching game of Khandekar, Rao and Vazirani~\cite{KRV} with $\gamma=\gkrv(k)$ iterations. Recall that in each iteration $j$, the cut player produces a partition $(A_j,B_j)$ of $V(X)$, with $|A_j|=|B_j|$. The matching player then returns some matching $M_j$ between the vertices of $A_j$ and $B_j$, and the edges of $M_j$ are added to graph $X$. We are guaranteed that no matter what the matching player does, there is always a way for the cut player to efficiently compute the partitions $(A_j,B_j)$ in each iteration $j$ (which may depend on the previous matchings $M_1,\ldots,M_{j-1}$), such that after $\gamma$ iterations, $X$ becomes a $\half$-expander w.h.p. 
Our idea is to use the graphs $G[S_j]$ to route the matchings $M_j$. Specifically, let $(A_1,B_1)$ be the partition of $V(X)$ produced by the cut player in the first iteration. Consider the set $\Gamma_1^*=\set{e_{1,1},\ldots,e_{k',1}}$ of edges. Partition $(A_1,B_1)$ of $V(X)$ defines a partition $(A_1',B_1')$ of these edges, where $A_1'=\set{e_{i,1}\mid v_i\in A_1}$ and $B_1'=\set{e_{i,1}\mid v_i\in B_1}$. From Theorem~\ref{thm: properties of grouping}, we can find a set $\qset_1:A_1'\sconnect_2 B_1'$ of $|A_1'|$ paths contained in $G[S_1]\cup \out_G(S_1)$, where each path in $\qset_1$ connects a distinct edge of $A_1'$ to a distinct edge of $B_1'$. Set $\qset_1$ of paths then defines a matching $M_1'$ between the sets $A_1'$ and $B_1'$, which in turn defines a matching $M_1$ between the sets $A_1$ and $B_1$ of vertices of $V(X)$. We then treat $M_1$ as the response of the matching player. For each edge $e=(v_i,v_{i'})\in M_1$ of the matching, we let $P_e$ be the unique path of $\qset_1$ connecting $e_{i,1}$ to $e_{i',1}$. We view $P_e$ as the embedding of $e$ into graph $G$. We continue similarly to execute the remaining iterations, where in each iteration $j: 1\leq j\leq \gamma$, we use the set $S_j\in\fset$ to find the matching $M_j$. That is, we define the partition $(A_j',B_j')$ of $\Gamma_j'$ based on the partition $(A_j,B_j)$ of $V(X)$ as before, find a collection $\qset_j: A_j'\sconnect_2 B_j'$ of paths contained in $G[S_j]\cup \out_G(S_j)$. These paths give us the matching $M_j'$ between the sets $A_j'$ and $B_j'$ of edges, which in turn gives us the matching $M_j$ between the sets $A_j$ and $B_j$ of vertices of $V(X)$. For each edge $e=(v_i,v_{i'})\in M_j$, we let $P_e$ be the unique path of $\qset_j$ connecting $e_{i,j}$ to $e_{i',j}$. We view $P_e$ as the embedding of $e$ into graph $G$. 
The final graph $X$ is the graph obtained after $\gamma$ iterations, with $E(X)=\bigcup_{j=1}^{\gamma}M_j$,
and we are guaranteed that w.h.p. it is a $\half$-expander. 
For each edge $e=(v_i,v_{i'})\in E(X)$, we have defined an embedding $P_e$ of $e$ into $G$, where $P_e$ is a path connecting some vertex in $C_i$ to some vertex in $C_{i'}$. Let $\pset_X=\set{P_e\mid e\in E(X)}$. Then $\pset_X=\bigcup_{j=1}^{\gamma}\qset_j$, and the total congestion caused by paths in $\pset_X$  in $G$ is at most $2$. This finishes the definition of the expander $X$ and of its embedding into $G$.

We now use the expander $X$ and its embedding into $G$, to route a subset of demand pairs. We identify from now on the vertices of $X$ with the terminals of $\tset'$ they represent, that is, $V(X)=\tset'$.

We use Theorem~\ref{thm: vertex-disjoint routing on expanders} to find a collection $\pset$ of $r=\Omega\left(\frac{k'}{\gamma^2\log k}\right )$ vertex-disjoint paths in the expander $X$, routing $r$ distinct demand pairs. Let $\mset''\sse \mset'$ be the set of these demand pairs, and assume w.l.o.g. that $\mset''=\set{(t_1,t_2),(t_3,t_4),\ldots,(t_{2r-1},t_{2r})}$. For each $1\leq i\leq r$,  let $P_i\in \pset$ be the path connecting $t_{2i-1}$ to $t_{2i}$.
In order to complete the routing, we transform each such path $P_i$ into a path $Q_i$ in graph $G$, connecting the same pair $(t_{2i-1},t_{2i})$ of terminals.

Fix some $1\leq i\leq r$. We now show how to transform the path $P_i$ connecting $t_{2i-1}$ to $t_{2i}$ in graph $X$ to a path $Q_i$ connecting the same pair of terminals in graph $G$.
In order to do so, we will replace the edges and the vertices of path $P_i$ by paths in graph $G$. First, each edge $e=(t_a,t_b)\in P_i$ is replaced by the path $P_e\sse G$, connecting some vertex $v\in C_a$ to some vertex $u\in C_b$. Next, consider some inner vertex $t_x\in P_i$, and let $e,e'$ be the two edges appearing immediately before and immediately after $t_x$ on the original path $P_i$, respectively. Let $v_x\in C_x$ be the last vertex on path $P_e$, and let $v'_x\in C_x$ be the first vertex on path $P_{e'}$. Then we replace the vertex $t_x$ with an arbitrary path $P_x$ connecting $v_x$ to $v'_x$ in the connected component $C_x$ of $G$. It now only remains to take care of the endpoints of path $P_i$. Let $e$ be the first edge on the original path $P_i$, and recall that the first vertex on $P_i$ is $t_{2i-1}$. Let $v_{2i-1}\in C_{2i-1}$ be the first vertex on the path $P_e$. Then we replace $t_{2i-1}$ by any path connecting $t_{2i-1}$ to $v_{2i-1}$ in the connected component $C_{2i-1}$. The last vertex of $P_i$ is taken care of similarly. Let $Q_i$ denote the resulting path. Notice that $Q_i$ consists of two types of segments: the first type are the paths $P_e$ for edges $e\in P_i$, and the second type is the paths $P_x$ for vertices $X\in {P_i}$. Let $Q_1,\ldots,Q_r$ be the resulting set of paths. We now bound the congestion due to paths in $Q_1,\ldots,Q_r$ in graph $G$. Recall that the paths $\set{P_i}_{i=1}^r$ are edge- and vertex-disjoint. Recall also that each edge of graph $G$ participates in at most $2$ paths of the set $\pset_X=\set{P_e\mid e\in E(X)}$. Therefore, the congestion due to type-1 segments in $\set{Q_i}_{i=1}^r$ is at most $2$. Since the paths in $\set{P_i}_{i=1}^r$ are vertex-disjoint, and every edge of graph $G$ participates in at most $12$ components $C_1,\ldots,C_{k'}$, the congestion due to type-2 segments is bounded by $12$. Overall, the paths in $\set{Q_i}_{i=1}^r$ cause congestion at most $14$. The number of demand pairs routed is $r=\Omega\left(\frac{k'}{\gamma^2\log k}\right )=\Omega\left (\frac{k}{\log^{21.5}k\log\log k}\right )$.


To conclude, we have started with a graph $G$, a collection $\mset$ of $k$ source-sink pairs, and the set $\tset$ of terminals participating in pairs in $\mset$, such that $G$ is flow-well-linked for $\tset$. We have constructed a routing for the subset $\mset''\sse \mset$ of $\Omega \left (\frac k {\log^{21.5}k\log\log k}\right )$ pairs with congestion at most $14$. 
Since we lose an additional $O(\log^2k)$ factor on the number of pairs routed due to the partitioning step that ensures flow-well-linkedness of the terminals in Section~\ref{subsec: preprocessing}, our algorithm routes $\Omega\left (\frac{\opt}{\log^{23.5}k\log\log k}\right )$ pairs with congestion at most $14$ w.h.p.


\paragraph{Acknowledgements} The author thanks Matthew Andrews and Sanjeev Khanna for many inspiring discussions about the problem.


\bibliography{EDP}
\bibliographystyle{alpha}

\appendix
\section{Table of Parameters}
\renewcommand{\arraystretch}{1.4}

\begin{tabular}{|l|l|p{10cm}|} \hline
$\gkrv(k)$&$\Theta(\log^2k)$&Parameter from the cut-matching game of~\cite{KRV}, from Theorem~\ref{thm: KRV}. Is also denoted by $\gamma$\\ \hline
$\alphasc(k)$&$O(\sqrt{\log k})$& Approximation factor of the algorithm of~\cite{ARV} for Sparsest Cut.\\ \hline
$\alpha(k)$&$\frac{1}{2^{11}\cdot \gkrv(k)\cdot\log k}=\Omega\left (\frac 1 {\log^3k}\right )$&Well-linkedness parameter from Theorem~\ref{thm: oracle-well-linked}\\ \hline
$\alphaWL(k)$&$\alpha(k)/\alphasc(k)=\Omega\left (\frac 1 {\log^{3.5}k}\right )$&Well-linkedness parameter from Corollary~\ref{corollary: weak well linked}\\ \hline
$\beta(k)$&$\Theta(\log k)$&Flow-cut gap for concurrent flow on $k$ terminals\\ \hline
$k_1$&$\frac{k}{192\gamma^3\log \gamma}=\Omega \left (\frac k {\log^6k\log\log k}\right )$&Parameter from the definition of legal contracted graphs\\ \hline
$p$&$\frac{8\beta(k)}{\alphawl(k)}=O(\log^{4.5}k)$&Grouping parameter for the sets $\Gamma_j$\\ \hline
$k'$&$\frac{1}{2\gamma^3}\cdot\lfloor \frac{k_1}{6p}\rfloor=\Omega\left (\frac{k}{\log^{16.5}k\log\log k}\right )$&Number of vertices in the expander $X$\\ \hline
$k^*$&$\lfloor\frac{k_1}{6p}\rfloor$&Size of sets $\Gamma_j'$ (that contain at most one edge from each group of $\gset_j$)\\ \hline
\end{tabular}

\section{Proof of Theorem~\ref{thm: vertex-disjoint routing on expanders}}
\label{sec: appendix-expander-routing}
 
 Let $\ell=4d\beta(n)$, where $\beta(n)=O(\log n)$ is the flow-cut gap for undirected graphs. The algorithm greedily selects a source-sink pair $(s_i,t_i)$ that has a path $P$ of length at most $\ell$ connecting $s_i$ to $t_i$ in the current graph $G$. We then remove all vertices of $P$ from the graph $G$ and continue. The algorithm terminates when for each remaining source-sink pair $(s_i,t_i)$, every path connecting $s_i$ to $t_i$ has length at least $\ell$. 
 
 Note that in each iteration of the algorithm, we route one demand pair, and remove at most $(\ell+1)d$ edges from the graph. The key to the algorithm analysis is to show that when the algorithm terminates, we have removed many edges from the graph, and therefore we have routed many of the demand pairs.
 
 Let $E'$ be the subset of edges removed from the graph by the algorithm, and let $E''$ be the subset of remaining edges. We first claim that there is a multicut in graph $G$ whose value is at most $|E'|+|E''|\cdot \beta(n)/\ell$. Indeed, let $G'=G[E'']$ be the graph obtained when the algorithm terminates, and let $\mset'$ be the set of the surviving source-sink pairs. Consider the instance of the multicut problem on graph $G'$ with the set $\mset'$ of demand pairs. Setting the weight of each edge in $E''$ to $1/\ell$, we obtain a feasible fractional solution to this multicut instance, since the length of every path connecting every pair of terminals is at least $\ell$. Therefore, there is an integral solution to this multicut instance of value $|E''|\cdot \beta(n)/\ell$. Adding the subset $E'$ of edges, we obtain a feasible solution to the multicut problem on the original graph $G$ of value $|E'|+|E''|\cdot \beta(n)/\ell$.
 
 On the other hand, the value of any multicut on graph $G$ is at least $|V|/4$. Indeed, if $E^*$ is any feasible solution to the multicut problem, then each connected component $C$ of $G\setminus E^*$ contains at most $|V|/2$ vertices, and therefore has at least $|V(C)|/2$ out-going edges. Since each edge is counted at most twice, we get that $|E^*|\geq |V|/4$.

We conclude that $|E'|+|E''|\cdot \beta(n)/\ell\geq |V|/4$, and so

\[
|E'|\geq \frac{|V|}{4}-\frac{|E''|\cdot \beta(n)}{\ell}
\geq \frac{|V|}{4}-\frac{|E|\cdot \beta(n)}{\ell} 
\geq \frac{|V|}{4}-\frac{d|V|\beta(n)}{2\ell}
\geq \frac{|V|}{8}\]
 
 since $\ell=4d\beta(n)$. Therefore, at least $|V|/8$ edges have been deleted from the graph. Since in each iteration we only delete at most $d(\ell+1)$ edges, overall the number of pairs routed is at least $\frac{|V|}{8d(\ell+1)}=\Omega\left(\frac{|V|}{d^2\log n}\right )$.

\section{Proof of Theorem~\ref{thm: starting point}}\label{sec: proof of thm starting point}

For the proof of the theorem, we need a more general definition of flow well-linkedness, that was used in~\cite{CKS}.
Suppose we are given a graph $G=(V,E)$, and for each vertex $v\in V$, we are  given a weight $\pi(v)$. For a subset $S\sse V$ of vertices, let $\pi(S)=\sum_{v\in S}\pi(v)$. We say that $G$ is $\pi$-flow well-linked, iff each pair $(u,v)$ of vertices can simultaneously send $\frac{\pi(u)\cdot \pi(v)}{\pi(V)}$ flow units to each other with no congestion.
We start with the following theorem, that was proved in~\cite{CKS}, using a flow-well-linked graph decomposition.

\begin{theorem}[Theorem 2.1 in~\cite{CKS}]
Let $G=(V,E)$ be any graph and let $\mset$ be a set of $k$ source-sink pairs in $G$. 
We can efficiently find a partition $G_1,\ldots,G_{\ell}$ of $G$ into vertex-disjoint induced subgraphs, and for each $1\leq i\leq \ell$, find  a weight function $\pi_i:V(G_i)\rightarrow \reals^+$, with the following properties. Let $\mset'_i\sse \mset$ be the set of source-sink pairs contained in $G_i$, and let $\tset'_i$ be the set of all terminals participating in $\mset_i'$. Then:
\begin{itemize}
\item For all $1\leq i\leq \ell$:
\begin{itemize}
\item for all $u\in \tset'_i$, $\pi_i(u)\leq 1$.
\item for all $(u,v)\in \mset_i'$, $\pi_i(u)=\pi_i(v)$.
\item Graph $G_i$ is $\pi_i$-flow well-linked.
\end{itemize}
\item $\sum_{i=1}^{\ell}\pi_i(\tset'_i)=\Omega(\opt/(\beta(k)\cdot \log \opt))=\Omega(\opt/\log^2k)$.
\end{itemize}
\end{theorem}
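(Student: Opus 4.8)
The plan is to perform a weighted well-linked decomposition of $G$ driven by an optimal LP solution. First I would solve the LP, obtaining $x_i\in[0,1]$ with $\sum_i x_i=\opt$ together with a multicommodity flow that routes $x_i$ units between $s_i$ and $t_i$ for every $i$ with total edge-congestion at most $1$; discard pairs with $x_i=0$. Put a vertex weight $\pi$ on $G$ with $\pi(s_i)=\pi(t_i)=x_i$ and $\pi(v)=0$ for every non-terminal $v$, so $\pi(V)=2\opt$ (the sole purpose of the LP is to guarantee this ``routable'' weighting). Now run the following decomposition: maintain a partition of $G$ into vertex-disjoint induced subgraphs, starting from the single piece $G$; while some piece $H$ admits a cut $(S,\bar S)$ with $|\out_H(S)|<\alpha\cdot\min\{\pi(S),\pi(\bar S)\}$ --- where $\alpha=\Theta(1/\log\opt)$ is a threshold fixed below --- find such a cut (exactly, or within the $O(\sqrt{\log k})$ factor of $\algsc$, which only perturbs the eventual polylog exponents) and replace $H$ by $H[S]$ and $H[\bar S]$. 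Each split strictly decreases the edge count, so the procedure terminates in at most $|E|$ steps, and every final piece is $\alpha$-well-linked with respect to the restriction of $\pi$ to it.

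Second, I would bound the total $x$-weight of demand pairs whose two endpoints land in different final pieces (``separated'' pairs), by the standard charging argument. When a piece $H$ is split along $(S,\bar S)$, charge its $|\out_H(S)|$ edges to whichever of $S,\bar S$ has the smaller $\pi$-weight; since the cut is sparse this is a charge of less than $\alpha\cdot\pi(\text{smaller side})$, and the smaller side has $\pi$-weight at most $\tfrac12\pi(H)$. Following any fixed unit of $\pi$-mass down the recursion tree, its host piece halves in $\pi$-weight every time the mass is on the charged (smaller) side, so this happens at most $\log_2\pi(V)=\log_2(2\opt)$ times; summing, the total number of edges ever cut is less than $\alpha\,\pi(V)\log_2(2\opt)$, hence $\sum_a|\out_G(V(G_a))|<2\alpha\,\pi(V)\log_2(2\opt)$ over the final pieces $G_a$. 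For a separated pair $i$, the LP flow sends $x_i$ units from $s_i$ to $t_i$, all of which must leave the final piece containing $s_i$; by congestion $\le 1$, $\sum_{i\text{ separated}}x_i\le\sum_a|\out_G(V(G_a))|<4\alpha\,\opt\log_2(2\opt)$. Choosing $\alpha=\Theta(1/\log\opt)$ small enough makes this at most $\opt/2$, so the pairs fully contained in a single final piece carry total $x$-weight at least $\opt/2$.

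Third, I would produce the weight functions. For a final piece $G_i$, let $\mset'_i$ be the pairs fully contained in it, $\tset'_i$ the corresponding terminals, and set $\pi_i:=c\cdot(\pi|_{V(G_i)})$ for $c=\alpha/\beta(k)$ (so $\pi_i$ assigns $c\,x_i$ to each terminal of $G_i$ and $0$ elsewhere). Then $\pi_i(u)=c\,x_i\le 1$ for $u\in\tset'_i$ and $\pi_i(s_i)=\pi_i(t_i)$ for $(s_i,t_i)\in\mset'_i$. To see $G_i$ is $\pi_i$-flow-well-linked: the demand $D(u,v)=\pi_i(u)\pi_i(v)/\pi_i(V(G_i))=c\,\pi(u)\pi(v)/\pi(V(G_i))$ crosses any cut $(S,\bar S)$ of $G_i$ in total $c\,\pi(S)\pi(\bar S)/\pi(V(G_i))\le c\min\{\pi(S),\pi(\bar S)\}\le\frac{c}{\alpha}|\out_{G_i}(S)|=\frac{1}{\beta(k)}|\out_{G_i}(S)|$, so the demand is within a $\beta(k)$ factor of satisfying the cut condition, and by the flow--cut gap it is routable with congestion at most $1$. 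Finally $\sum_i\pi_i(\tset'_i)=2c\sum_{i\text{ contained}}x_i\ge 2c\cdot\opt/2=c\,\opt=\Omega\!\left(\frac{\opt}{\beta(k)\log\opt}\right)=\Omega(\opt/\log^2k)$, using $\beta(k)=\Theta(\log k)$ and $\opt\le k$, which is the claimed bound.

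The main obstacle is the charging step and its coupling to the LP flow: the separated weight must be charged to the global boundaries $\out_G(V(G_a))$ of the final pieces rather than to intermediate cuts, and one needs the congestion-$1$ flow to certify that a separated pair contributes at least $x_i$ to the boundary of its host piece; moreover, getting the logarithm to be $\log\opt$ --- and hence the final bound $\opt/\log^2k$ rather than $\opt/\log^2 n$ --- relies on $\pi(V)=2\opt$ together with the ``smaller side halves the $\pi$-weight'' invariant. A secondary point, more bookkeeping than obstacle, is that replacing the exact sparsest-cut test by $\algsc$ degrades the well-linkedness of the final pieces by an $O(\sqrt{\log k})$ factor, which must be carried through to $c$; this costs only lower-order polylog factors and is consistent with the stated asymptotics.
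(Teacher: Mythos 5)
You should first note a point of reference: the paper never proves this statement itself --- it is imported verbatim as Theorem~2.1 of~\cite{CKS}, and the paper's own contribution (the appendix proof of Theorem~\ref{thm: starting point}) begins only \emph{after} this decomposition, converting the weighted guarantee into an unweighted flow-well-linked instance by grouping. So your proposal supplies a proof the paper omits, and the route you take --- LP weights $\pi(s_i)=\pi(t_i)=x_i$, recursive splitting along cuts that are sparse relative to $\pi$, a charging argument bounding the total boundary of the final pieces, certifying separated weight against the congestion-$1$ LP flow, and a rescaling by the flow-cut gap to turn cut-well-linkedness into $\pi_i$-flow-well-linkedness --- is essentially the Chekuri--Khanna--Shepherd argument. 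The skeleton is sound, and the third step (the cut-condition computation showing the scaled product demand is routable with no congestion) is correct as written.

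Two steps need repair before this proves the stated bound. (a) The count ``each unit of $\pi$-mass is on the smaller side at most $\log_2(2\opt)$ times'' is not justified as written: piece weights have no lower bound of $1$ (the $x_i$ may be tiny, and zero-edge cuts of disconnected pieces keep halving weights below $1$), so the halving argument alone bounds the number of charging events only by the recursion depth, not by $\log\opt$. The conclusion you want is still true, but you must either observe that an event contributes edges only when $\alpha\cdot\min\set{\pi(S),\pi(\bar S)}>1$ (since $|\out_H(S)|$ is an integer), i.e.\ only when the smaller side has weight at least $1/\alpha$, so that at most $O(\log\opt)$ events per unit of mass carry a nonzero charge; or first discard all pairs with $x_i<\opt/(2k)$, losing at most $\opt/2$, after which every charged piece has weight at least $\opt/(2k)$ and the number of halvings is $O(\log k)$. (b) Running the splitting step with $\algsc$ certifies only $\bigl(\alpha/\alphasc(k)\bigr)$-well-linkedness of the final pieces, so your scale factor must be $c=\alpha/(\alphasc(k)\beta(k))$ and the guarantee degrades to $\Omega\bigl(\opt/\log^{2.5}k\bigr)$; this is not a lower-order perturbation --- it misses the stated $\Omega(\opt/(\beta(k)\log\opt))=\Omega(\opt/\log^2k)$. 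The clean fix is to drive the recursion by the flow condition itself, which is polynomial-time testable: check by LP whether the scaled product demand routes in the current piece with no congestion, and when it does not, the (constructive) flow-cut gap produces a cut with $|\out(S)|<\beta(k)\cdot c\cdot\min\set{\pi(S),\pi(\bar S)}$, which is exactly the sparse cut your charging needs with $\alpha=\beta(k)\,c$, while the stopping condition gives flow-well-linkedness of the final pieces directly. This way $\beta(k)$ is paid once rather than paying an additional $\alphasc(k)$, and one recovers the claimed $\Omega(\opt/(\beta(k)\log\opt))$. With these two corrections your argument goes through; for the paper's end use even the $\sqrt{\log k}$ shortfall would be harmless, but it does not match the theorem as stated.
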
 

 In order to complete the proof of the theorem, it is enough to show that we can find, for each $1\leq i\leq \ell$, a subset $\mset_i\sse \mset_i'$ of source-sink pairs, with $|\mset_i|=\Omega(\pi_i(\tset_i'))$, such that the set $\tset_i$ of all terminals participating in pairs in $\mset_i$ is flow-well-linked in $G_i$.

Fix some $1\leq i\leq \ell$. We find a grouping $\gset_i$ of the terminals in set $\tset_i'$, using the weights $\pi_i$ and the grouping parameter $p=2$, as in Theorem~\ref{thm: grouping}, so for each group $U\in \gset_i$, $2\leq \pi_i(U)\leq 6$. Next, we will gradually construct the set $\mset_i$ of source-sink pairs, starting from $\mset_i=\emptyset$. In each iteration, we will add one source-sink pair to $\mset_i$, and remove some source-sink pairs from $\mset_i'$, charging their weights to the pair that was added to $\mset_i$. While $\mset_i'$ is non-empty, we perform the following procedure:

\begin{itemize}
\item Let $(s,t)\in \mset_i'$ be any source-sink pair. Add $(s,t)$ to $\mset_i$.

\item If both $s$ and $t$ belong to the same group $U\in \gset$, then for each pair $(u,v)\in \mset_i'$, where $u\in U$ or $v\in U$, remove $(u,v)$ from $\mset_i'$, and charge the weight $\pi_i(v)$ and $\pi_i(v)$ to $(s,t)$. Notice that the total weight charged to $(s,t)$ is at most $12$.

\item Otherwise, let $U_1$ be the group to which $s$ belongs, and let $U_2$ be the group to which $t$ belongs.
For each pair $(u,v)\in \mset_i'$, such that either $u\in U_1\cup U_2$, or $v\in U_1\cup U_2$, remove $(u,v)$ from $\mset_i'$, and charge the weights $\pi_i(u)$ and $\pi_i(v)$ to $(s,t)$. Notice that the total weight charged to $(s,t)$ in this step is at most $24$.
\end{itemize}
The procedure stops when $\mset_i'=\emptyset$. Let $\mset_i$ be the resulting set of source-sink pairs, and let $\tset_i$ be the set of terminals participating in them. From the above charging scheme, it is clear that $|\mset_i|=\Omega(\pi_i(\tset_i'))$, as required. Observe also that for each group $U\in \gset_i$, at most one terminal $v\in U$ belongs to $\tset_i$. Finally, we need to show that $G_i$ is flow well-linked for $\tset_i$. For each vertex $v\in \tset_i$, let $U_v\in \gset_i$ be the group to which $v$ belongs. 

Suppose we are given any matching $\mset^*$ on the set $\tset_i$ of terminals. We show how to route this matching with congestion at most $2$ in $G$. We do so in two steps. In the first step, we construct a flow $F_1$, where for each pair $(v,v')\in \mset^*$, the vertices in $U_v$ send 1 flow unit in total to the vertices in $U_{v'}$, each vertex $x\in U_v$ sends at most $\pi_i(x)$ flow units and each vertex $y\in U_{v'}$ receives at most $\pi_i(y)$ flow units, with total congestion at most $1$. This flow is defined as follows. Recall that graph $G_i$ is $\pi_i$-well-linked. Therefore, every pair $(x,y)$ of vertices can send $\frac{\pi_i(x)\cdot \pi_i(y)}{\pi_i(V(G_i))}$ flow units to each other with no congestion. Let $F$ denote this flow. Fix some pair $(v,v')\in \mset^*$. In flow $F$, there are $\pi_i(U_v)$ flow units originating from the vertices in $U_v$, that are then distributed among the vertices of $G$, and the amount of flow each vertex $z$ of $G$ receives is $\pi_i(z)\cdot \pi_i(U_v)/\pi_i(V(G_i))$. If $\pi_i(U_v)>2$, we scale the flow originating from vertices in $U_v$ down by factor $\pi_i(U_v)/2$, so that every vertex $z$ of $G$ now receives $2\pi_i(z)/\pi_i(V(G_i))$ flow units from $U_v$. We perform a similar transformation for the flow originating at the vertices of $U_{v'}$, and we concatenate both flows. As a result, we obtain a flow where the vertices in $U_v$ send two flow units in total to the vertices in $U_{v'}$. Taking the union of these flows over all $(v,v')\in \mset^*$, and scaling them down by factor $2$, gives us the flow $F_1$. It is easy to see that the total congestion caused by $F_1$ is at most $1$. This is since each flow-path in $F$ is used at most twice: once for each of its end-points.
Finally, in order to route the matching $\mset^*$, consider any pair $(v,v')\in \mset^*$. Vertex $v$ will distribute one flow unit to the vertices in $U_v$, along the tree $T_{U_v}$, where the amount of flow each vertex $x\in T_{U_v}$ receives equals to the amount of flow it sends out in $F_1$. We then use the flow $F_1$ to route this one flow unit to the vertices of $U_{v'}$. Finally, vertex $v'$ collects one flow unit from the vertices of $U_{v'}$ along the tree $T_{U_{v'}}$. It is easy to see that the total congestion caused by this flow is at most $2$, since all trees $\set{T_{U}}_{U\in \gset_i}$ are edge-disjoint.

\end{document}


 We use the result of
Leighton and Rao~\cite{LR}, who showed that any demand, that is routable on an expander graph with no congestion, can also be routed on relatively short paths with small congestion.
Specifically, following is a slightly rephrased statement of Theorem 18 from~\cite{LR}, and its immediate corollary.

\begin{theorem} [Theorem 18 from~\cite{LR}]\label{thm: Leighton-Rao}
Let $G$ be any $n$-vertex $\alpha$-expander with maximum vertex degree $\dmax$. Then every pair of vertices in $G$ can send $\Omega(\alpha/(n\log n))$ flow units to each other with no congestion, on flow-paths of length $O(\dmax\log n/\alpha)$. Moreover, such flow can be found efficiently.
\end{theorem}

\begin{corollary}\label{corollary: integral routing on expanders}
Let $G$ be any $n$-vertex $\alpha$-expander with maximum vertex degree $\dmax$, where $0<\alpha<1$, and let $\mset$ be any partial matching over the vertices of $G$. Then there is an efficient randomized algorithm that w.h.p. finds, for every pair $(u,v)\in \mset$, a path $P_{u,v}$ of length $O(\dmax\log n/\alpha)$ connecting $u$ to $v$ in $G$, such that the set $\pset=\set{P_{u,v}\mid (u,v)\in \mset}$ of paths causes congestion $O(\log n/\alpha)$ in $G$. \end{corollary}

\begin{proof}
We start by showing that there is a multi-commodity flow $f$, where every pair $(u,v)\in \mset$ of vertices sends one flow unit to each other simultaneously, on flow-paths of length $O(\dmax\log n/\alpha)$, with total congestion $O(\log n/\alpha)$. Let $f'$ be the flow guaranteed by Theorem~\ref{thm: Leighton-Rao}, scaled up by factor $O(\log n/\alpha)$, so that every pair of vertices now sends $1/n$ flow units to each other, with total congestion $O(\log n/\alpha)$. Let $(u,v)\in \mset$ be any pair of vertices. The new flow between $u$ and $v$ is defined as follows: $u$ sends $1/n$ flow units to each vertex of $G$, using the flow $f'$, and $v$ collects $1/n$ flow units from each vertex in $G$, using the flow $f'$. In other words, the flow $f$ between $u$ and $v$ is obtained by concatenating all flow-paths in $f'$ originating at $u$ with all flow-paths in $f'$ terminating at $v$. It is easy to see then that every flow-path in $f'$ is used at most twice: once by each of its endpoints; all flow-paths in $f$ have length  $O(\dmax\log n/\alpha)$, and the total congestion of flow $f$ is $O(\log n/\alpha)$. 

In order to obtain integral routing of $\mset$, we perform the standard randomized rounding: for each pair $(u,v)\in M$, we randomly choose one of the flow-paths connecting $u$ to $v$, with probability equal to the amount of flow sent along this path in $f$. This ensures that the chosen path $P_{u,v}$ has length  $O(\dmax\log n/\alpha)$. Using the standard Chernoff bounds, it is easy to see that with high probability, the congestion on every edge is $O(\log n/\alpha)$.  
\end{proof}

We now turn to the proof of Theorem~\ref{thm: vertex-disjoint routing on expanders}.
Let $L=O(\dmax\log n/\alpha)$ be the bound on the path length, and $\eta=O(\log n/\alpha)$ the bound on the congestion guaranteed by Corollary~\ref{corollary: integral routing on expanders}. We set $m=4e\eta^3\cdot\dmax^3\cdot L=O\left(\frac{\log^4n\cdot \dmax^4}{\alpha^4}\right )$, where $e$ is the base of the natural logarithm.

Next, we define a (possibly partial) matching $\mset$ on the vertices of $G$ as follows: for each $1\leq i\leq r$, we select any complete matching $\mset_i$ between the vertices of $V_{2i-1}$ and the vertices of $V_{2i}$, and add it to $\mset$. 
Notice that $|\mset_i|=m$ for all $i$. Next, we use Corollary~\ref{corollary: integral routing on expanders}, to find a collection $\pset$ of paths of length at most $L$ each, that cause a total congestion of at most $\eta$ in $G$, such that for each $(u,v)\in \mset$, there is a path $P_{u,v}$ connecting $u$ to $v$ in $\pset$. For each $i: 1\leq i\leq r$, we denote the set of $m$ paths connecting the vertices of $V_{2i-1}$ to the vertices of $V_{2i}$ by $B_i$, and call it the $i$th bundle. 

Finally, we select one path from each bundle, such that the resulting set of paths is vertex-disjoint, using the constructive version of the Lovasz Local Lemma by Moser and Tardos~\cite{Moser-Tardos}.
The next theorem is the symmetric version of the result of~\cite{Moser-Tardos}.

\begin{theorem}[\cite{Moser-Tardos}]\label{constructive symmetric LLL}
Let  $X$ be a finite set of mutually independent random variables in some probability space. Let $\aset$  be a finite set of bad events determined by these variables. 
 For each event $A\in \aset$, let $vbl(A)\sse X$ be the unique minimal subset of variables determining $A$, and let $\Gamma(A)\sse \aset$ be a subset of bad events $B$, such that $A\neq B$, but $vbl(A)\cap vbl(B)\neq \emptyset$. Assume further that for each $A\in \aset$, $|\Gamma(A)|\leq D$, $\prob{A}\leq p$, and $ep(D+1)\leq 1$. Then there is an efficient randomized algorithm that w.h.p. finds an assignment to the variables of $X$, such that none of the events in $\aset$ holds.
 \end{theorem}

For each bundle $B_i$, we randomly choose one of the paths $P_i\in B_i$. Let $x_i$ be the random variable indicating which path has been chosen for $B_i$. Notice that the variables $x_1,\ldots,x_r$ are mutually independent. For each vertex $v\in V$, we let $\beta_v$ be the bad event that $v$ belongs to at least two of the chosen paths. Since the congestion on every edge due to the paths in $\pset$ is at most $\eta$, and the maximum vertex degree is $\dmax$, we get that there are at most $(\eta\dmax)^{2}$ potential pairs of paths containing $v$ (where we only consider pairs of paths belonging to two different bundles), and each pair is selected with probability $1/m^{2}$. Therefore, $\prob{\beta_v}\leq \frac{(\eta\dmax)^{2}}{m^{2}}$. We denote $p=\frac{(\eta\dmax)^{2}}{m^{2}}$.

The set $vbl(\beta_v)$ of variables contains all variables $x_i$, where the bundle $B_i$ contains a path $P\in \pset$, such that $v\in P$. Therefore, $|vbl(\beta_v)|\leq \eta\dmax$. For each such variable $x_i$, there are $m$ paths participating in the bundle $B_i$, each of which contains at most $(L+1)$ vertices. Therefore, $|\Gamma(\beta_v)|\leq m(L+1)\eta\dmax$. We denote this value by $D$.

It now only remains to show that $(D+1)ep\leq 1$, which follows from the choice of $m=4e\cdot \eta^3\cdot\dmax^3\cdot L$.\hfill \qed
